\theoremstyle{plain}
\newtheorem{theorem}{Theorem}[section]
\newtheorem{lemma}[theorem]{Lemma}
\newtheorem{proposition}[theorem]{Proposition}
\newtheorem{corollary}[theorem]{Corollary}
\newtheorem{claim}[theorem]{Claim}
\theoremstyle{definition}
\newtheorem{remark}[theorem]{Remark}
\newtheorem{example}[theorem]{Example}
\numberwithin{equation}{section}
\definecolor{airforceblue}{rgb}{0.36, 0.54, 0.66}
\DeclareMathOperator{\dom}{dom}
\DeclareMathOperator{\sign}{sign}
\DeclareMathOperator{\tg}{tag}
\DeclareMathOperator{\supp}{supp}
\DeclareMathOperator{\ex}{ex}
\DeclareMathOperator{\tail}{tail}
\DeclareMathOperator{\Ima}{Im}
\DeclareMathOperator{\RS}{RS}
\DeclareMathOperator{\vcdim}{VC-dim}
\newcommand{\uoX}{\ensuremath{X}\xspace}
\newcommand{\uX}{\ensuremath{\underline{X}}\xspace}
\newcommand{\oX}{\ensuremath{\overline{X}}\xspace}
\newcommand{\dG}{\ensuremath{\overrightarrow{G}}\xspace}
\newcommand{\dGo}{\ensuremath{\overrightarrow{G_o}}\xspace}
\newcommand{\U}{\ensuremath{U}\xspace}
\newcommand{\cupdot}{\mathbin{\mathaccent\cdot\cup}}
\newcommand{\R}{\mathbb R}
\begin{document}

\title[Compression schemes and corner peelings for ample and maximum
classes]{Unlabeled sample compression schemes and corner peelings for
  ample and maximum classes$^*$}\thanks{$^*$An extended
  abstract~\cite{CCMW-icalp} of this paper has appeared in the
  proceedings of ICALP 2019.}

\author[J.\ Chalopin]{J\'er\'emie Chalopin}
\address{CNRS, Aix-Marseille Universit\'e, Universit\'e de Toulon,
  LIS, Marseille, France}
\email{jeremie.chalopin@lis-lab.fr}

\author[V. Chepoi]{Victor Chepoi}
\address{Aix-Marseille Universit\'e, CNRS, Universit\'e de Toulon,
  LIS, Marseille, France}
\email{victor.chepoi@lis-lab.fr}

\author[S. Moran]{Shay Moran}
\address{Department of Mathematics, Technion and Google Research
  % Department of Computer Science, Princeton University,
  % Princeton, USA
}
\email{shaym@cs.princeton.edu}

\author[M. K. Warmuth]{Manfred K. Warmuth}
\address{Visiting Google Brain, Mountain View, California, USA.
Formerly Computer Science Department, University of California, Santa Cruz, USA}
\email{manfredwarmuth57@gmail.com}

\keywords{VC-dimension, Sample Compression, Sauer-Shelah-Perles
Lemma, Sandwich Lemma, Maximum Class, Ample Class, Extremal
Class, Corner Peeling, Unique Sink Orientation}

\maketitle
\begin{abstract}
  We examine connections between combinatorial notions that arise in
  machine learning and topological notions in cubical/simplicial
  geometry.  These connections enable to export results from geometry
  to machine learning.  Our first main result is based on a geometric
  construction by Tracy Hall (2004) of a partial shelling of the
  cross-polytope which can not be extended.  We use it to derive a
  maximum class of VC dimension~3 that has no corners.  This refutes
  several previous works in machine learning from the past 11 years.
  In particular, it implies that the previous constructions of optimal
  unlabeled sample compression schemes for maximum classes are
  erroneous.

  On the positive side we present a new construction of an optimal
  unlabeled sample compression scheme for maximum classes.  We leave
  as open whether our unlabeled sample compression scheme extends to
  ample (a.k.a.~lopsided or extremal) classes, which represent a
  natural and far-reaching generalization of maximum classes.  Towards
  resolving this question, we provide a geometric characterization in
  terms of unique sink orientations of the 1-skeletons of associated
  cubical complexes.
\end{abstract}

\section{Introduction}\label{s:intro}
The Sauer-Shelah-Perles Lemma~\cite{Sauer,Shelah,VaCh} is arguably the
most basic fact in VC theory; it asserts that any class
$C\subseteq\{0,1\}^n$ satisfies
$\lvert C\rvert \leq {\binom{n} {\leq d}}$, where $d=\vcdim(C)$.  A
beautiful generalization of Sauer-Shelah-Perles's inequality asserts
that $\lvert C\rvert\leq\lvert \oX(C)\rvert$, where $\oX(C)$ is the
family of subsets that are shattered by $C$.\footnote{Note that this
  inequality indeed implies the Sauer-Shelah-Perles Lemma, since
  $ \oX(C) \subseteq {\binom{[n]}{\leq d}}$.}  The latter
inequality is a part of the Sandwich Lemma~\cite{AnRoSa,BoRa,Dr,Pa},
which also provides a lower bound for $\lvert C\rvert$ (and thus
``sandwiches'' $\lvert C\rvert$) in terms of the number of its
\emph{strongly shattered subsets} (see Section~\ref{s:prel}).  A class
$C$ is called \emph{maximum/ample} if the Sauer-Shelah-Perles/Sandwich
upper bounds are tight (respectively).  Every maximum class is ample,
but not vice versa.

Maximum classes were studied mostly in discrete geometry and machine
learning, e.g.~\cite{Welzl,GaWe,Fl,FlWa,KuWa}.  The history of ample
classes is more interesting as they were discovered independently by
several works in disparate
contexts~\cite{AnRoSa,La,BoRa,Moran,BaChDrKo,Dr,Wiedemann}.
Consequently, they received different names such as lopsided
classes~\cite{La}, extremal classes~\cite{BoRa,Moran}, and ample
classes~\cite{BaChDrKo,Dr}.  Lawrence~\cite{La} was the first to
define them for the investigation of the possible sign patterns
realized by points of a convex set of $\R^d$.  Interestingly,
Lawrence's definition of these classes does not use the notion of
shattering nor the Sandwich Lemma.  In this context, these classes
were discovered by Bollob\'{a}s and Radcliffe~\cite{BoRa} and Bandelt
et al.~\cite{BaChDrKo}, and the equivalence between the two
definitions appears in~\cite{BaChDrKo}.  Ample classes admit a
multitude of combinatorial and geometric characterizations
\cite{BaChDrKo,BaChDrKo_geo,BoRa,La} and comprise many natural
examples arising from discrete geometry, combinatorics, graph theory,
and geometry of groups~\cite{BaChDrKo,La}.

\subsection{Main Results}
	
\subsubsection{Corner Peelings.}
A \emph{corner} in an ample class $C$ is any concept $c \in C$ that
belongs to a unique maximal cube of $C$ (equivalently, $c$ is a corner
if $C\setminus\{c\}$ is also ample, see Lemma~\ref{ample_bis}).  A
sequence of corner removals leading to a single concept is called a
\emph{corner peeling}; corner peeling is a strong version of
\emph{collapsibility}.  Wiedemann~\cite{Wiedemann} and independently
Chepoi (unpublished, 1996) asked whether every ample class has a
corner.  The machine learning community studied this question
independently in the context of \emph{sample compression schemes} for
maximum classes: Rubinstein and Rubinstein~\cite{RuRu} showed that
corner peelings lead to optimal \emph{unlabeled sample compression
  schemes (USCS)}.
% Kuzmin and Warmuth~\cite{KuWa} showed in 2007 that
% corner peelings lead to optimal \emph{sample compression} (see below)
% and conjectured that every maximum class has a corner.

In Theorem~\ref{thm:no_corner_bis} we refute this conjecture.  The
crux of the proof is an equivalence between corner peelings and
partial shellings of the cross-polytope. This equivalence translates
the question whether corners always exist to the question whether
partial shellings can always be extended. The latter was an open
question in Ziegler's book on polytopes~\cite{Zi}, and was resolved in
Tracy Hall's PhD thesis~\cite{Hall} where an interesting
counterexample is presented.  The ample class resulting from Hall's
construction yields a maximum class without corners.
	
\subsubsection{Sample Compression.}  Sample compression is a powerful
technique to derive generalization bounds in statistical learning.
Littlestone and Warmuth~\cite{LiWa} introduced it and asked if every
class of VC-dimension $d<\infty$ has a sample compression scheme of a
finite size.  This question was later relaxed by Floyd and
Warmuth~\cite{FlWa,Wa} to \emph{whether a sample compression scheme of
  size~$O(d)$ exists}.  The first question was recently resolved
by~\cite{MoYe} who exhibited an $\exp(d)$ sample compression scheme.  The
second question however remains one of the oldest open problems in
machine learning (for more background we refer the reader
to~\cite{MoWa} and the books~\cite{ShaBen,Wig}).

Rubinstein and Rubinstein~\cite[Theorem~16]{RuRu} showed that the
existence of a corner peeling for a maximum class $C$ implies a
\emph{representation map} for $C$ (see Section~\ref{s:ample} for a
definition), which is known to yield an optimal unlabeled sample
compression scheme of
size~$\vcdim(C)$~\cite{KuWa}.\footnote{P\'alv\"olgyi and
  Tardos~\cite{PaTa} recently exhibited a (non-ample) class $C$ with
  no USCS of size $\vcdim(C)$.}  They claim, using an interesting
topological approach, that maximum classes admit corner
peelings. Unfortunately, our Theorem~\ref{thm:no_corner_bis} shows
that this does not hold.

% Kuzmin and Warmuth~\cite{KuWa} established an interesting connection
% between compression schemes and corner peelings for maximum classes
% $C$.  In particular, they showed that a corner peeling implies a
% \emph{representation map} (see Section~\ref{s:ample} for a definition)
% and that a representation map implies an unlabeled sample compression
% scheme of size~$\vcdim(C).  $\footnote{P\'alv\"olgyi and
%   Tardos~\cite{PaTa} recently exhibited a (not ample) class $C$ with
%   no unlabeled sample compression scheme of size $\vcdim(C)$.}
	
While our Theorem~\ref{thm:no_corner_bis} rules out the program of
deriving representation maps from corner peelings, in
Theorem~\ref{th-maximum} we provide an alternative derivation of
representation maps for maximum classes and therefore also of
unlabeled sample compression schemes for them.

\subsubsection{Sample Compression and Unique Sink Orientations.}
We next turn to construction of representation maps for ample classes.
In Theorem~\ref{t:local-to-global} we present geometric
characterizations of such maps via \emph{unique sink orientations}: an
orientation of the edges of a cube $B$ is a \emph{unique sink
  orientation (USO)} if any subcube $B'\subseteq B$ has a unique sink.
Szab\'{o} and Welzl~\cite{SzWe} showed that any USO of $B$ leads to a
representation map for $B$.  We extend this bijection to ample classes
$C$ by proving that representation maps are equivalent to orientations
$r$ of $C$ such that (i) $r$ is a USO on each subcube $B\subseteq C$,
and (ii) for each $c\in C$ the edges outgoing from $c$ belong to a
subcube $B\subseteq C$.  We further show that any ample class admits
orientations satisfying each one of those conditions.  However,
\emph{the question whether all ample classes admit representation maps
  remains open.}

\subsubsection{Implications on Previous Works.}

Our Theorem~\ref{thm:no_corner_bis} establishes the existence of
maximum classes without any corners, thus countering several previous
results in machine learning:
\begin{itemize}
% \item Rubinstein and Rubinstein~\cite{RuRu} used an interesting
%   topological approach to argue that maximum classes admit a corner.
%   This is unfortunately false, as witnessed by
%   Theorem~\ref{thm:no_corner_bis}.
\item Rubinstein and Rubinstein~\cite[Theorem 32]{RuRu} showed that
  any maximum class can be represented by a simple arrangement of
  piecewise-linear hyperplanes. In~\cite[Theorem 39]{RuRu}, they claim
  that sweeping such an arrangement leads to a corner peeling of the
  corresponding maximum class.  This is unfortunately false, as
  witnessed by Theorem~\ref{thm:no_corner_bis}.
\item Kuzmin and Warmuth~\cite{KuWa} constructed unlabeled sample
  compression schemes for maximum classes based on the presumed
  uniqueness of a certain matching (their Theorem 10).  This theorem
  is wrong (as explained in Section~\ref{KuWaerror}) as it implies the
  existence of corners and Hall's counterexample does not have
  corners.  However their conclusion is correct: In our
  Theorem~\ref{th-maximum} we show that such unlabeled compression
  schemes always exist based on a different construction and proof
  method.
\item Theorem 3 by Samei, Yang, and Zilles~\cite{SaYaZi} is built on a
  generalization of Theorem 10 from~\cite{KuWa} to the multiclass case
  which is also incorrect.
\item Theorem 26 by Doliwa et al.~\cite{DoFaSiZi} uses the result
  in~\cite{RuRu} to show that the Recursive Teaching Dimension (RTD)
  of maximum classes equals to their VC dimension.  However the VC
  dimension $3$ maximum class from Theorem~\ref{thm:no_corner_bis} has
  RTD at least~$4$.  It remains open whether the RTD of every maximum
  class $C$ is bounded by~$O(\vcdim(C))$.
\end{itemize}

\subsubsection{An Optimal Proper PAC Learner for Maximum Classes.}
In a recent work, Bousquet, Hanneke, Moran, and Zhivotovskiy~\cite{Bousquet20stablecomp}
showed that a special type of sample compression schemes, termed {\it stable compression schemes},
achieve the optimal learning rate in PAC learning. They further noticed
that any sample compression scheme which is defined by a representation map is stable.
Thus, using the compression scheme constructed in this paper, Bousquet et al.\ conclude that
every maximum class can be properly learned by an algorithm achieving the optimal learning rate.

\subsection{Organization.}
Section~\ref{s:prel} presents the main definitions and
notations. Section~\ref{s:ample} reviews characterizations of
ample/maximum classes and characteristic examples.
Section~\ref{s:shell} demonstrates the existence of the maximum class
$C_H$ without corners.  Section~\ref{s:maximum} establishes the
existence of representation maps for maximum classes.
Section~\ref{s:rep_maps} establishes a bijection between
representation maps and unique sink orientations for ample classes.

\section{Preliminaries}\label{s:prel}

A \emph{concept class} $C$ is a set of subsets (concepts) of a finite
ground set ${\U}$ which is called the \emph{domain} of $C$ and denoted
$\dom(C)$.  We sometimes treat the concepts as characteristic
functions rather than subsets.  The \emph{support} (or \emph{dimension
  set}) $\supp(C)$ of $C$ is the set
$\{x\in\U : x\in c'\setminus c''\text{ for some }c',c''\in C\}$.

Let $C$ be a concept class of $2^{\U}$. The \emph{complement} of $C$
is $C^*:=2^{\U}\setminus C$. The \emph{twisting} of $C$ with respect
to $Y\subseteq {\U}$ is the concept class
$C\Delta Y=\{ c\Delta Y: c\in C\}$.
The \emph{restriction} of a concept $c \in C$ on
$Y\subseteq {\U}$ is the concept $c|Y=c\cap Y$.
The \emph{restriction} of $C$ on
$Y\subseteq {\U}$ is the class $C|Y=\{ c|Y : c\in C\}$ whose domain
is $Y$. We use $C_Y$ as shorthand for $C|({\U}\setminus Y)$; in
particular, we write $C_x$ for $C_{\{ x\}}$ (see
Figure~\ref{fig-ex-red-restr} for an example), and $c_x$ for
$c|(\U\setminus\{x\})$ for $c\in C$ (note that $c_x\in C_x$).  A
concept class $B\subseteq 2^{\U}$ is a \emph{cube} if there exists
$Y\subseteq {\U}$ such that $B|Y = 2^Y$ and~$B_Y$ contains a single
concept (denoted by $\tg(B)$).  Note that $\supp(B)=Y$ and therefore
we say that $B$ is a $Y$-\emph{cube}; $\lvert Y\rvert$ is called the
\emph{dimension} $\dim(B)$ of $B$. Two cubes $B,B'$ with the same
support are called \emph{parallel cubes}. A cube $B$ is \emph{maximal}
if there is no cube $B'$ such that $B \subsetneq B'$.
	
Let $Q_n$ denote the $n$-dimensional cube where $n = \lvert\U\rvert$;
$c,c' \in Q_n$ are called adjacent if the symmetric difference
$c\Delta c'$ is of size $1$. The \emph{1-inclusion graph} of $C$ is
the subgraph $G(C)$ of $Q_n$ induced by the vertex-set $C$ when the
concepts of $C$ are identified with the corresponding vertices of
$Q_n$.  Any cube $B\subseteq C$ is called \emph{a cube of $C$}.  The
\emph{cube complex} of $C$ is the set
$Q(C)=\{B: B\text{ is a cube of }C\}$. The cubes of $C$ are called the
\emph{faces} of $Q(C)$ and the maximal cubes of $C$ are called the
facets of $Q(C)$.  The \emph{dimension} $\dim(Q(C))$ of $Q(C)$ is the
largest dimension $\max_{B\in Q(C)} \dim(B)$ of a cube of $Q(C)$.  A
concept $c\in C$ is called a \emph{corner} of $C$ if $c$ belongs to a
unique maximal cube of $C$.

\begin{figure}
  \includegraphics[scale=0.75,page=9]{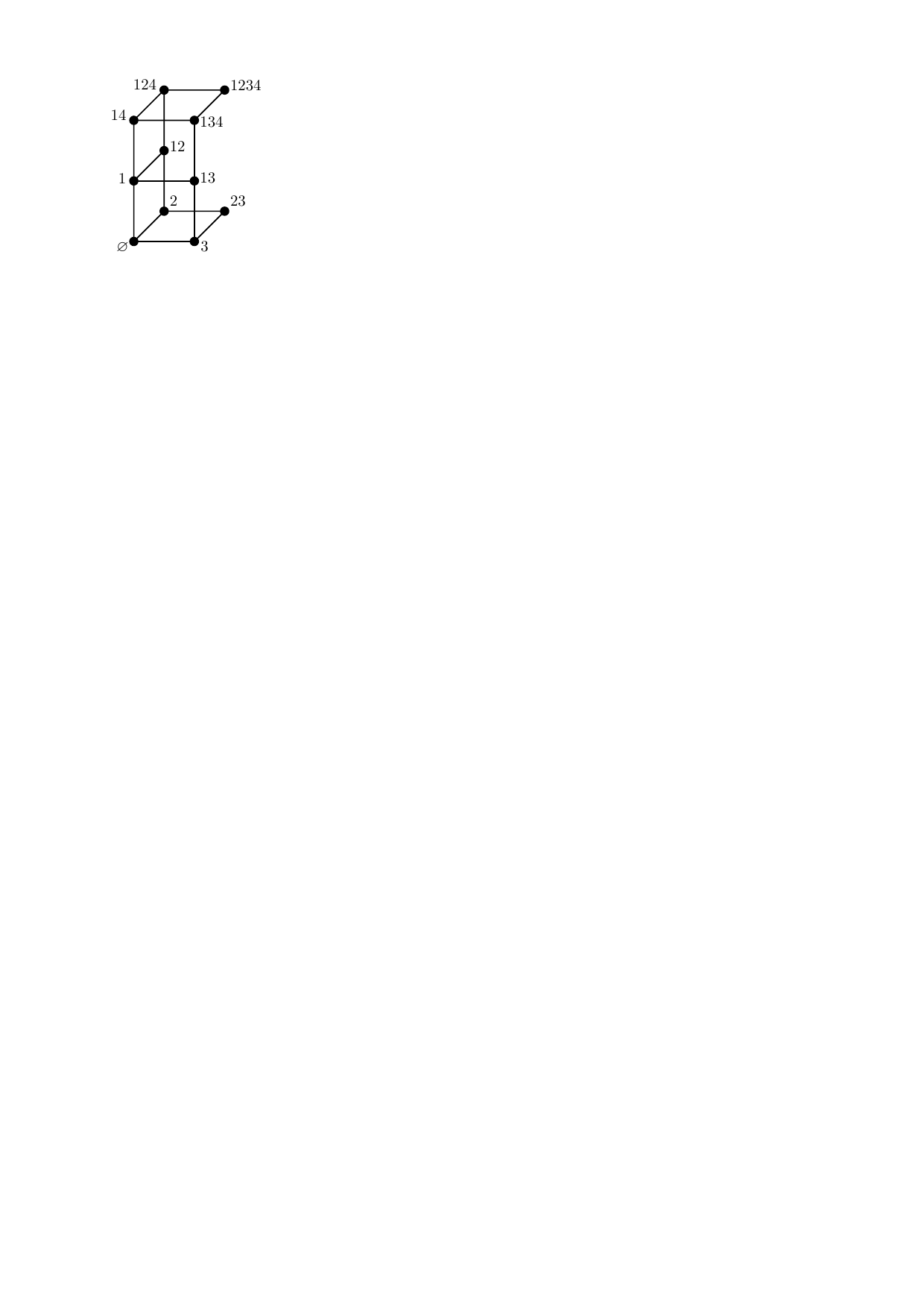}%
  \caption{A $2$-dimensional maximum class
    $C \subseteq 2^{\{1,2,3,4,5\}}$ on the left and the restriction
    $C_x$ for $x = 5$ on the right. The reduction $C^x$ corresponds to
    the restriction of the carrier $N_x(C)$.}%
  \label{fig-ex-red-restr}
\end{figure}

The \emph{reduction} $C^Y$ of a concept class $C$ to $Y\subseteq {\U}$
is a concept class on ${\U}\setminus Y$ which has one concept for each
$Y$-cube of $C$:
$C^Y:=\{ \tg(B): B\in Q(C) \text{ and } \supp(B)=Y\}$.  When $x\in \U$
we denote $C^{\{x\}}$ by $C^x$ and call it \emph{the $x$-hyperplane of
  $C$} (see Figure~\ref{fig-ex-red-restr} for an example). Note that a
concept $c$ belongs to $C^x$ if and only if $c$ and $c \cup \{x\}$
both belong to $C$. The union of all cubes of~$C$ having $x$ in their
support is called the \emph{carrier} of $C^x$ and is denoted by
$N_x(C)$.  If $c \in N_x(C)$, we also denote $c|(\U\setminus\{x\})$ by
$c^x$ (note that $c^x\in C^x$).

The \emph{tail} $\tail_x(C)$ of a concept class $C$ on dimension $x$
consists of all concepts that do not have in $G(C)$ an incident edge
labeled with $x$.  They correspond to the concepts of
$C_x\setminus C^x$, i.e., to the concepts of $C_x$ that have a unique
extension in $C$.  The class $C$ can be partitioned as
$N_x(C) \cupdot \tail_x(C) = 0C^x\cupdot 1C^x\cupdot \tail_x(C)$,
where $\cupdot$ denotes the disjoint union and $bC^x$ consists of all
concepts in $C^x$ extended with bit $b$ in dimension $x$.

Given two classes $C \subseteq 2^{\U}$ and $C' \subseteq 2^{\U'}$
where $\U$ and $\U'$ are disjoint, the \emph{Cartesian product}
$C \times C' \subseteq 2^{\U \cupdot \U'}$ is the concept class
$\{c \cupdot c' : c \in C \text{ and } c' \in C'\}$.

A concept class $C$ is \emph{connected} if the graph $G(C)$ is
connected.  If $C$ is connected, denote by $d_{G(C)}(c,c')$ the
graph-distance between $c$ and $c'$ in $G(C)$ and call it the
\emph{intrinsic distance} between $c$ and $c'$.  The distance
$d(c,c'):=d_{Q_n}(c,c')$ between two vertices $c,c'$ of $Q_n$
coincides with the Hamming distance $\lvert c\Delta c'\rvert$ between
the 0-1-vectors corresponding to $c$ and $c'$.  Let
$B(c,c')=\{ t\subseteq {\U}: d(c,t)+d(t,c')=d(c,c')\}$ be the
\emph{interval} between $c$ and $c'$ in $Q_n$ ; equivalently,
$B(c,c')$ is the smallest cube of $Q_n$ containing $c$ and $c'$.  A
connected concept class $C$ is called \emph{isometric} if
$d(c,c')=d_{G(C)}(c,c')$ for any $c,c'\in C$ and \emph{locally
  isometric} if $d(c,c')=d_{G(C)}(c,c')$ for any $c,c' \in C$ such
that $d(c,c')\leq 2$.  Any path of $C^Y$ connecting two concepts
$\tg(B)$ and $\tg(B')$ of $C^Y$ can be lifted to a path of parallel
$Y$-cubes connecting $B$ and $B'$ in $C$; such a path of cubes is
called a \emph{gallery}.

A \emph{simplicial complex} $X$ on a set $\U$ is a family of subsets
of $X$, called \emph{simplices} or \emph{faces} of $X$, such that if
$\sigma \in X$ and $\sigma' \subseteq \sigma$, then $\sigma' \in
X$. The \emph{facets} of $X$ are the maximal (by inclusion) faces of
$X$. The \emph{dimension} $d$ of $X$ is the size of its largest face.
 A simplicial complex $X$ is a
\emph{pure} simplicial complex of dimension $d$ if all its facets have
size $d$.

A set $Y\subseteq {\U}$ is \emph{shattered} by a concept class
$C\subseteq 2^{\U}$ if $C|Y=2^Y$. Furthermore, $Y$ is \emph{strongly
  shattered} by $C$ if $C$ contains a $Y$-cube.  Denote by $\oX(C)$
and $\uX(C)$ the simplicial complexes consisting respectively of all
shattered and of all strongly shattered sets of $C$.  Clearly,
$\uX(C)\subseteq \oX(C)$ and both $\oX(C)$ and $\uX(C)$ are closed by
taking subsets, i.e., $\oX(C)$ and $\uX(C)$ are simplicial complexes.
The classical \emph{VC-dimension}~\cite{VaCh} $\vcdim(C)$ of a concept
class $C$ is the size of the largest set shattered by $C$, i.e., the
dimension of the simplicial complex $\oX(C)$.\footnote{Note that the
  usual definition of the dimension of a simplicial complex $X$ is the
  size of the largest face of $X$ minus 1. We adopted this convention
  to have an equality between the VC-dimension of a class $C$ and the
  dimension of $\oX(C)$.} The fundamental \emph{sandwich lemma}
(rediscovered independently by Pajor~\cite{Pa}, Bollob\'{a}s and
Radcliffe~\cite{BoRa}, Dress~\cite{Dr}, and Anstee et
al.~\cite{AnRoSa}) asserts that
$\lvert\uX(C)\rvert \le \lvert C\rvert \le \lvert \oX(C)\rvert$.  If
$d=\vcdim(C)$ and $n=\lvert {\U}\rvert$, then $\oX(C)$ cannot contain
more than $\Phi_d(n):=\sum_{i=0}^d \binom{n}{i}$ simplices, yielding
the well-known \emph{Sauer-Shelah-Perles
  lemma}~\cite{Sauer,Shelah,VaCh} that
$\lvert C \rvert \le \Phi_d(n)$.

A \emph{labeled sample} is a set $s=\{(x_1,y_1),\ldots,(x_m,y_m)\}$,
where $x_i\in \U$ and $y_i\in\{0,1\}$.  An \emph{unlabeled sample} is
a set $\{x_1,\ldots,x_m\}$, where $x_i\in \U$.  A subsample $s'$ of a
sample $s$ (labeled or unlabeled) is a subset of $s$.  Given a labeled
sample $s=\{(x_1,y_1),\ldots,(x_m,y_m)\}$, the unlabeled sample
$\{x_1,\ldots,x_m\}$ is the domain of $s$ and is denoted by $\dom(s)$.
A labeled sample $s$ is \emph{realizable} by a concept
$c:\U\to\{0,1\}$ (seen as a map) if $c(x_i)=y_i$ for every $i$, and
$s$ is realizable by a concept class $C$ if it is realizable by some
$c\in C$. For a concept class $C$, let $\RS(C)$ be the set of all
labeled samples realizable by $C$.

A \emph{sample compression scheme} for a concept class $C$ is best
viewed as a protocol between a \emph{compressor} $\alpha$ and a
\emph{reconstructor} $\beta$ (which both depend of $C$).  The
compressor gets a labeled sample $s$ realizable by $C$ from which it
picks a small subsample~$s'$.  The compressor sends $s'$ to the
reconstructor.  Based on $s'$, the reconstructor outputs a concept
$c \in C$ that needs to be consistent with the entire input
sample~$s$.  A sample compression scheme has size $k$ if for every
realizable input sample~$s$ the size of the compressed subsample $s'$
is at most $k$.  An unlabeled sample compression scheme is a sample
compression scheme in which the compressed subsample $s'$ is
unlabeled. So, the compressor removes the labels before sending the
subsample to the reconstructor. An \emph{unlabeled sample compression
  scheme} of size $k$ for a concept class $C \subseteq 2^{\U}$ is thus
defined by a (compressor) function
$\alpha: \RS(C) \to \binom{\U}{\leq k}$ and a (reconstructor) function
$\beta: \Ima(\alpha) := \alpha(\RS(C)) \to C$ such that for any
realizable sample $s$ of $C$, the following conditions hold:
$\alpha(s) \subseteq \dom(s)$ and $\beta(\alpha(s))|\dom(s)= s$.

In the literature, one usually allows the reconstructor $\beta$ to
take values in $2^U$, i.e., the reconstructor can return a subset that
is not a concept of $C$. The unlabeled sample compression schemes we
consider in this paper, i.e., satisfying the property that
$\Ima(\beta) \subseteq C$, are usually called \emph{proper} unlabeled
sample compression schemes.

\section{Ample and Maximum Classes}
\label{s:ample}

In this section, we briefly review the main characterizations and the
basic examples of ample classes (maximum classes being one of them).

\subsection{Characterizations}
A concept class $C$ is called \emph{ample} if
$\lvert C\rvert=\lvert \oX(C)\rvert$. Ample classes are closed by
taking restrictions, reductions, intersections with cubes, twistings,
complements, and Cartesian products.

The following theorem reviews the main combinatorial characterizations
of ample classes:

\begin{theorem}[\!\!\cite{BaChDrKo,BoRa,La}] \label{thm:ample1}
  The following conditions are equivalent for a class $C$:
  \begin{enumerate}[(1)]
  \item $C$ is ample;
  \item $C^*$ is ample;
  \item $\uX(C)=\oX(C)$;
  \item $\lvert \uX(C)\rvert=\lvert C\rvert$;
  \item $\lvert \oX(C)\rvert=\lvert C\rvert$;
  \item $C\cap B$ is ample for any cube $B$;
  \item $(C^Y)_Z=(C_Z)^Y$ for all partitions~${\U}=Y\cupdot Z$;
  \item for all partitions ${\U}=Y\cupdot Z$, either $Y\in \uX(C)$ or
    $Z\in \uX(C^*)$.\footnote{This is the original definition of
      lopsidedness by Lawrence~\cite{La}.}
  \end{enumerate}
\end{theorem}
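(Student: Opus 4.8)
The plan is to make condition~(1) the hub, deduce the other seven from it using the sandwich lemma together with an unconditional De~Morgan--type duality, isolating the one genuinely non-mechanical step, which I then prove by induction on $\lvert{\X}\rvert$. \textit{The duality.} First I would record that for every $Y\subseteq{\X}$,
\[
  Y\in\oX(C)\iff {\X}\setminus Y\notin\uX(C^*).
\]
Indeed $Y$ is shattered by $C$ iff every pattern on $Y$ extends to a concept of $C$, i.e.\ iff no fibre over a $Y$-pattern lies inside $C^*$; the fibres over $Y$-patterns being exactly the $({\X}\setminus Y)$-cubes of $Q_n$, this says $Y\in\oX(C)$ iff $C^*$ contains no $({\X}\setminus Y)$-cube. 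Summing over all subsets (with $n=\lvert{\X}\rvert$) gives $\lvert\oX(C^*)\rvert=2^{n}-\lvert\uX(C)\rvert$ and $\lvert\uX(C^*)\rvert=2^{n}-\lvert\oX(C)\rvert$, while trivially $\lvert C^*\rvert=2^{n}-\lvert C\rvert$. Together with the sandwich lemma $\lvert\uX(C)\rvert\le\lvert C\rvert\le\lvert\oX(C)\rvert$ and $\uX(C)\subseteq\oX(C)$, these identities give at once: $(1)\Leftrightarrow(5)$ (definition); $(3)\Rightarrow(4)\wedge(5)$ (both ends of the sandwich collapse onto $\lvert C\rvert$); $(4)\Leftrightarrow(2)$ (plug in the counting identity); and $(2)$ for $C$ is just $(1)$ for $C^*$. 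Rewriting Lawrence's condition~(8) with the displayed equivalence turns ``$Y\in\uX(C)$ or ${\X}\setminus Y\in\uX(C^*)$'' into ``$Y\in\uX(C)$ or $Y\notin\oX(C)$'', so $(8)\Leftrightarrow\bigl(\oX(C)\subseteq\uX(C)\bigr)\Leftrightarrow(3)$; and specialising~(7) to a partition ${\X}=Y\cupdot Z$ makes both $(C^Y)_Z$ and $(C_Z)^Y$ classes over the empty domain, nonempty iff $C$ has a $Y$-cube, resp.\ iff $C|Y=2^{Y}$, whence $(7)\Leftrightarrow\bigl(Y\in\uX(C)\Leftrightarrow Y\in\oX(C)\text{ for all }Y\bigr)\Leftrightarrow(3)$. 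After this, the theorem reduces to the single implication $(1)\Rightarrow(3)$, plus handling~(6).

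\textit{The crux: $(1)\Rightarrow(3)$.} I would prove it by induction on $n=\lvert{\X}\rvert$. Fix $x\in{\X}$ and split $C$ along $x$: from $C=\tail_x(C)\cupdot 0C^x\cupdot 1C^x$ one gets $\lvert C\rvert=\lvert C_x\rvert+\lvert C^x\rvert$, together with
\[
  \oX(C)\ \supseteq\ \oX(C_x)\ \cupdot\ \{\,W\cup\{x\}: W\in\oX(C^x)\,\},\qquad
  \uX(C)\ =\ \{\,W\in\uX(C): x\notin W\,\}\ \cupdot\ \{\,W\cup\{x\}: W\in\uX(C^x)\,\},
\]
where the equality uses $\uX(C^x)=\{W: W\cup\{x\}\in\uX(C)\}$ (a $(W\cup\{x\})$-cube of $C$ and the $W$-cube of $C^x$ formed from its $x$-edges determine each other), the first inclusion uses that $C^x$ shattering $W$ forces $C$ to shatter $W\cup\{x\}$, and $\{W\in\uX(C): x\notin W\}\subseteq\uX(C_x)$ because an $x$-free cube of $C$ projects onto one of $C_x$. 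From $\lvert C_x\rvert\le\lvert\oX(C_x)\rvert$, $\lvert C^x\rvert\le\lvert\oX(C^x)\rvert$ and the $\oX$-inclusion we obtain $\lvert C\rvert\le\lvert\oX(C)\rvert$; hence if $C$ is ample all of these are equalities, so $C_x$ and $C^x$ are ample and the $\oX$-inclusion is an equality. By the inductive hypothesis $C_x,C^x$ satisfy~(3), so $\oX(C)=\uX(C_x)\cupdot\{W\cup\{x\}: W\in\uX(C^x)\}$. Comparing with the decomposition of $\uX(C)$, the parts containing $x$ already coincide, and therefore $\uX(C)=\oX(C)$ is now equivalent to the assertion that \emph{every cube of the restriction $C_x$ lifts to a cube of $C$ with the same support}.

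\textit{The obstacle, and condition~(6).} The cube-lifting above is the only substantial point, and I expect it to be the main obstacle: a $W$-cube of $C_x$ is the projection of $2^{\lvert W\rvert}$ concepts of $C$ that may carry different bits in coordinate $x$, so a priori they do not form a cube, and one must show that for an ample $C$ they can be taken with a common $x$-bit. Passing to a bad cube of minimal dimension and splitting it in half, its two halves lift but with opposite $x$-bits, which one shows is incompatible with condition~(6), that intersections of an ample class with a cube are ample (the local picture is a $2$-cube of $Q_n$ on which $C$ induces a non-ample class). I would therefore establish $(1)\Rightarrow(6)$ in the same induction, writing any cube as an intersection of coordinate half-cubes and checking --- via Lawrence's lopsidedness condition and a short case analysis on which side a witnessing cube or cocube lies --- that intersecting an ample class with a single coordinate half-cube preserves ampleness; closure of ample classes under restriction and reduction follows from ``$C$ ample $\Rightarrow C_x$ ample'' by a further easy induction, and these are what make the interleaved argument go through. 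Since $C\cap 2^{{\X}}=C$, the converse $(6)\Rightarrow(1)$ is trivial, so~(6) joins the cycle, and chasing the first paragraph once $(1)\Rightarrow(3)$ is in hand closes it: $(1)\Leftrightarrow(2)\Leftrightarrow(3)\Leftrightarrow(4)\Leftrightarrow(5)\Leftrightarrow(6)\Leftrightarrow(7)\Leftrightarrow(8)$. The heart of the proof is exactly the passage from the cardinality equality ``the sandwich is tight from above'' to the genuine set equality $\uX(C)=\oX(C)$ --- the cube-lifting; the rest is bookkeeping with the coordinate split, the sandwich lemma, and the duality.
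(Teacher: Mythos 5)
The paper does not prove this theorem: it is stated with citations to \cite{BaChDrKo,BoRa,La}, so there is no internal proof to compare against. On its own terms, your ``bookkeeping'' is clean and correct: the duality $Y\in\oX(C)\iff\X\setminus Y\notin\uX(C^*)$, the three counting identities it yields, the equivalences $(1)\Leftrightarrow(5)$, $(4)\Leftrightarrow(2)$, $(8)\Leftrightarrow(3)$, $(7)\Leftrightarrow(3)$, and $(3)\Rightarrow(4)\wedge(5)$, and also the inductive splitting $|C|=|C_x|+|C^x|$ together with the inclusions for $\oX$ and the exact decomposition of $\uX$. Your reduction of $(1)\Rightarrow(3)$ to the ``cube-lifting'' lemma --- every $W$-cube of $C_x$ lifts to a $W$-cube of $C$ --- is exactly the right way to isolate the content.

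The gap is in how you propose to close the cube-lifting step, and it is a genuine circularity. You want to deduce cube-lifting from $(1)\Rightarrow(6)$, and you say you would prove $(1)\Rightarrow(6)$ ``via Lawrence's lopsidedness condition and a short case analysis on which side a witnessing cube or cocube lies''. But Lawrence's condition \emph{is} $(8)$, and you have already shown $(8)\Leftrightarrow(3)$; so this route uses $(3)$ to prove $(6)$ to prove cube-lifting to prove $(3)$. Nor does invoking the inductive hypothesis break the circle: $C\cap H_x^0$ does live on $n-1$ coordinates, but to feed it to the inductive hypothesis you must first certify that it satisfies \emph{some} one of $(1)$--$(8)$, and that is precisely what is not yet available. (There is in fact a non-circular way in: one can show directly that $|C|=|\oX(C)|$ forces $|\oX(C\cap H_x^0)|+|\oX(C\cap H_x^1)| \le |\oX(C)|$ by a tagged injection $\oX(C\cap H_x^0)\cupdot\oX(C\cap H_x^1)\hookrightarrow\oX(C)$ sending the $B$-copy of a $W$ shattered by both halves to $W\cup\{x\}$, and the sandwich lemma then gives $(5)$ for each half-cube; this is counting, not Lawrence.) In addition, the ``minimal bad cube, split in half'' picture is not a $2$-cube: after splitting along $y\in W$ you get witnesses $q_0\in Q_0,\ q_1\in Q_1$ whose forced $x$-bits disagree, and the relevant ambient cube has support $(q_0\Delta q_1)\cup\{x\}$, which can be as large as $\X$ (exactly when $W=\X\setminus\{x\}$ and $q_0,q_1$ are antipodal); in that boundary case the inductive hypothesis does not apply and a separate argument (e.g.\ showing one can always choose $q_0,q_1$ non-antipodal when $C$ is ample) is needed. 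As written, the heart of the proof is therefore not closed.
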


Condition (3) leads to a simple definition of ampleness: $C$ is ample
if whenever $Y\subseteq {\U}$ is shattered by $C$, then there is a
$Y$-cube in $C$. Thus, if $C$ is ample we will write $\uoX(C)$ instead
of $\uX(C)=\oX(C)$.  It follows that for ample classes, the
VC-dimension of a concept class $C$, the dimension of the simplicial
complex $X(C)$, and the dimension of the cube complex of $C$ are the
same. In the following, we talk about a $d$-dimensional class $C$ when
these three dimensions are equal to $d$.

We continue with metric and recursive characterizations of ample
classes:

\begin{theorem}[\!\!\cite{BaChDrKo}]\label{thm:ample2}
The following are equivalent for a concept class $C$:
\begin{enumerate}[(1)]
\item $C$ is ample;
\item $C^Y$ is connected for all $Y\subseteq {\U}$;
\item $C^Y$ is isometric for all $Y\subseteq {\U}$;
\item $C$ is isometric, and both $C_x$ and $C^x$ are ample for all
  $x\in {\U}$;
\item $C$ is connected and all hyperplanes $C^x$  are ample.
\end{enumerate}
\end{theorem}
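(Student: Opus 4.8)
The plan is to establish the equivalences along the cycle $(1)\Rightarrow(3)\Rightarrow(2)\Rightarrow(5)\Rightarrow(1)$ together with $(1)\Leftrightarrow(4)$, running one induction on $n=|\dom(C)|$ and using Theorem~\ref{thm:ample1} (chiefly condition $(6)$, that $C\cap B$ is ample for every cube $B$, and condition $(3)$, that ampleness is the equality $\uX(C)=\oX(C)$) together with the closure of ample classes under restrictions and reductions. The whole argument rests on three geometric facts: \textbf{(A)} every ample class has a connected $1$-inclusion graph; \textbf{(B)} every ample class is isometric; and \textbf{(C)} if $C$ is connected and $C^x$ is ample for every $x\in\X$, then $C$ is already isometric. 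Facts (A) and (B) are proved by induction using Theorem~\ref{thm:ample1}(6); fact (C) is the crux of the whole proof.

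First I record the counting identities, valid for an arbitrary class $C$ and arbitrary $x\in\X$: from the partition $C=0C^x\cupdot 1C^x\cupdot\tail_x(C)$ one gets $|C|=2|C^x|+|\tail_x(C)|$ and $|C_x|=|C^x|+|\tail_x(C)|$, hence $|C|=|C^x|+|C_x|$. On the shattered side, $\{Y\in\oX(C):x\notin Y\}$ is in bijection with $\oX(C_x)$ because $C_x|Y=C|Y$ whenever $x\notin Y$; and removing $x$ identifies $\{Y\in\oX(C):x\in Y\}$ with $\{Z:Z\cup\{x\}\in\oX(C)\}$, a family that always contains $\oX(C^x)$. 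The reverse inclusion $\{Z:Z\cup\{x\}\in\oX(C)\}\subseteq\oX(C^x)$ is exactly where isometry enters: if $Z\cup\{x\}$ is shattered by $C$ and $C$ is isometric, then for each $T\subseteq Z$ pick $a,b\in C$ realizing $T$ on $Z$ with $a(x)\neq b(x)$; a geodesic of $G(C)$ from $a$ to $b$ flips only coordinates of $a\Delta b$, hence keeps the $Z$-coordinates equal to $T$ while flipping $x$ somewhere, and that flip is an $x$-edge of $C$ whose tag restricts to $T$ on $Z$, so $Z\in\oX(C^x)$. Therefore: \emph{if $C$ is isometric and $C^x,C_x$ are ample, then $|\oX(C)|=|\oX(C^x)|+|\oX(C_x)|=|C^x|+|C_x|=|C|$}; combined with the Sandwich Lemma inequality $|C|\le|\oX(C)|$ this forces $C$ to be ample. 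This already proves $(4)\Rightarrow(1)$, and it will drive $(5)\Rightarrow(1)$ once isometry and ampleness of $C_x$ are available.

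Now the metric facts; we may assume $\supp(C)=\dom(C)$. For (A), induct on $n$: $N_x(C)=0C^x\cupdot 1C^x$ with the matching between the two copies is connected since $C^x$ is ample, hence connected by induction; and for $c\in\tail_x(C)$ one joins $c$ to $N_x(C)$ by lifting, edge by edge, a path in $G(C_x)$ from $c|(\X\setminus\{x\})$ to $C^x$, which exists because $C_x$ is ample, hence connected, and $C^x\neq\emptyset$. The only obstruction to lifting such a path is a step between two tail concepts whose unique extensions lie on opposite $x$-sides; but that would make $C$ meet a $2$-cube in precisely its two antipodal vertices, a class that is not ample, contradicting Theorem~\ref{thm:ample1}(6). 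For (B): given $c,c'\in C$, the class $C\cap B(c,c')$ is ample by Theorem~\ref{thm:ample1}(6) and contains $c$ and $c'$, while every vertex of $B(c,c')$ lies on a $c$--$c'$ geodesic; by (A), $c$ has a neighbour $c''$ in $C\cap B(c,c')$, necessarily with $d(c'',c')=d(c,c')-1$, and applying the same argument to $B(c'',c')\subseteq B(c,c')$ (induction on dimension) yields a geodesic from $c$ to $c'$ lying in $C$. Consequently $(1)\Rightarrow(4)$ (closure under restriction and reduction, plus (B)) and $(1)\Rightarrow(3)$ ((B) applied to each reduction $C^Y$, which is ample), while $(3)\Rightarrow(2)$ and $(4)\Rightarrow(5)$ are immediate.

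It remains to prove $(2)\Rightarrow(5)$ and $(5)\Rightarrow(1)$. For $(2)\Rightarrow(5)$: $C=C^{\emptyset}$ is connected, and each $C^x$ again satisfies $(2)$ because $(C^x)^W=C^{W\cup\{x\}}$ is a reduction of $C$ and so is connected by hypothesis; since $\dom(C^x)$ is smaller, the induction gives that each $C^x$ is ample, which is $(5)$. For $(5)\Rightarrow(1)$: by fact (C) the class $C$ is isometric; fixing $x\in\supp(C)$, isometry of $C$ lets one choose compatible lifts of $w$-edges, which gives the identity $(C_x)^w=(C^w)_x$, a reduction of the ample class $C^w$ and hence ample; so $C_x$ is connected with all hyperplanes ample, i.e.\ satisfies $(5)$ on a smaller domain, and is ample by induction; the counting paragraph now yields $|\oX(C)|=|C|$, so $C$ is ample. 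The decisive and hardest step is fact (C): that $C$ connected with all $C^x$ ample is automatically a partial cube, \emph{before} one knows $C$ is ample. My approach would again be induction on $n$, using that each $C^x$ is (by the induction) ample and hence isometric, together with a gatedness property of the carriers $N_x(C)$ in $G(C)$: a minimal shortest-path violation between $c,c'$ would force $c,c'\in\tail_x(C)$ for some $x\in c\Delta c'$, and projecting $c$ onto its gate in $N_x(C)$ and using isometry of $C^x$ (lifted through galleries) would produce a strictly shorter $c$--$c'$ walk in $C$, a contradiction. Proving this gatedness without circularly assuming $C$ ample is the main obstacle, and is where the real work of the theorem lies.
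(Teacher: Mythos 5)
The paper does not prove Theorem~\ref{thm:ample2}; it is stated with attribution to~\cite{BaChDrKo} and used as a black box, so there is no internal proof for your attempt to be compared against. Judging your proposal on its own terms: the supporting scaffolding is sound. Your counting identity $|C|=|C^x|+|C_x|$, the decomposition $\lvert\oX(C)\rvert=\lvert\oX(C_x)\rvert+\lvert\{Z:Z\cup\{x\}\in\oX(C)\}\rvert$, and the observation that (weak) isometry is exactly what forces $\{Z:Z\cup\{x\}\in\oX(C)\}\subseteq\oX(C^x)$ are all correct, as is the argument that antipodal-pair-in-a-$2$-cube obstruction is ruled out by Theorem~\ref{thm:ample1}(6) in proving fact (A). Facts (A) and (B), the step that isometry of $C$ yields $(C_x)^w=(C^w)_x$, and the implications $(1)\Rightarrow(3),(4)$, $(3)\Rightarrow(2)$, $(4)\Rightarrow(5)$, $(4)\Rightarrow(1)$, and $(2)\Rightarrow(5)$ are all handled correctly.

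The genuine gap is exactly the one you flag: fact (C), the assertion that a connected class all of whose hyperplanes $C^x$ are ample is automatically isometric, which is the engine of $(5)\Rightarrow(1)$ and hence of the entire inductive loop (since your $(2)\Rightarrow(5)\Rightarrow(1)$ on smaller domains is what powers the induction in $(2)\Rightarrow(5)$ itself). The sketch you offer for it invokes gatedness of the carriers $N_x(C)$ in $G(C)$, but that gatedness is itself a nontrivial structural property that, in every treatment I am aware of, is derived from ampleness (or from medianness/partial-cube structure), and you correctly note that invoking it here would be circular. As it stands, this is not a minor technicality to fill in but the substantive content of the equivalence $(1)\Leftrightarrow(5)$ in~\cite{BaChDrKo}. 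Until fact (C), or an alternative route from $(5)$ (or $(2)$) to $(1)$ that does not pass through isometry, is supplied, the proof is incomplete. Everything else reduces to this single unproved claim.
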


\begin{corollary}\label{cor:maxcub}
  Two maximal cubes of an ample class $C$ have different supports.
\end{corollary}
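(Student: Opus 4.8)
The plan is to prove the contrapositive-style statement by assuming two maximal cubes $B$ and $B'$ of $C$ share the same support $Y$ and deriving a contradiction — namely, that there is a strictly larger cube containing both, contradicting maximality. So suppose $\supp(B)=\supp(B')=Y$. Since $B$ and $B'$ are parallel $Y$-cubes, they are distinguished only by their tags $\tg(B), \tg(B') \in 2^{\X\setminus Y}$. Pick a coordinate $x \in \X\setminus Y$ on which $\tg(B)$ and $\tg(B')$ differ; I will aim to show that $B$ can be extended in the direction $x$, i.e.\ that the $(Y\cup\{x\})$-cube $B''$ with $B''|Y\cup\{x\} = 2^{Y\cup\{x\}}$ and the appropriate tag on $\X\setminus(Y\cup\{x\})$ is contained in $C$. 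Concretely, it suffices to show $Y\cup\{x\}$ is shattered by $C$ (plus a bookkeeping step on the remaining coordinates), because then ampleness — via Theorem~\ref{thm:ample1}(3), the ``simple definition of ampleness'' highlighted after Theorem~\ref{thm:ample1} — guarantees a $(Y\cup\{x\})$-cube in $C$, and one checks it must be a cube extending $B$.

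The key step is therefore the shattering claim. Here I would use the gallery/connectivity machinery from Theorem~\ref{thm:ample2}. The reduction $C^Y$ is ample (ample classes are closed under reductions), hence connected by Theorem~\ref{thm:ample2}(2); the concepts $\tg(B)$ and $\tg(B')$ both lie in $C^Y$, so there is a path between them in $G(C^Y)$, which lifts to a gallery of parallel $Y$-cubes of $C$ from $B$ to $B'$. Walking along this gallery, the first time we cross an edge labeled by a coordinate $x$ on which $\tg(B)$ and $\tg(B')$ disagree, we obtain two adjacent parallel $Y$-cubes of $C$ whose union is a $(Y\cup\{x\})$-cube of $C$. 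This cube witnesses that $Y\cup\{x\}$ is strongly shattered, and it properly contains a $Y$-cube parallel to $B$; pushing this along the remaining portion of the gallery back to $B$ (again lifting, using that each step of a gallery preserves the $Y$-cube structure) shows $B$ itself sits inside a cube of dimension $|Y|+1$ in $C$. That contradicts the maximality of $B$.

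The main obstacle I anticipate is the bookkeeping in the lifting argument: a priori the gallery realizes a $(Y\cup\{x\})$-cube somewhere in the middle of the gallery, not necessarily attached to $B$, so I must argue that this extra dimension can be ``transported'' back to $B$ along the gallery. The clean way to handle this is to pick $x$ more cleverly at the outset — choose the gallery (equivalently, the path in $G(C^Y)$ from $\tg(B)$ to $\tg(B')$) to be a shortest one; then since $C^Y$ is isometric (Theorem~\ref{thm:ample2}(3)), every coordinate of $\tg(B)\Delta\tg(B')$ is crossed exactly once and never ``undone,'' so in particular the very first edge of the gallery is labeled by some $x\in\tg(B)\Delta\tg(B')$, and $B$ together with the next cube in the gallery already forms a $(Y\cup\{x\})$-cube of $C$ directly extending $B$. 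Alternatively, and perhaps most simply: the class $C\cap B^+$, where $B^+$ is the ambient $(Y\cup\{x\})$-cube of $Q_n$ containing $B$ and the $x$-translate of $B$, is ample by Theorem~\ref{thm:ample1}(6); since it contains the two parallel facets $B$ and a $Y$-cube parallel to $B$ (the latter coming from following the gallery), its shattered sets include $Y\cup\{x\}$, so it equals the full cube $B^+$, giving the desired extension of $B$. Either route closes the argument; I would present the isometry-based one as it is the shortest.
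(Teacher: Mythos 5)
Your core idea — use ampleness to get a gallery of parallel $Y$-cubes from $B$ to $B'$ in $C$ via Theorem~\ref{thm:ample2}(2), and observe that a single gallery step already produces a cube of dimension $|Y|+1$ containing $B$ — is exactly the paper's argument, so the proof is correct. However, the ``main obstacle'' you anticipate is self-imposed: you fix a coordinate $x\in\tg(B)\Delta\tg(B')$ at the outset and then insist on extending $B$ in precisely that direction, which is what forces you to invoke isometry and shortest paths (or the fallback $C\cap B^+$ argument). None of that is needed. The statement only requires that $B$ be contained in \emph{some} strictly larger cube, and the very first edge of \emph{any} path in $G(C^Y)$ from $\tg(B)$ to $\tg(B')$ — which exists because $C^Y$ is connected and $\tg(B)\neq\tg(B')$ — lifts to a $(Y\cup\{x'\})$-cube of $C$ containing $B$, for whatever label $x'$ that edge carries. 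No claim that $x'$ lies in $\tg(B)\Delta\tg(B')$ is needed. Dropping the pre-fixed $x$ collapses your proof to the paper's one-line argument from connectivity alone.
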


\begin{proof}
  Indeed, if $B$ and $B'$ are two $d$-cubes with the same support, by
  Theorem~\ref{thm:ample2}(2) $B$ and $B'$ can be connected in $C$ by
  a gallery, and thus $B$ is contained in a $d+1$-cube. Therefore, $B$
  and $B'$ cannot be maximal.
\end{proof}

\begin{lemma}\label{lem:cubecontraction}
  Given an ample class $C$ and $x \in U$, for any cube $B$ of $C_x$,
  there exists a cube $B'$ of $C$ such that $\supp(B') = \supp(B)$ and
  $B'_x = B$.
\end{lemma}

\begin{proof}
  Consider the cube $B^*$ of $2^U$ such that
  $\supp(B^*) = \supp(B) \cup \{x\}$ and $B^*_x = B$. By
  Theorem~\ref{thm:ample1}(6), $C \cap B^*$ is ample. Since $\supp(B)$
  is shattered by $C \cap B^*$, there exists a cube $B'$ in
  $C \cap B^*$ such that $\supp(B') = \supp(B)$. Since
  $B'_x = B^*_x = B$, we are done.
\end{proof}

A concept class $C\subseteq 2^{\U}$ of VC-dimension $d$ is
\emph{maximum} if
$\lvert C\rvert=\Phi_d(n)=\sum_{i=0}^d \binom{n}{i}$, i.e., if
$ C=\bigcup_{i=0}^d \binom{[n]}{i}$ (where $n = |\U|$).  The Sandwich
Lemma and Theorem~\ref{thm:ample1}(5) imply that maximum classes are
ample. Analogously to ample classes, maximum classes are hereditary by
taking restrictions, reductions, twistings, and complements. Basic
examples of maximum classes are concept classes derived from
arrangements of hyperplanes in general position, balls in $\R^n$, and
unions of $n$ intervals on the line~\cite{GaWe,Fl,FlWa,Jo}.  The
following theorem summarizes some characterizations of maximum
classes:

\begin{theorem}[\!\!\cite{GaWe,Fl,FlWa,Welzl}]\label{thm:maximum-char}
  The following conditions are equivalent for a concept class $C$:
  \begin{enumerate}[(1)]
  \item $C$ is maximum;
  \item $C_Y$ is maximum for all $Y\subseteq {\U}$;
  \item $C_x$ and $C^x$ are maximum for all $x\in {\U}$
  \item $C^*$ is maximum.
  \end{enumerate}
\end{theorem}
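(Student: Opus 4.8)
The plan is to reduce all four equivalences to one counting identity together with two elementary dimension bounds. For any $x\in{\X}$, the partition $C=0C^x\cupdot 1C^x\cupdot\tail_x(C)$ from Section~\ref{s:prel} gives $|C|=|C_x|+|C^x|$, and unravelling the definition of shattering gives $\vcdim(C_x)\le\vcdim(C)$ (a set shattered by $C_x$ is shattered by $C$) and $\vcdim(C^x)\le\vcdim(C)-1$ (if $Y$ is shattered by $C^x$ then $Y\cup\{x\}$ is shattered by $C$). Writing $d=\vcdim(C)$, $n=|{\X}|$ and combining with Pascal's identity $\Phi_d(n)=\Phi_d(n-1)+\Phi_{d-1}(n-1)$, one gets $|C|=|C_x|+|C^x|\le\Phi_d(n-1)+\Phi_{d-1}(n-1)=\Phi_d(n)$ (recovering the Sauer-Shelah-Perles bound inductively), and, more to the point, equality in $|C|=\Phi_d(n)$ forces $|C_x|=\Phi_d(n-1)$ and $|C^x|=\Phi_{d-1}(n-1)$ simultaneously. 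I will also use the trivial remark that if $|A|=\Phi_k(m)$ with $\vcdim(A)\le k$ and $m=|\dom(A)|$, then $A$ is already maximum, since then $\Phi_{\vcdim(A)}(m)\ge|A|=\Phi_k(m)\ge\Phi_{\vcdim(A)}(m)$.

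The core is $(1)\Leftrightarrow(3)$. For $(1)\Rightarrow(3)$: when $C$ is maximum, equality above yields $|C_x|=\Phi_d(n-1)$ and $|C^x|=\Phi_{d-1}(n-1)$ for every $x$, and by the remark $C_x,C^x$ are maximum; moreover Sauer-Shelah applied to $C_x$ and $C^x$ pins down $\vcdim(C_x)=d$ and $\vcdim(C^x)=d-1$ whenever $d<n$ (the case $d=n$ is $C=2^{\X}$). For $(3)\Rightarrow(1)$: if $d=n$ then $C=2^{\X}$ is maximum; otherwise pick $x$ outside a $d$-element set shattered by $C$, so that $\vcdim(C_x)=d$, hence $|C_x|=\Phi_d(n-1)$ (as $C_x$ is maximum), and $|C^x|=\Phi_{d-1}(n-1)$ (as $C^x$ is maximum of VC-dimension $d-1$), giving $|C|=\Phi_d(n)$. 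The point requiring care here---and what I expect to be the main obstacle---is exactly this VC-dimension bookkeeping: condition (3) must be read so that $C^x$ is maximum of VC-dimension $\vcdim(C)-1$; with ``maximum'' interpreted only relative to each $C^x$'s own (possibly smaller) VC-dimension one can build disconnected, non-maximum classes all of whose restrictions $C_x$ and reductions $C^x$ are maximum.

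The equivalence $(1)\Leftrightarrow(2)$ is then immediate: $(2)\Rightarrow(1)$ is the instance $Y=\emptyset$, while $(1)\Rightarrow(2)$ follows by induction on $|Y|$ from ``$C$ maximum $\Rightarrow C_x$ maximum'' (already in hand) via $C_Y=(C_{Y\setminus\{y\}})_y$.

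Finally, $(1)\Leftrightarrow(4)$ I would deduce through ampleness. If $C$ is maximum then $X(C)=\oX(C)$ and $|X(C)|=|C|=\Phi_d(n)$; since a simplicial complex of dimension $\le d$ on $n$ vertices has at most $\Phi_d(n)$ faces, with equality only for the full $d$-skeleton, this forces $X(C)=\binom{{\X}}{\le d}$. Maximum classes are ample (Sandwich Lemma) and ampleness is closed under complementation (Theorem~\ref{thm:ample1}), so $C^*$ is ample; and Theorem~\ref{thm:ample1}(8), sharpened by the remark that a cube of $C$ and a cube of $C^*$ with complementary supports would share a concept, gives $B\in X(C^*)\iff {\X}\setminus B\notin X(C)$. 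Hence $X(C^*)=\binom{{\X}}{\le n-1-d}$, so $|C^*|=\Phi_{n-1-d}(n)$ and $\vcdim(C^*)=n-1-d$, i.e.\ $C^*$ is maximum; the converse is the same statement applied to $C^{**}=C$. (One can also bypass ampleness: the identities $(C^*)_x=(C^x)^*$ and $(C^*)^x=(C_x)^*$ together with $(1)\Leftrightarrow(3)$ give $(1)\Leftrightarrow(4)$ by induction on $n$.)
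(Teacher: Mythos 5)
The paper states Theorem~\ref{thm:maximum-char} only by citation and gives no proof of its own, so there is no in-paper argument to compare against; your reconstruction has to stand on its own. Your route is the standard one: the identity $|C|=|C_x|+|C^x|$, Pascal's recurrence for $\Phi_d$, and the two dimension bounds $\vcdim(C_x)\le\vcdim(C)$, $\vcdim(C^x)\le\vcdim(C)-1$. Your derivations of $(1)\Rightarrow(3)$, $(1)\Leftrightarrow(2)$, and $(1)\Leftrightarrow(4)$ (the last either via Theorem~\ref{thm:ample1}(8) and the skeleton computation, or via the identities $(C^*)_x=(C^x)^*$, $(C^*)^x=(C_x)^*$) are all correct.

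Your caveat about $(3)\Rightarrow(1)$ is not just a stylistic worry: the implication \emph{fails} as literally stated, exactly for the reason you flag. Take $\X=\{1,2,3\}$ and $C=\{\emptyset,\{1\},\{2\},\{3\},\{1,2,3\}\}$. For each $x$ one has $C_x=2^{\X\setminus\{x\}}$, which is maximum of VC-dimension $2$, and $C^x=\{\emptyset\}$, which is maximum of VC-dimension $0$, so condition $(3)$ holds verbatim. Yet $\vcdim(C)=2$ and $|C|=5<7=\Phi_2(3)$, so $C$ is not maximum (nor even ample: $\{1,2,3\}$ is an isolated vertex of $G(C)$, so $\uX(C)\subsetneq\oX(C)$). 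This is precisely the kind of disconnected, non-maximum class ``all of whose restrictions $C_x$ and reductions $C^x$ are maximum'' that you predicted, and it shows that in your $(3)\Rightarrow(1)$ step the inference ``$|C^x|=\Phi_{d-1}(n-1)$ as $C^x$ is maximum of VC-dimension $d-1$'' does not follow from the paper's wording of $(3)$. The statement in the cited sources (Floyd, G\"artner--Welzl, Welzl) carries the dimensions in the recursion --- $C_x$ maximum of VC-dimension $d$ and $C^x$ maximum of VC-dimension $d-1$, where $d=\vcdim(C)$ --- and with that reading your $(3)\Rightarrow(1)$ argument closes as you wrote it. So the proposal is essentially right, and the remaining gap is in the theorem's own formulation rather than in your proof.
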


Following Kuzmin and Warmuth~\cite{KuWa}, we define a
\emph{representation map} for an ample class $C$ as a bijection
$r:C\rightarrow \uoX(C)$ satisfying the \emph{non-clashing condition}:
$c|(r(c)\cup r(c'))\ne c'|(r(c)\cup r(c'))$, for all
$c,c'\in C, c\ne c'$.  It was shown in~\cite{KuWa} that the existence
of a representation map for a maximum class $C$ implies an unlabeled
sample compression scheme of size~$\vcdim(C)$ for $C$. In
Section~\ref{s:rep_maps}, we show that this also holds for ample
classes. Moreover we show that for ample classes, they are equivalent
to $\Delta$-representation maps defined as follows.  A
\emph{$\Delta$-representation map} for an ample class $C$ is a
bijection $r:C\rightarrow \uoX(C)$ satisfying the
\emph{$\Delta$-non-clashing condition}:
$c|(r(c)\Delta r(c'))\ne c'|(r(c)\Delta r(c'))$, for all
$c,c'\in C, c\ne c'$.

\subsection{Examples}
We continue with the main examples of ample classes.

\subsubsection{Simplicial Complexes}
The set of characteristic functions of simplices of a simplicial
complex $S$ can be viewed as a concept class $C(S)$: $C(S)$ is a
bouquet of cubes with a common origin $\varnothing$, one cube for each
simplex of $S$. Therefore, $\uX(C(S))=S$ and since
$\lvert S\rvert=\lvert C(S)\rvert$, $C(S)$ is an ample class having
$S$ as its simplicial complex~\cite{BaChDrKo}.

\subsubsection{Realizable Ample Classes.}
Let $K\subseteq \R^n$ be a convex set. Let
$C(K):= \{\sign(v) : v\in K, v_i\neq 0, \forall i\leq n\}$, where
$\sign(v)\in\{\pm 1\}^n$ is the sign pattern of $v$.
Lawrence~\cite{La} showed that $C(K)$ is ample, and called ample
classes representable in this manner \emph{realizable}. Lawrence
presented a non-realizable ample class of $Q_9$ arising from a
non-stretchable arrangement of pseudolines. It is shown
in~\cite{BaChDrKo_geo} that any ample class becomes realizable if
instead of a convex set $K$ one considers a Menger $\ell_1$-convex set
$K$ of $\R^n$.

\subsubsection{Median Classes.}
A class $C$ is called \emph{median} if for every three concepts
$c_1,c_2,c_3$ of $C$ their \emph{median}
$m(c_1,c_2,c_3):=(c_1\cap c_2)\cup (c_1\cap c_3)\cup (c_2\cap c_3)$
also belongs to $C$.  Median classes are ample by~\cite[Proposition
2]{BaChDrKo}. Median classes are closed by taking reductions,
restrictions, intersections with cubes, and products but not under
complementation.

Due to their relationships with other discrete structures, median
classes are one of the most important examples of ample classes.
Median classes are equivalent to finite median graphs (a well-studied
class in metric graph theory, see~\cite{BaCh_survey}), to CAT(0) cube
complexes, i.e., cube complexes of global nonpositive curvature
(central objects in geometric group theory,
see~\cite{Gromov,Sa_survey}), and to the domains of event structures
(a basic model in concurrency theory~\cite{NiPlWi,Winskel}).

\subsubsection{Convex Geometries and Conditional Antimatroids.}
Let $C$ be a concept class such that (i) $\varnothing\in C$ and (ii)
$c,c'\in C$ implies that $c\cap c'\in C$.  A point $x\in c\in C$ is
called \emph{extremal} if $c\setminus \{ x\}\in C$.  The set of
extremal points of $c$ is denoted by $\ex(c)$.  A concept $c\in C$ is
\emph{generated} by $s\subseteq c$ if $c$ is the smallest concept of
$C$ containing $s$.  A concept class $C$ satisfying (i) and (ii) with
the additional property that every concept $c$ of $C$ is generated by
its extremal points is called a \emph{conditional
  antimatroid}~\cite[Section 3]{BaChDrKo}. If ${\U}\in C$, then we
obtain the well-known structure of a \emph{convex geometry} (called
also an \emph{antimatroid})~\cite{EdJa} (See
Figure~\ref{fig-ample-convex} for an example). It was shown
in~\cite[Proposition 1]{BaChDrKo} that if $C$ is a conditional
antimatroid, then $\uX(C)=\oX(C)$, since $\uX(C)$ coincides with the
sets of extremal points and $\oX(C)$ coincides with the set of all
minimal generating sets of sets from $C$. Hence, any conditional
antimatroid is ample. Besides convex geometries, median classes are
also conditional antimatroids.  Another example of conditional antimatroids
is given by the set $C$ of all strict partial orders on a set
$M$. Each partial order is an asymmetric, transitive subset of
${\U}=\{ (u,v): u,v\in M, u\ne v\}$. Then it is shown
in~\cite{BaChDrKo} that for any $c\in C$, $\ex(c)$ is the set of
covering pairs of $c$ (i.e., the pairs $(u,v)$ such that $u < v$ and
there is no $w$ with $u < w < v$) and that
\begin{equation*}
  \oX(C)=\uX(C)=\{H\subseteq {\U}: H \text{ is the Hasse diagram of a
    partial order on } M\}.
\end{equation*}
Convex geometries comprise many interesting and important examples
from geometry, ordered sets, and graphs, see the foundational
paper~\cite{EdJa}.  For example, by the Krein-Milman theorem, any polytope
of $\R^n$ is the convex hull of its extremal points.  A
\emph{realizable convex geometry} is a convex geometry $C$ such that
its point set ${\U}$ can be realized as a finite set of $\R^n$ and
$c\in C$ if and only if $c$ is the intersection of a convex set of
$\R^n$ with ${\U}$. Acyclic oriented geometries (acyclic oriented
matroids with no two point circuits) are examples of convex
geometries, generalizing the realizable ones.

\begin{figure}
  \includegraphics[scale=0.85,page=8]{figs-lopsided.pdf}%
  \caption{An ample class which is also a convex geometry}%
  \label{fig-ample-convex}
\end{figure}

We continue with two particular examples of conditional antimatroids.

\begin{example}
  Closer to usual examples from machine learning, let ${\U}$ be a
  finite set of points in $\R^n$, no two points sharing the same
  coordinate, and let the concept class $C_{\Pi}$ consist of all
  intersections of axis-parallel boxes of $\R^n$ with ${\U}$. Then
  $C_{\Pi}$ is a convex geometry: for each $c\in C_{\Pi}$, $\ex(c)$
  consists of all points of $c$ minimizing or maximizing one of the
  $n$ coordinates.  Clearly, for any $p\in \ex(c)$, there exists a box
  $\Pi$ such that $\Pi\cap {\U}=c\setminus \{ p\}$.
\end{example}

\begin{example}
  A \emph{partial linear space} is a pair $(P,L)$ consisting of a
  finite set $P$ whose elements are called \emph{points} and a family
  $L$ of subsets of $P$, whose elements are called \emph{lines}, such
  that any line contains at least two points and any two points belong
  to at most one line. The projective plane (any pair of points belong
  to a common line and any two lines intersect in exactly one point)
  is a standard example, but partial linear spaces comprise many more
  examples.  The concept class $L \subseteq 2^P$ has VC-dimension at
  most 2 because any two points belong to at most one line.  Now, for
  each line $\ell\in L$ fix an arbitrary total order $\pi_{\ell}$ of
  its points. Let $L^*$ consist of all subsets of points that belong
  to a common line $\ell$ and define an interval of $\pi_{\ell}$. Then
  $L^*$ is still a concept class of VC-dimension 2. Moreover, $L^*$ is
  a conditional antimatroid: if $c\in L^*$ and $c$ is an interval of
  the line $\ell$, then $\ex(c)$ consists of the two end-points of $c$
  on $\ell$.
\end{example}

\subsubsection{Ample Classes from Graph Orientations.}
Kozma and Moran~\cite{KoMo} used the sandwich lemma to derive several
properties of graph orientations. They also presented two examples of
ample classes related to distances and flows in networks (see
also~\cite[p.157]{La} for another example of a similar nature). Let
$G=(V,E)$ be an undirected simple graph and let $o^*$ be a fixed
reference orientation of $E$. To an arbitrary orientation $o$ of $E$
associate a concept $c_o\subseteq E$ consisting of all edges which are
oriented in the same way by $o$ and by $o^*$. It is proven
in~\cite[Theorem 26]{KoMo} that if each edge of $G$ has a non-negative
capacity, a source $s$ and a sink $t$ are fixed, then for any
flow-value $w\in \R^+$, the set $C^{\text{flow}}_w$ of all
orientations of $G$ for which there exists an $(s,t)$-flow of value at
least $w$ is an ample class. An analogous result was obtained if
instead of the flow between $s$ and $t$ one considers the distance
between those two nodes.

\section{Corner Peelings and Partial Shellings}\label{s:shell}

In this section, we prove that corner peelings of ample classes are
equivalent to isometric orderings of $C$ as well as to partial
shellings of the cross-polytope.  This equivalence, combined with a
result by Hall~\cite{Hall} yields a maximum class with VC dimension 3
without corners (Theorem~\ref{thm:no_corner_bis} below). We show that
this counterexample also refutes the analysis of the \emph{Tail
  Matching Algorithm} of Kuzmin and Warmuth~\cite{KuWa} for
constructing unlabeled sample compression schemes for maximum
classes. On the positive side, we prove the existence of corner
peelings for conditional antimatroids and 2-dimensional ample
classes. Finally we show that the cube complexes of all ample classes
are collapsible.

\subsection{Corners, Isometric Orderings, and Partial Shellings}

We first establish some properties satisfied by the corners of an
ample class. For~$t\notin C$, let $F[t]$ be the smallest cube of $Q_n$
containing $t$ and all neighbors of $t$ in $Q_n$ that are in $C$. Note
that the dimension of $F[t]$ is the number of neighbors of $t$
in~$G(C)$.

\begin{lemma} \label{ample_bis} Let $C$ be ample. Then:
  \begin{enumerate}[(i)]
  \item If $t\notin C$ then $F[t]\subseteq C\cup\{t\}$.
  \item If $t\notin C$ and $C':=C\cup \{ t\}$ is isometric then $C'$
    is ample and $t$ is a corner of $C'$.
  \item $c$ is a corner of~$C$ if and only if $C\setminus\{ c\}$ is ample.
  \end{enumerate}
\end{lemma}

\begin{proof}
  {Item (i):} Suppose $F[t]\setminus C\neq\{ t\}$. Pick $s\neq t$ that
  is closest to $t$ in $F[t]\setminus C$ (with respect to the Hamming
  distance of $Q_n$). Then $t$ and $s$ are not adjacent (by the
  definition of $F[t]$).  By the choice of $s$,
  $B(s,t)\setminus \{ s,t\}\subseteq C$, i.e.,
  $B(s,t)\cap C^*=\{t,s\}$. However, by Theorem~\ref{thm:ample1},
  $C^*$ is ample and thus isometric by Theorem~\ref{thm:ample2},
  contradicting that $B(s,t)\cap
  C^*=\{t,s\}$. % contrary to the ampleness of $C^*$.

  \smallskip\noindent{Item (ii):} To prove that $C'$ is ample, we use
  Theorem~\ref{thm:ample2}(2).  First note that by
  item~(i),~${F}[t]\subseteq C'$.  Let $F'\neq F''$ be parallel cubes
  of $C'$.  If $t\notin F'\cup F''$, then a gallery connecting $F'$
  and $F''$ in $C$ is a gallery in $C'$.  So, assume $t\in F'$. If
  $F'$ is a proper face of $F[t]$, then $F'$ is parallel to a face $F$
  of $F[t]$ not containing $t$.  Since $F'$ and $F$ are connected in
  $F[t]$ by a gallery and $F$ and $F''$ are connected in $C$ by a
  gallery, we obtain a gallery between $F'$ and $F''$ in $C'$.
  Finally, let $F'=F[t]$. In this case, we assert that $F''$ does not
  exist. Otherwise, we define a parallelism map $\pi$ between the
  concepts of $F'$ and the concepts of $F''$ as follows: for any
  $c' \in F'$, $\pi(c')$ is the unique concept $c'' \in F''$ such that
  $c'|\supp(F') = c''|\supp(F'')$ (recall that
  $\supp(F') = \supp(F'')$).  Note that for any $r\in F'$:
  $d(t,\pi(t))=d(r,\pi(r))=d(F',F'')$.  Since $C'$ is isometric, $t$
  and $\pi(t)$ can be connected in $C'$ by a path $P$ of length
  $d(t,\pi(t))$.  Let~$s$ be the neighbor of $t$ in $P$. Since
  $s\in C$ it follows that $s\in F[t]=F'$.  So, $s$ is a concept in
  $F'$ that is closer to $\pi(t)$ than $t$; this contradicts that
  $d(t,\pi(t))=d(F',F'')$.

  \smallskip\noindent {Item (iii):} If $c\in C$ is a corner then there
  is a unique maximal cube~$F\subseteq C$ containing it.  Combined
  with Corollary~\ref{cor:maxcub}, this implies that
  $\uX(C\setminus\{c\}) = \uX(C)\setminus\{\supp(F)\}$.  Next, since
  $\lvert C\rvert=\lvert \uX(C)\rvert$, we get that
  $\lvert C\setminus\{c\}\rvert = \lvert \uX(C\setminus\{c\})\rvert$,
  and by Theorem~\ref{thm:ample1} $C\setminus\{c\}$ is
  ample. Conversely, if $C \setminus \{c\}$ is ample, applying Item
  (ii) to $C \setminus \{c\}$ and $c$, since $C$ is ample (and thus
  isometric), we get that $c$ is a corner of $C$.
\end{proof}

Let $C_<:=(c_1, \ldots, c_m)$ be an ordering of the concepts in $C$.
For any $1\le i\le m$, let $C_i:=\{ c_1,\ldots,c_i\}$ denote the
\emph{$i$'th level set}.  The ordering $C_<$ is called:
\begin{itemize}
\item an \emph{ample} ordering if every level set $C_i$ is ample;
\item a \emph{corner peeling} if every $c_i$ is a \emph{corner} of
  $C_i$;
\item an \emph{isometric} ordering if every level set $C_i$ is
  isometric;
\item a \emph{locally isometric} ordering if every level set $C_i$ is
  locally isometric.
\end{itemize}

\begin{proposition}\label{dismantling}
  The following conditions are equivalent for an ordering
  $C_< = (c_1, \ldots, c_m)$ of an isometric class $C$:
  \begin{enumerate}[(1)]
  \item $C_<$ is ample;
  \item $C_<$ is a corner peeling;
  \item $C_<$ is isometric;
  \item $C_<$ is locally isometric.
  \end{enumerate}
\end{proposition}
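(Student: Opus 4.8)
The plan is to prove the cycle of implications $(3)\Rightarrow(1)\Rightarrow(2)\Rightarrow(4)\Rightarrow(3)$, using the hypothesis that $C$ itself is isometric together with the recursive characterizations of ampleness in Theorem~\ref{thm:ample2} (especially the equivalence of ample and ``isometric with ample hyperplanes''). The truly substantive implication will be $(4)\Rightarrow(1)$ (or, if one routes the cycle differently, $(4)\Rightarrow(3)$), since going from the purely metric statement ``every prefix is weakly isometric'' back to ``every prefix is ample'' is exactly the place where a local-to-global principle must be invoked; the other arrows are either trivial by definition of the orderings or follow from known facts about ample classes.

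First I would dispatch the easy arrows. The implication $(3)\Rightarrow(4)$ is immediate since isometric classes are weakly isometric by definition. For $(1)\Rightarrow(3)$: if every level set $C_i$ is ample, then each $C_i$ is isometric by Theorem~\ref{thm:ample2}(1)$\Rightarrow$(4) (an ample class is isometric), so $C_<$ is an isometric ordering. For $(1)\Leftrightarrow(2)$: if $C_i$ is ample for all $i$, then since $C_{i-1}=C_i\setminus\{c_i\}$ is also ample, $c_i$ is by definition a corner of $C_i$, so $C_<$ is a corner peeling; conversely if $C_<$ is a corner peeling then each $C_i$ is obtained from $C_{i+1}$ by removing a corner, and starting from $C=C_m$ ample and peeling corners keeps the class ample (removing a corner from an ample class yields an ample class — this is the definition of corner recalled in the introduction), so inductively every $C_i$ is ample. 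So far this only uses that $C=C_m$ is ample, which follows from $C$ being isometric once we have any one of the other conditions; I should be a little careful about the logical order here and set $C_m = C$ as the anchor of each induction, noting that $C$ ample is not assumed but follows a posteriori.

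The heart of the matter is the passage from a metric condition on prefixes to ampleness of prefixes. Here I would argue by induction on $i$, peeling from the top: assuming $C_i$ is weakly isometric and ample, I want to show $C_{i-1}=C_i\setminus\{c_i\}$ is ample. The key claim is that a vertex $c_i$ whose removal from the isometric (indeed ample) class $C_i$ leaves a \emph{weakly isometric} class must be a corner of $C_i$, i.e., must lie in a unique maximal cube. One shows this by a neighborhood argument: consider the neighbors of $c_i$ in $G(C_i)$ and the labels on the incident edges. If $c_i$ were not a corner, its incident edges would not all lie in a common cube of $C_i$, which (using ampleness of $C_i$ and the fact that $X(C_i)$ records exactly the strongly shattered sets) forces the existence of two neighbors $c',c''$ of $c_i$ at distance $2$ in $Q_n$ whose interval-midpoint $c_i$ is the \emph{only} common neighbor inside $C_i$ — but then in $C_{i-1}$ we have $d_{Q_n}(c',c'')=2$ while $d_{G(C_{i-1})}(c',c'')>2$, contradicting weak isometry of $C_{i-1}$. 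Here the nontrivial input is that in an ample class, if the two $x$- and $y$-edges at a vertex do not span an $\{x,y\}$-square, then no other square closes them up either (so $c_i$ really is the unique common neighbor), which is where Theorem~\ref{thm:ample2}(5) / Corollary~\ref{cor:maxcub}-type reasoning enters. Once $c_i$ is known to be a corner of $C_i$, removing it yields an ample class by the very definition of corner, completing the induction and hence $(4)\Rightarrow(1)$; composing with the earlier arrows closes the cycle. The main obstacle I anticipate is making the neighborhood argument fully rigorous — in particular ruling out the case where $c_i$'s incident edges split among several maximal cubes but weak isometry is still preserved — and I expect this is handled by invoking the ``intervals are cubes'' property of ample classes together with the fact that $X(C_i)=\uX(C_i)=\oX(C_i)$.
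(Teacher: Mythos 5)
Your cycle $(3)\Rightarrow(1)\Rightarrow(2)\Rightarrow(4)\Rightarrow(3)$ is a reasonable plan, and the easy arrows $(3)\Rightarrow(4)$ and $(1)\Leftrightarrow(2)\Rightarrow(3)$ are fine modulo the ampleness-of-$C$ caveat you yourself flag. But the ``heart of the matter'' as you set it up contains two genuine gaps, one logical and one mathematical.

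First, the logical gap: your top-down induction on $i$ (``assuming $C_i$ is weakly isometric \emph{and ample}, show $C_{i-1}$ is ample'') needs a base case, namely that $C_m=C$ is ample. But the proposition only assumes $C$ isometric, not ample. You write that ampleness of $C$ ``follows a posteriori,'' but your argument never produces it; the paper obtains it only because its key direction $(3)\Rightarrow(1)$ runs \emph{bottom-up}, starting from the trivially ample singleton $C_1$ and showing that adding the next vertex of an isometric ordering keeps the level set ample (this is Lemma~\ref{ample_bis}(iii)). That bottom-up orientation is what breaks the circularity; a top-down induction cannot.

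Second, the mathematical gap: your central claim --- that if $c_i$ is not a corner of the ample class $C_i$ then there exist two neighbors $c',c''$ of $c_i$ at Hamming distance $2$ whose only common neighbor in $C_i$ is $c_i$, so that $C_{i-1}$ fails weak isometry --- is false. Take $C_i=Q_3\setminus\{111\}$ and $c_i=000$. Then $C_i$ is ample and $c_i$ lies in three distinct maximal cubes (the three squares through $000$), so $c_i$ is not a corner; yet every pair of neighbors of $c_i$ (say $100,010$) has \emph{another} common neighbor in $C_i$ (here $110$), and indeed $C_{i-1}=Q_3\setminus\{000,111\}$ is weakly isometric. So removing a non-corner from an ample class can perfectly well leave a weakly isometric class; the failure only becomes visible further down the sequence. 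This is precisely why the paper's proof of $(4)\Rightarrow(3)$ is a \emph{global} argument: it picks a pair $c_i,c_j$ ($i<j$) witnessing failure of isometry at level $j$ with $d(c_i,c_j)$ minimal, and uses weak isometry at several levels and a careful choice of intermediate vertices to reach a contradiction. A one-step local obstruction of the kind you propose does not exist, so the inductive step as written cannot be repaired without fundamentally changing the strategy to match the paper's minimal-counterexample argument followed by the bottom-up $(3)\Rightarrow(1)$.
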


\begin{proof}
  Clearly, (3)$\Rightarrow$(4). Conversely, suppose $C_<$ is locally
  isometric but one of its levels is not isometric.  Hence, there
  exists $i<j$ such that any shortest $(c_i,c_j)$-path in $C$ contains
  some $c_k$ with $k>j$.  Additionally, assume that $c_i,c_j$
  minimizes the distance $d(c_i,c_j)$ among all such pairs.  Since
  $C_j$ is locally isometric, necessarily $d(c_i,c_j)\ge 3$.  Let $c_r$
  be the first vertex among $\{ c_{j+1} ,\ldots, c_m\}$ lying in
  $B(c_i,c_j)\cap C$.  If $d(c_i,c_r)\ge 3$ or $d(c_r,c_j)\ge 3$ (say
  the first), then one can replace $c_i,c_j$ by $c_i,c_r$, which
  contradicts the choice of $c_i,c_j$.  Thus,
  $d(c_i,c_r), d(c_r,c_j)\leq 2$, and at least one of them equals $2$
  (say $d(c_i,c_r)=2$).  Now, weak isometricity implies that $c_i$ and
  $c_r$ have a common neighbor $c_\ell$ with $\ell<\max\{ i,r\}=r$.
  If~$\ell<j$ then $c_\ell,c_j$ contradicts the minimality of
  $c_i,c_j$, and if $j<\ell<r$ then $c_\ell$ contradicts the
  minimality of~$c_r$. This shows (4)$\Rightarrow$(3).

  If $c_i$ is a corner of $C_i$, then any two neighbors of $c_i$ in
  $C_i$ have a second common neighbor in $C_i$, and therefore
  $d_{G(C_{i-1})}$ is the restriction of $d_{G(C_i)}$ on $C_{i-1}$.
  Since $C_m = C$ is isometric, this proves (2)$\Rightarrow$(3).

To show (1)$\Rightarrow$(2), let $C_<$ be an ample ordering of $C$. We
assert that each $c_i$ is a corner of~$C_i$.  Indeed, since $C_{i-1}$
is ample and $c_i\notin C_{i-1}$, by Item (i) in Lemma~\ref{ample_bis}
the cube $F[c_i]$, defined with respect to $C_{i-1}$, is included in
$C_i$.  Thus, $c_i$ belongs to a unique maximal cube $F[c_i]$ of
$C_i$, i.e., $c_i$ is a corner of $C_i$.  To prove
(3)$\Rightarrow$(1), let $C_<$ be an isometric ordering. The ampleness
of each $C_i$ follows by induction from Item (ii) of
Lemma~\ref{ample_bis}.
\end{proof}

% A \emph{simplicial complex} $X$ on a set $\U$ is a family of subsets
% of $X$, called \emph{simplices} or \emph{faces} of $X$, such that if
% $\sigma \in X$ and $\sigma' \subseteq \sigma$, then $\sigma' \in
% X$. The \emph{facets} of $X$ are the maximal (by inclusion) faces of
% $X$. The \emph{dimension} $d$ of $X$ is the size of its largest face.
% A simplicial complex $X$ is a \emph{pure} simplicial complex of
% dimension $d$ if all its facets have size $d$.

An isometric concept class $C$ is \emph{dismantlable} if it admits an
ordering satisfying any of the equivalent conditions (1)-(4) in
Proposition~\ref{dismantling}.  Isometric orderings of $Q_n$ are
closely related to shellings of its dual, the \emph{cross-polytope}
$O_n$ (which we define next). Define
$\pm\U:=\{ \pm x_1,\ldots, \pm x_n\}$; so, $\lvert \pm \U\rvert = 2n$,
and we call $-x_i,+{x}_i$ \emph{antipodal}.  The $n$-dimensional
\emph{cross-polytope} is the pure simplicial complex of dimension $n$
whose facets are all $\sigma\subseteq \pm \U$ that contain exactly one
element in each antipodal pair.  Thus, $O_n$ has $2^n$ facets and each
facet $\sigma$ of $O_n$ corresponds to a vertex $c$ of $Q_n$
($+x_i \in \sigma$ if and only if $x_i \in c$). Observe that
$x_i\in c'\Delta c''$ if and only if
$\{+x_i,-x_i\}\subseteq \sigma'\Delta\sigma''$ where $\sigma'$
correspond to $c'$ and $\sigma''$ corresponds to $c''$.

Let $X$ be a pure simplicial complex of dimension $d$.  Two facets
$\sigma,\sigma'$ are adjacent
if~$\lvert \sigma\Delta \sigma'\rvert=2$.  A \emph{shelling} of $X$ is
an ordering $\sigma_1,\ldots,\sigma_p$ of all of its facets such that
$2^{\sigma_j}\bigcap (\bigcup_{i<j} 2^{\sigma_i})$ is a pure
simplicial complex of dimension $d-1$ for every
$2\leq j\leq p$~\cite[Lecture 8]{Zi}.  A \emph{partial shelling} is an
ordering of some facets that satisfies the above condition.  Note that
$\sigma_1,\ldots,\sigma_m$ is a partial shelling if and only if for
every~$i<j \leq m$ there exists $k < j$ such that
$\sigma_i\cap\sigma_j\subseteq \sigma_k\cap\sigma_j$, and
$|\sigma_k\cap\sigma_j| = d-1$, i.e., $\sigma_k\cap\sigma_j$ is a
facet of both $\sigma_j$ and $\sigma_k$. A partial shelling
$\sigma_1,\ldots,\sigma_m$ of $X$ is \emph{1-step extendable} if there
exists a facet $\tau$ of $X$ such that
$\sigma_1,\ldots,\sigma_m, \tau$ is a partial shelling of $X$. A
partial shelling $\sigma_1,\ldots,\sigma_m$ of $X$ is
\emph{extendable} if it can be extended to a shelling of $X$.  A pure
simplicial complex $X$ is \emph{extendably shellable} if every partial
shelling is extendable.  We next establish a relationship between
partial shellings and isometric orderings.

\begin{proposition}\label{ipshelling2}
  Every partial shelling of the cross-polytope $O_n$ defines an
  isometric ordering of the corresponding vertices of the cube $Q_n$.
  Conversely, if~$C$ is an isometric class of~$Q_n$, then any
  isometric ordering of $C$ defines a partial shelling of $O_n$.
\end{proposition}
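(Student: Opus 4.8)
The plan is to work entirely through the stated bijection $\sigma\leftrightarrow c$ between the facets of $O_n$ and the vertices of $Q_n$, reformulating both the partial-shelling criterion recalled just above the statement and the isometric-ordering condition as one and the same metric condition on the ordering $c_1,\dots,c_m$. First comes a short dictionary. Since each facet meets each antipodal pair $\{-x_i,+x_i\}$ in exactly one element, the identity ``$\{-x_i,+x_i\}\subseteq\sigma'\Delta\sigma''$ iff $x_i\in c'\Delta c''$'' gives $|\sigma'\Delta\sigma''|=2\,d(c',c'')$, equivalently $|\sigma'\cap\sigma''|=n-d(c',c'')$. Consequently: (a) $\sigma_k\cap\sigma_j$ is a facet of both $\sigma_j$ and $\sigma_k$ (that is, has $n-1$ elements) if and only if $d(c_k,c_j)=1$; and (b) $\sigma_i\cap\sigma_j\subseteq\sigma_k$ if and only if $c_k\in B(c_i,c_j)$. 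For (b) one observes that $\sigma_i\cap\sigma_j$ is exactly the set of signed elements $\varepsilon x$ on which $c_i$ and $c_j$ agree (the sign recording the common bit), so $\sigma_i\cap\sigma_j\subseteq\sigma_k$ asserts that $c_k$ agrees with $c_j$ wherever $c_i$ does, i.e.\ $c_i\cap c_j\subseteq c_k\subseteq c_i\cup c_j$, which is precisely the description of the smallest cube $B(c_i,c_j)$ containing $c_i$ and $c_j$. Feeding (a) and (b) into the recalled criterion, $\sigma_1,\dots,\sigma_m$ is a partial shelling of $O_n$ if and only if the corresponding ordering $c_1,\dots,c_m$ has the property $(\star)$: for all $i<j$ there is $k<j$ with $d(c_k,c_j)=1$ and $c_k\in B(c_i,c_j)$.

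It then suffices to show that $(\star)$ is equivalent to every level set $C_i$ being isometric; both halves of the proposition follow, and the forward direction in particular yields that $C$ itself is isometric. If every $C_i$ is isometric, then for $i<j$ a shortest $(c_i,c_j)$-path in $G(C_j)$ has length $d(c_i,c_j)$, and taking $c_k$ to be the neighbor of $c_j$ on it we get $k<j$, $d(c_k,c_j)=1$, and $c_k\in B(c_i,c_j)$, so $(\star)$ holds. Conversely, assuming $(\star)$, I would prove by induction on $j$ that $C_j$ is isometric: for $c_i\in C_{j-1}$ choose $k<j$ as in $(\star)$; since $c_k\in B(c_i,c_j)$ and $d(c_k,c_j)=1$ we have $d(c_i,c_k)=d(c_i,c_j)-1$, so the induction hypothesis furnishes a shortest $(c_i,c_k)$-path inside $G(C_{j-1})\subseteq G(C_j)$, and appending the edge $c_kc_j$ gives a $(c_i,c_j)$-walk of length $d(c_i,c_j)$ in $G(C_j)$; pairs of concepts lying both in $C_{j-1}$ are covered by the induction hypothesis directly, whence $C_j$ is isometric.

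The one delicate point is the dictionary in the first step --- matching ``$\sigma_k\cap\sigma_j$ is a facet of both $\sigma_j$ and $\sigma_k$'' with the adjacency $d(c_k,c_j)=1$, and the inclusion $\sigma_i\cap\sigma_j\subseteq\sigma_k$ with membership $c_k\in B(c_i,c_j)$; this is just careful bookkeeping with the signs in the cross-polytope. Once $(\star)$ is isolated, the equivalence with isometricity of all level sets is the routine induction above, and I foresee no further obstacle.
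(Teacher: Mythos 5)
Your proof is correct and takes essentially the same route as the paper: both translate the partial-shelling condition and the isometric-ordering condition to the same metric statement ($(\star)$: for all $i<j$ there is $k<j$ with $d(c_k,c_j)=1$ and $c_k\in B(c_i,c_j)$) via the facet/vertex dictionary. Your version is a touch cleaner in two spots --- you note that $\sigma_i\cap\sigma_j\subseteq\sigma_k$ alone is equivalent to $c_k\in B(c_i,c_j)$ (so the extra verification $\sigma_k\subseteq\sigma_i\cup\sigma_j$ is automatic for facets of a cross-polytope), and you spell out the short induction establishing $(\star)\Rightarrow$ all level sets isometric, which the paper implicitly takes for granted.
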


\begin{proof}
  Let $\sigma_1,\ldots, \sigma_m$ be a partial shelling of $O_n$ and
  $c_1,\ldots, c_m$ be the ordering of the corresponding vertices of
  $Q_n$.  We need to prove that each level set
  $C_j=\{c_1,\ldots, c_j\}$ is isometric.  It suffices to show that
  for every $i<j$ there is $k<j$ such that $d(c_k,c_j)=1$ and
  $c_k\in B(c_i,c_j)$.  Equivalently, for every $i <j$, there is
  $k <j$ such that $\lvert\sigma_k\Delta \sigma_j\rvert=2$ and
  $\sigma_i\cap \sigma_j\subseteq \sigma_k\subseteq \sigma_i\cup
  \sigma_j$: since $\sigma_1,\ldots,\sigma_m$ is a partial shelling,
  there is a facet $\sigma_k$ with $k<j$ such that
  $\lvert \sigma_k\cap \sigma_j\rvert = n-1$ and
  $\sigma_i\cap \sigma_j\subseteq \sigma_k\cap \sigma_j$.  We claim
  that $\sigma_k$ is the desired facet.  It remains to show that (i)
  $\lvert\sigma_j\Delta \sigma_k \rvert= 2$ and (ii)
  $\sigma_k\subseteq \sigma_i\cup\sigma_j$. Item (i) follows since
  $\lvert\sigma_j\rvert=\lvert\sigma_k\rvert=n$, and
  $\lvert \sigma_k \cap \sigma_j\rvert=n-1$.  For Item (ii),
  let~$\sigma_j\setminus \sigma_k=\{ +x\}$ and
  $\sigma_k\setminus \sigma_j=\{ -x\}$.  We need to show that
  $-x\in\sigma_i$, or equivalently that~$+x\notin \sigma_i$.  The
  latter follows since $+x\in \sigma_j\setminus \sigma_k$ and
  $\sigma_j\cap \sigma_i\subseteq \sigma_k$.

  Conversely, let $c_1,\ldots, c_m$ be an isometric ordering
  and~$\sigma_1,\ldots,\sigma_m$ be the ordering of the corresponding
  facets of $O_n$.  We assert that this is a partial shelling. Let
  $i<j$.  It suffices to exhibit $k < j$ such that
  $\lvert \sigma_k\cap \sigma_j\rvert = n-1$ and
  $\sigma_i\cap\sigma_j\subseteq \sigma_k\cap\sigma_j$.  Since $C_j$
  is isometric, $c_j$ has a neighbor $c_k\in B(c_i,c_j)\cap C_j$.
  Since $d(c_j,c_k)=1$ it follows that
  $\lvert \sigma_k\cap\sigma_j\rvert = n-1$.  Since
  $c_k\in B(c_i,c_j)$ it follows that
  $\sigma_i\cap \sigma_j\subset \sigma_k\subset \sigma_i\cup \sigma_j$
  and thus $\sigma_i\cap\sigma_j\subseteq \sigma_k\cap\sigma_j$.
\end{proof}

\begin{corollary}\label{isometric-shelling}
  For any partial shelling $\sigma_1, \ldots, \sigma_m$, let
  $(c_1, \ldots, c_m)$ be the ordering of the corresponding vertices
  of $Q_n$ and let $C = \{c_1, \ldots, c_m\}$. Then the following
  hold.
  \begin{enumerate}
  \item $C$ and $C^* = Q_n\setminus C$ are ample;
  \item the partial shelling $\sigma_1, \ldots, \sigma_m$ is 1-step
    extendable if and only if $C^*$ has a corner.
  \item the partial shelling $\sigma_1, \ldots, \sigma_m$ is
    extendable if and only if $C^*$ is dismantlable.
  \end{enumerate}
  Consequently, if all ample classes are dismantlable, then $O_n$ is
  extendably shellable.
\end{corollary}

\begin{proof}
  % Let $\sigma_1,\ldots, \sigma_m$ be a partial shelling of $O_n$ and
  % let $C=(c_1 ,\ldots, c_m)$ be the ordering of the corresponding
  % vertices of $Q_n$.
  By Proposition~\ref{ipshelling2}, the level sets
  $\{c_1, \ldots, c_i\}$, $1 \leq i \leq m$ are isometric, thus $C$ is
  ample by Proposition~\ref{dismantling}.  The complement $C^*$ is
  also ample, establishing (1).

  If $C^*$ has a corner, then by Lemma~\ref{ample_bis}(iii), $C^*$
  contains a concept $t$ such that $C^*\setminus \{ t\}$ is ample.
  Consequently, its complement $C':=C\cup \{ t\}$ is ample. Let $\tau$
  be the facet of $O_n$ corresponding to $t$. By
  Proposition~\ref{dismantling} $(c_1 ,\ldots, c_m,t)$ is an isometric
  ordering of $C'$. by Proposition~\ref{ipshelling2},
  $\sigma_1,\ldots, \sigma_m,\tau$ is a partial shelling of $O_n$ and
  thus $\sigma_1,\ldots, \sigma_m$ is 1-step extendable.  Conversely,
  assume that there exists $\tau$ such that
  $\sigma_1,\ldots, \sigma_m,\tau$ is a partial shelling of $O_n$ and
  let $t$ be the vertex of $Q_n$ corresponding to $\tau$. By
  Proposition~\ref{ipshelling2}, $(c_1, \ldots, c_m, t)$ is an
  isometric ordering of $C' = C \cup \{t\}$. By
  Proposition~\ref{dismantling}, $C'$ is ample and thus its complement
  $C^* \setminus \{t\}$ is also ample, establishing that $t$ is a
  corner of $C^*$ by Lemma~\ref{ample_bis}(iii). This establishes (2).

  Assertion (3) is a direct consequence of Assertion (2).
  Consequently, if all ample classes are dismantlable, then any
  partial shelling of $O_n$ is extendable and thus $O_n$ is extendably
  shellable.
\end{proof}

It was asked in~\cite{Zi} if any cross-polytope $O_n$ is extendably
shellable.  In the PhD thesis of H. Tracy Hall from 2004, a nice
counterexample to this question is given~\cite{Hall}. Hall's
counterexample arises from the 299 regions of an arrangement of 12
pseudo-hyperplanes.  These regions are encoded as facets of the
12-dimensional cross-polytope $O_{12}$ and it is shown in~\cite{Hall}
that the subcomplex of $O_{12}$ consisting of all other facets admits
a shelling which cannot be extended by adding any of the 299
simplices. By Corollary~\ref{isometric-shelling}(ii), the
ample concept class $C_H$ defined by those 299 simplices does not have
any corner (see Fig.~\ref{fig-hall} for a picture of
$C_H$).\footnote{For the interested reader, a file containing the 299
  concepts of $C_H$ represented as elements of $\{0,1\}^{12}$ is
  available at \url{https://arxiv.org/src/1812.02099/anc/CH.txt}} A
counting shows that $C_H$ is a maximum class of VC-dimension 3. This
completes the proof of our first main result:

\begin{theorem}\label{thm:no_corner_bis}
  There exists a maximum class $C_H$ of VC-dimension 3 without any
  corner.
\end{theorem}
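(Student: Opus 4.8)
The plan is to combine Tracy Hall's non-extendably shellable $O_{12}$ with the equivalences of Propositions~\ref{dismantling} and~\ref{ipshelling2}, essentially running the argument of Corollary~\ref{isometric=shelling} ``in reverse''. Fix $\X$ with $|\X|=12$, so the facets of $O_{12}$ biject with the vertices of $Q_{12}=2^{\X}$. Hall's arrangement of $12$ pseudo-hyperplanes has $299$ regions; these select $299$ facets of $O_{12}$, and I let $C_H\subseteq 2^{\X}$ be the corresponding set of $299$ vertices of $Q_{12}$ and $C':=C_H^{*}=2^{\X}\setminus C_H$ be the complementary set of $m:=2^{12}-299=3797$ vertices. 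Hall shows that the $m$ facets of $O_{12}$ \emph{not} coming from a region admit a shelling $\sigma_1,\ldots,\sigma_m$ which cannot be extended by adding any of the $299$ region-facets. By the first half of Proposition~\ref{ipshelling2}, the induced order $c_1,\ldots,c_m$ of the vertices of $C'$ is an isometric ordering, so each level set is isometric; in particular $C'=C_m$ is an isometric class, hence dismantlable by Proposition~\ref{dismantling}, and therefore ample. By Theorem~\ref{thm:ample1}(2), $C_H=(C')^{*}$ is ample, and $|C_H|=299$.

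To see that $C_H$ has no corner, suppose for contradiction that $c\in C_H$ is a corner of $C_H$. Since $C_H$ is ample, Lemma~\ref{ample_bis}(ii) gives that $C_H\setminus\{c\}$ is ample, hence so is its complement $(C_H\setminus\{c\})^{*}=C'\cup\{c\}$ (here $c\notin C'$) by Theorem~\ref{thm:ample1}(2); being ample, $C'\cup\{c\}$ is isometric by Theorem~\ref{thm:ample2}. Consider the ordering $c_1,\ldots,c_m,c$ of $C'\cup\{c\}$: its first $m$ level sets are precisely the level sets of the isometric ordering $c_1,\ldots,c_m$ of $C'$ and hence isometric, while its last level set $C'\cup\{c\}$ is isometric as just noted; thus $c_1,\ldots,c_m,c$ is an isometric ordering. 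By the second half of Proposition~\ref{ipshelling2}, $\sigma_1,\ldots,\sigma_m,\tau$ is a partial shelling of $O_{12}$, where $\tau$ is the facet corresponding to $c$. But $c\in C_H$, so $\tau$ is one of the $299$ region-facets, contradicting Hall's result that $\sigma_1,\ldots,\sigma_m$ admits no extension by any of them. Therefore $C_H$ has no corner.

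It remains to verify that $C_H$ is maximum of VC-dimension $3$. One computes $|C_H|=299=\binom{12}{0}+\binom{12}{1}+\binom{12}{2}+\binom{12}{3}=\Phi_3(12)$. It then suffices to check that $\vcdim(C_H)\le 3$, i.e., that no $4$-element subset of $\X$ is shattered by $C_H$; this follows from the standard fact that the region class of an arrangement of pseudo-hyperplanes in general position in $\mathbb{R}^3$ is a maximum class of VC-dimension~$3$, and it can in any case be confirmed directly from the explicit list of the $299$ concepts. Since $\vcdim(C_H)\le 3$ and $|C_H|=\Phi_3(12)>\Phi_2(12)$, the Sauer--Shelah--Perles lemma forces $\vcdim(C_H)=3$, so $C_H$ is maximum.

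The genuine difficulty of the theorem lies entirely outside this deduction: it is Hall's construction of a partial shelling of $O_{12}$ that cannot be extended, without which the statement is vacuous. Inside the proof, the only point requiring care is the bookkeeping with complements — ensuring that removing a corner of $C_H$ corresponds to \emph{adding} a forbidden region-facet to Hall's partial shelling — which is handled by the repeated passage between $C$ and its complement $C^{*}$ (Theorem~\ref{thm:ample1}(2)) together with Lemma~\ref{ample_bis}(ii) and Proposition~\ref{ipshelling2}.
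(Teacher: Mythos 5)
Your proposal is correct and follows essentially the same route the paper takes: apply Hall's non-extendable partial shelling of $O_{12}$, translate via Proposition~\ref{ipshelling2} and Proposition~\ref{dismantling} (using the complement and Lemma~\ref{ample_bis}(ii)), and derive the contradiction. The paper compresses exactly this chain into the citation of Corollary~\ref{isometric=shelling}; you have simply unfolded the argument inline (and slightly more carefully, since the corollary as stated only yields the existence of some non-dismantlable ample class, whereas its proof — which you reproduce — shows that Hall's specific class $C_H$ has no corner at all). Your final paragraph on maximality is also in line with the paper's terse ``a counting shows,'' supplementing it with the standard fact about region classes of pseudo-hyperplane arrangements to pin down $\vcdim(C_H)\le 3$.
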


\begin{figure}[ht]
  \includegraphics[scale=0.08]{cenew.pdf}%
  \caption{The maximum class $C_H\subset 2^{12}$ without corners of
      VC-dimension 3 with ${\binom{12}{\le 3}}=299$ concepts.  A
      different edge color is used for each of the 12 dimensions.
      The orientation of the edges is a unique sink orientation
        (defined in Section~\ref{s:rep_maps}) and defines a
        representation map $r_1$ for the class $C_H$: for each concept
        $c$, $r_1(c)$ is the set of the labels of the outgoing
        edges. This representation map was found using the
        \textsf{MiniSat} solver. The appendix contains more
	discussions about this and another representation map of $C_H$. 
	Best viewed in color.}
  \label{fig-hall}
\end{figure}

\begin{remark}
  Hall's concept class $C_H$ also provides a counterexample
  to~\cite[Conjecture 4.2]{MoWa} asserting that \emph{for any two
    ample classes $C_1\subset C_2$ with
    $\lvert C_2\setminus C_1\rvert\ge 2$ there exists an ample class
    $C$ such that $C_1\subset C\subset C_2$.} As noticed
  in~\cite{MoWa}, this conjecture is stronger than the corner peeling
  conjecture disproved by Theorem~\ref{thm:no_corner_bis}.
\end{remark}

\begin{remark}
  Notice also that since ample classes are closed by Cartesian
  products, and any corner in a Cartesian product comes from corners
  in each factor, one can construct other examples of ample classes
  without corners by taking the Cartesian product of $C_H$ by any
  ample class.
\end{remark}

\subsection{$C_H$ refutes the Analysis by Kuzmin and
  Warmuth~\cite{KuWa}}\label{KuWaerror}

The algorithm of~\cite{KuWa} uses the notion of forbidden labels for
maximum classes, introduced in the PhD thesis of Floyd~\cite{Fl} and
used in~\cite{FlWa,KuWa}; we closely follow~\cite{Fl}. Let $C$ be a
maximum class of VC-dimension $d$ on the set ${\U}$. For any
$Y\subseteq {\U}$ with $\lvert Y\rvert=d+1$, the restriction $C|Y$ is
a maximum class of dimension $d$. Thus $C|Y$ contains
$\Phi_d(d+1)=2^{d+1}-1$ concepts. There are $2^{d+1}$ possible
concepts on $Y$.  We call the characteristic function of the unique
concept that is not a member of $C|Y$ a \emph{forbidden label of size
  $d+1$} for $Y$.  Each forbidden label \emph{forbids} all concepts
that contain it from belonging to $C$. Let $c$ be a concept which
contains the forbidden label for $Y$. Since $C$ is a maximum class,
adding $c$ to $C$ would shatter the set $Y$ that is of cardinality
$d+1$.

The algorithm of~\cite{KuWa}, called the \emph{Tail Matching
  Algorithm}, recursively constructs a representation map $\tilde{r}$
for $C^x$, expands $\tilde{r}$ to a map $r$ on the carrier
$N_x(C)=0C^x\cupdot 1C^x$ of $C^x$, and extends $r$ to $\tail_x(C)$
using a special subroutine.  This subroutine and the correctness proof
of the algorithm (Theorem 11) heavily uses that a specially
defined bipartite graph (which we will call $\Gamma$) has a unique
perfect matching. This bipartite graph has the concepts of
$\tail_x(C)$ on one side and the forbidden labels of size $d$ for
$C^x$ on another side. Both sides have the same size.  There is an
edge between a concept and a forbidden label if and only if the
forbidden label is contained in the respective concept. The graph
$\Gamma$ is defined in~\cite[Theorem 10]{KuWa}, which also asserts
that \emph{$\Gamma$ has a unique perfect matching}. In the following,
we show that this assertion is false.

We will show that uniqueness of the matching implies that there is a
corner in the tail (which is contradicted by Hall's concept class
$C_H$). We use the following lemma about perfect bipartite matchings:

\begin{lemma}[\!\!\cite{Cech}]\label{matching}
  Let $G$ be a bipartite graph with bipartition $X,Y$ and unique
  perfect matching $M$. Then, there are vertices $x \in X$ and
  $y \in Y$ with degree one.
\end{lemma}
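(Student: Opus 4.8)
The plan is to prove the statement by contradiction, producing a second perfect matching whenever one side has no degree-one vertex. By the symmetry of the claim under interchanging $X$ and $Y$, it suffices to exhibit a vertex of degree one in $Y$, so I will assume for contradiction that $\deg_G(y) \ge 2$ for every $y \in Y$ and derive a perfect matching different from $M$. (Recall that any vertex covered by a perfect matching has degree at least one, so "not of degree $\ge 2$" is the same as "of degree exactly one".)

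First I would fix the unique perfect matching $M$ and call an edge \emph{matching} if it lies in $M$ and \emph{free} otherwise. Starting from an arbitrary $x_1 \in X$, I construct an alternating walk $x_1, y_1, x_2, y_2, \dots$ where $x_i y_i$ is the unique matching edge at $x_i$ and $y_i x_{i+1}$ is any free edge at $y_i$ with $x_{i+1} \neq x_i$; such a free edge exists precisely because $\deg_G(y_i) \ge 2$. Since $x_{i+1} \neq x_i$ forces $y_{i+1} \neq y_i$ (distinct vertices of $X$ have distinct $M$-partners), consecutive edges of the walk are distinct, and the walk genuinely alternates between matching and free edges while zig-zagging between the two sides. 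In particular the walk never terminates, so by finiteness of $G$ it must revisit a vertex.

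Let $j$ be least with $x_j \in \{x_1, \dots, x_{j-1}\}$, say $x_j = x_i$ with $i < j$; a repetition among the $y$'s cannot occur strictly before one among the $x$'s, since $y_k$ determines its $M$-partner $x_k$. Then $x_i, y_i, x_{i+1}, \dots, y_{j-1}, x_j$ is a closed walk whose vertices, apart from the repeated endpoint, are pairwise distinct, hence a cycle $Z$; bipartiteness makes $Z$ of even length, and by construction its edges alternate between $M$ and $E \setminus M$. Therefore $M \triangle Z$ is again a perfect matching of $G$, and it differs from $M$ (for instance $x_i y_i \notin M \triangle Z$), contradicting uniqueness. This yields a degree-one vertex in $Y$, and running the same argument with the roles of $X$ and $Y$ exchanged produces one in $X$.

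The one delicate point is the closing bookkeeping: verifying via the minimality of $j$ and the $y_k \mapsto x_k$ correspondence that the closed walk is an honest cycle, and that it is an even alternating cycle whose first edge ($\in M$) and last edge ($\in E\setminus M$) meet consistently around the cycle, so that the symmetric difference with $M$ is indeed a perfect matching. Everything else is routine, and bipartiteness is exactly what forces the walk to alternate sides and the cycle to have even length.
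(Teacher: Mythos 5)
Your proof is correct. Note that the paper does not prove this lemma; it simply cites Lov\'asz and Plummer, so there is no in-paper argument to compare against. Your alternating-walk argument is the standard one: assuming every vertex of $Y$ has degree at least two, the walk $x_1,y_1,x_2,y_2,\ldots$ never gets stuck (matching edges out of $X$, free edges out of $Y$), the first vertex repetition must occur on the $X$-side because $y_k$ determines $x_k$ via $M$, the extracted closed walk is a genuine even alternating cycle $Z$ by the minimality of the repetition index, and $M\,\triangle\,Z$ is a second perfect matching, a contradiction; then symmetry gives the degree-one vertex in $X$. The bookkeeping you flag (distinctness of the interior vertices, alternation around the closing vertex $x_i=x_j$) all checks out, and the construction automatically enforces $x_{i+1}\ne x_i$ since a free edge at $y_i$ cannot be the matching edge $x_iy_i$.
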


By Lemma~\ref{matching}, the uniqueness of the matching claimed
in~\cite[Theorem 10]{KuWa} implies that there is a forbidden labeling
that is contained in exactly one concept $c$ of the tail. We claim
that $c$ must be a corner: $c$ is the only concept in the tail
realizing this forbidden labeling and removing this concept from $C$
reduces the number of shattered sets by at least one.  After the
removal, the number of concepts is $\lvert C\rvert-1$. By the Sandwich
Lemma, the number of shattered sets is always at least as big as the
number of concepts.  So removing $c$ reduces the number of shattered
sets by exactly one set $S$ and the resulting class is ample.  For
ample classes, the number of concepts equals the number of supports of
cubes of the 1-inclusion graph and for every shattered set there is a
cube with this shattered set as its support. Thus $c$ lies in a cube
$B$ with support $S$.  There is no other cube with support $S$ because
after removing $c$ there is no cube left with support $S$.  Thus $B$
is the unique cube with support $S$ and must be maximal.  We conclude
that $c$ is a corner of $C$. Since $C_H$ is maximum and does not
contain corners, this leads to a contradiction. This establishes that
Theorem~10 from~\cite{KuWa} is false.

\subsection{Two Families of Dismantlable Ample
  Classes}\label{s:dismantlable}

We continue with two families of dismantlable ample classes:
conditional antimatroids and 2-dimensional ample classes.

\begin{proposition} \label{dismantlable_antimatroids}
  Conditional antimatroids are dismantlable.
\end{proposition}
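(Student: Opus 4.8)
The plan is to treat the two families separately, and in each case exhibit an explicit corner (the existence of one corner suffices, since removing a corner from an ample class yields an ample class by Lemma~\ref{ample_bis}(ii), and both families are closed under the relevant reductions/restrictions, so the argument can be iterated).

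For conditional antimatroids, I would look for a concept $c\in C$ that is \emph{maximal} with respect to inclusion (such a $c$ exists since $C$ is finite). The key point is that if $c$ is inclusion-maximal in a conditional antimatroid, then it lies in a unique maximal cube. Recall from the discussion in Section~\ref{s:ample} that for a conditional antimatroid, $\uX(C)$ coincides with $\{\ex(c):c\in C\}$ (closed under subsets) and every $c$ is generated by $\ex(c)$. The cube through $c$ supported on $\ex(c)$ consists, roughly, of all $c\setminus S$ with $S\subseteq\ex(c)$; one checks that this is genuinely a subcube of $C$ using the closure under intersection (axiom (ii)) together with extremality. For maximality of this cube and uniqueness: any cube $B\ni c$ has support $Y\subseteq\X$ with $c\triangle Y'\in C$ for all $Y'\subseteq Y$; in particular $c\cup\{x\}\in C$ for $x\in Y\setminus c$ would contradict inclusion-maximality of $c$, so $Y\subseteq c$, and then each $x\in Y$ must be an extremal point of $c$ (since $c\setminus\{x\}\in C$), giving $Y\subseteq\ex(c)$; hence the $\ex(c)$-cube through $c$ contains every cube through $c$ and is the unique maximal one. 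Therefore $c$ is a corner. This is the step I expect to require the most care — making precise why the $\ex(c)$-cube is actually contained in $C$ and why every cube through an inclusion-maximal $c$ has support inside $\ex(c)$.

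For 2-dimensional ample classes, I would argue via the hyperplane/carrier structure. Pick any $x\in\supp(C)$; by Theorem~\ref{thm:ample2}, $C^x$ is ample of dimension at most $1$, i.e. $G(C^x)$ is a tree (a connected, ample, at-most-1-dimensional class has a cube complex that is a tree). A tree has a leaf $\tg(B)$; the corresponding $x$-edge $B$ of $C$ sits in the carrier $N_x(C)$, and I want to show one of its two endpoints $c$ is a corner of $C$. The only cubes of $C$ through $c$ that use dimension $x$ are those in the gallery direction of the tree-edge at the leaf — there is only one, namely $B$ itself (a 2-dimensional square through $B$ would project to a second edge at the leaf of the tree $G(C^x)$). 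So the maximal cubes through $c$ are: possibly $B$ (if $B$ is itself not contained in a square), or a single square containing $B$, plus cubes of $C_x$ not using $x$. To kill the latter I would instead induct: by Theorem~\ref{thm:ample2}(4), $C_x$ is ample and has dimension $\le 2$, so by induction $C_x$ has a corner; lifting this corner back through the at-most-doubled fiber over $x$ and combining with the leaf argument pins down an honest corner of $C$. The bookkeeping of which maximal cubes of $C$ pass through a candidate vertex is the main obstacle here; the cleanest route is probably to choose $x\in\supp(C)$, take a leaf of the tree $G(C^x)$, and show directly that the endpoint of the corresponding edge lying in $\tail$-free position belongs to exactly one maximal cube by a short case analysis on $\dim\le 2$.

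Once a corner is produced in each case, dismantlability follows by Proposition~\ref{dismantling}: deleting the corner keeps the class ample (Lemma~\ref{ample_bis}(ii)), and keeps it a conditional antimatroid / 2-dimensional ample class respectively (conditional antimatroids are closed under the operations involved, and a subclass of a 2-dimensional ample class obtained by corner deletion is again ample of dimension $\le 2$), so we may repeat until a single vertex remains; the resulting order is an ample ordering, hence a corner peeling.
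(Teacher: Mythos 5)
For conditional antimatroids your argument is correct and is essentially the paper's argument read from the other end. The paper orders all of $C$ by cardinality and shows that each $c_i$ is a corner of the level set $C_i$ using exactly the two facts you isolate: that $c_i\setminus\{y\}\in C$ iff $y\in\ex(c_i)$, and that closure under intersection puts the whole $\ex(c_i)$-cube inside $C$. Your variant, picking an inclusion-maximal $c$ and showing it is a corner of $C$ itself, is sound (the cube-support argument is right: any $Y$-cube through $c$ forces $Y\subseteq c$ by maximality and then $Y\subseteq\ex(c)$ by extremality), and your observation that $C\setminus\{c\}$ remains a conditional antimatroid is a genuine step that the paper's direct construction of the whole ordering sidesteps — it should be checked, but the check goes through (no $a\cap b$ can equal $c$, and no extremality or generation changes), so the iteration is legitimate.

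For $2$-dimensional ample classes, however, there is a real gap. The leaf argument only controls cubes through the $x$-edge $B=\{c,cx\}$ that \emph{involve} $x$: it tells you there is at most one square through $B$, but it says nothing about squares through $c$ or through $cx$ that avoid $x$ entirely, and such squares can exist at both endpoints simultaneously. Concretely, take $\X=\{1,\dots,5\}$, $x=1$, and $C=\{\varnothing,\{1\},\{2\},\{3\},\{2,3\},\{1,4\},\{1,5\},\{1,4,5\}\}$; this is a $2$-dimensional ample class, $C^1=\{\varnothing\}$ is a degree-$0$ leaf, and the corresponding $x$-edge is $\{\varnothing,\{1\}\}$, yet $\varnothing$ lies in the $\{2,3\}$-square and $\{1\}$ lies in the $\{4,5\}$-square, so neither endpoint is a corner. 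Your fallback — induct on $C_x$ and lift a corner — runs into the doubled-fiber problem you flag yourself: a corner $c'$ of $C_x$ with $c'\in C^x$ has two preimages, and it is not clear that either preimage is a corner of $C$ (the cubes visible in $C_x$ through $c'$ can split between the two levels, and new $x$-squares appear). The paper handles this with a genuinely different mechanism: a greedy ordering that at each step chooses the next concept with the maximum number of neighbors in the already-built level set, combined with Lemma~\ref{local-convexity} (local convexity implies convexity in ample classes), which is what rules out the "two disjoint squares at the two endpoints" configuration. That lemma, or some substitute for it, is the missing ingredient in your sketch.
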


\begin{proof}
  Let $c_1 , \ldots, c_m$ be an ordering of the concepts of $C$, where
  $i<j$ if and only if $\lvert c_i\rvert \le \lvert c_j\rvert$
  (breaking ties arbitrarily). Clearly, $c_1=\varnothing$ and the
  order $c_1, \ldots, c_m$ is monotone with respect to distances from
  $c_1$.  In particular, any order defined by a Breadth First Search
  (BFS) from $c_1$ satisfies this condition.  We prove that for each
  $i$, $c_i$ is a corner of the level set
  $C_i=\{ c_1 , \ldots, c_{i-1},c_i\}$.  The neighbors of $c_i$ in
  $C_i$ are subsets of $c_i$ containing $\lvert c_i\rvert -1$
  elements. From the definition of extremal points of $c_i$ that we
  denote by $\ex(c_i)$, it immediately follows that
  $c_i\setminus \{ x\}\in C$ if and only if $x\in \ex(c_i)$. For any
  $s\subseteq \ex(c_i)$, since
  $c_i\setminus s=\bigcap_{x\in s} (c_i\setminus \{ x\})$ and $C$ is
  closed under intersections, we conclude that $c_i\setminus s\in
  C$. Therefore, the whole Boolean cube between $c_i$ and
  $c_i\setminus \ex(c_i)$ is included in $C$, showing that $c_i$ is a
  corner of $C_i$.
\end{proof}

The fact that 2-dimensional maximum classes have corners was proved
in~\cite[Theorem~34]{RuRu} and it was generalized in~\cite{MeRo} to
2-dimensional ample classes.  We provide here a different proof of
this result, originating from 1997-1998 and based on a local
characterization of convex sets of general ample classes, which may be
of independent interest.

Given two classes $C'\subseteq C$, $C'$ is \emph{convex} in $C$ if
$B(c,c')\cap C\subseteq C'$ for any $c,c'\in C'$ and $C'$ is
\emph{locally convex} in $C$ if $B(c,c')\cap C\subseteq C'$ for any
$c,c'\in C'$ with $d(c,c') = 2$.

\begin{lemma} \label{local-convexity}
  A connected subclass $C'$ of an ample class $C$ is convex in $C$ if
  and only if $C'$ is locally convex.
\end{lemma}

\begin{proof}
  One direction of the statement is trivial. For the other direction,
  assume that $C'$ is locally convex.  For $c,c'\in C'$, recall that
  $d_{G(C')}(c,c')$ denotes the distance between $c$ and $c'$ in
  $G(C')$. Recall also that since $C$ is ample, $C$ is isometric and
  thus $d_{G(C)}(c,c') = d(c,c')$ for any $c,c' \in C$.  We prove that
  for any $c,c'\in C'$, $B(c,c')\cap C\subseteq C'$ by induction on
  $k=d_{G(C')}(c,c')$, the case $k=2$ being covered by the initial
  assumption. Pick any $t\in B(c,c')\cap C$ and let $L'$ be a shortest
  $(c,c')$-path of $C$ passing via $t$. By Theorem~\ref{thm:ample2}, $L'$ is
  contained in $B(c,c')$.  Let $L''$ be a shortest
  $(c,c')$-path in $C'$. Let $c''$ be the neighbor of $c$ in $L''$.
  Since $d_{G(C')}(c'',c')<k$, by the induction assumption
  $B(c'',c')\cap C\subseteq C'$. Since $G(C)$ is bipartite, either
  $c \in C \cap B(c'',c')$ or $c'' \in C \cap B(c,c')$.  In the first
  case, $t \in C\cap B(c,c') \subseteq C \cap B(c'',c') \subseteq C'$
  and we are done.  So suppose that
  $d_{G(C')}(c,c')=d(c,c')$. This implies that $L''$ is a
  shortest $(c,c')$-path in $C$. Moreover, we can assume that
  $t \notin B(c'',c')$ and consequently, $c'' \notin L'$ and the edge
  $cc''$ is parallel to an edge $uv$ of the path $L'$.  Consider a
  shortest gallery $e_0:=cc'',e_1,\ldots,e_k:=uv$ connecting the edges
  $cc''$ and $uv$ in $C$.  It is constituted of two shortest paths
  $P'=(u_0:=c,u_1,\ldots,u_k:=u)$ and
  $P''=(v_0:=c'',v_1,\ldots,v_k=v)$. Then $P''$ together with the
  subpath of $L'$ comprised between $v$ and $c'$ constitute a shortest
  path between $c''$ and $c'$, thus it belongs to $C'$. Therefore, if
  $t$ is comprised in $L'$ between $v$ and $c'$, then we are
  done. Thus suppose that $t$ belongs to the subpath of $L'$ between
  $c$ and $u$. Since $u_1$ is adjacent to $c,v_1\in C'$, by local
  convexity of $C'$ we obtain that $u_1\in C'$. Applying this argument
  several times, we deduce that the whole path $P'$ belongs to
  $C'$. In particular, $u\in C'$. Since $v$ is between $u$ and $c'$,
  $u\ne c'$, thus $d_{G(C')}(c,u)<k$. By induction hypothesis,
  $B(c,u)\cap C\subseteq C'$.  Since $t\in B(c,u)$, we are done.
\end{proof}

\begin{proposition}\label{dismantlable_2dim}
  2-dimensional ample classes are dismantlable.
\end{proposition}

\begin{proof}
  As for conditional antimatroids, the corner peeling for
  $2$-dimensional classes is based on an algorithmic order of
  concepts.  Consider the following total order $c_1, \ldots, c_m$ of
  the concepts of a $2$-dimensional ample class $C$ constructed
  recursively as follows: start with an arbitrary concept and denote
  it $c_1$ and at step $i$, having numbered the concepts
  $C_{i-1}=\{ c_1, \ldots, c_{i-1}\}$, select as $c_i$ a concept in
  $C\setminus C_{i-1}$ which is adjacent to a maximum number of
  concepts of $C_{i-1}$. We assert that $C_i:=C_{i-1}\cup \{ c_i\}$ is
  ample and that $c_i$ is a corner of $C_i$. Since $C_{i-1}$ is ample
  and $C$ is 2-dimensional, by Lemma~\ref{ample_bis}, $c_i$ has at
  most two neighbors in $C_{i-1}$. First suppose that $c_i$ has two
  neighbors $u$ and $v$ in $C_{i-1}$. By Lemma~\ref{ample_bis},
  $c_i, u$ and $v$ are included in a 2-cube $F[c_i]$ with a fourth
  concept $t$ belonging to $C_{i-1}$. Since $F[c_i]$ is a maximal cube
  of $C$, by Corollary~\ref{cor:maxcub}, $F[c_i]$ is not parallel to
  any other cube of $C$. On the other hand, the edges $c_iu$ and
  $c_iv$ are parallel to the edges $vt$ and $ut$, respectively. This
  shows that $C_i^Y$ is connected for any $Y\subseteq {\U}$, and
  therefore by Theorem~\ref{thm:ample2}(2), $C_i$ is ample and $c_i$
  is a corner. Now suppose that $c_i$ has exactly one neighbor $c$ in
  $C_{i-1}$. From the choice of $c_i$ at step $i$, any concept of
  $C\setminus C_{i-1}$ has at most one neighbor in $C_{i-1}$, i.e.,
  $C_{i-1}$ is locally convex. By Lemma~\ref{local-convexity},
  $C_{i-1}$ is convex in $C$. But then $C_{i}=C_{i-1}\cup \{ c_i\}$
  is isometric since $c_i$ has degree $1$ in $G(C_i)$.
  % must be isometric: otherwise, $c_i$ will be metrically between $c$
  % and another concept of $C_{i-1}$, contradicting the convexity of
  % $C_{i-1}$.
  By Lemma~\ref{ample_bis}, $C_i$ is thus ample and $c_i$
  is a corner of $C_i$.
\end{proof}

\subsection{Collapsibility}\label{s:collapse}

A \emph{free face} of a cube complex $Q(C)$ is a face $Q$ of $Q(C)$
strictly contained in only one other face $Q'$ of $Q(C)$. An
\emph{elementary collapse} is the deletion of a free face $Q$ (thus
also of $Q'$) from $Q(C)$. A cube complex $Q(C)$ is \emph{collapsible}
to a vertex $v_0$ if $C$ can be reduced to $v_0$ by a sequence of
elementary collapses.  Namely, there exists an ordered sequence
$\Lambda = ((Q_1,Q_1'), \ldots, (Q_n,Q_n'))$ where each $Q_i$ is a
free face of $Q(C) \setminus \{Q_1,Q_1', \ldots, Q_{i-1},Q_{i-1}'\}$
contained in $Q_i'$ and
$Q(C) \setminus \{Q_1,Q_1', \ldots, Q_{n},Q_{n}'\}$ is
$\{v_0\}$. Observe that each face of $Q(C)$ distinct from $v_0$
appears exactly once in the sequence.

Collapsibility is a stronger version of
contractibility. The sequences of elementary collapses of a
collapsible cube complex $Q(C)$ can be viewed as discrete Morse
functions~\cite{Fo} without critical cells, i.e., acyclic perfect
matchings of the face poset of $Q(C)$.  From the definition it follows
that if $C$ has a corner peeling, then the cube complex $Q(C)$ is
collapsible: the sequence of elementary collapses follows the corner
peeling order (in general, detecting if a finite complex is
collapsible is NP-complete~\cite{Ta}).  Theorem~\ref{thm:ample2}(5)
implies that the cube complexes of ample classes are contractible (see
also~\cite{BaChKn} for a more general result). In fact, the cube
complexes of ample classes are collapsible (this extends the
collapsibility of finite CAT(0) cube complexes established
in~\cite{AdBe}):

\begin{proposition}\label{collapsible1}
  If $C\subseteq 2^{\U}$ is an ample class, then the cube complex
  $Q(C)$ is collapsible.
\end{proposition}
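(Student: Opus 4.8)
The plan is to induct on $|\X|$; since $Q(C)\cong Q(C|\supp(C))$ we may assume $\X=\supp(C)$, the base case $\X=\varnothing$ being a single vertex. Fix $x\in\X$ and recall the partition $C=0C^x\cupdot 1C^x\cupdot\tail_x(C)$. Geometrically it organizes $Q(C)$ as a union of three subcomplexes: the carrier $N_x(C)$, which is precisely the prism $Q(C^x)\times[0,1]$ over the hyperplane $C^x$; and the subcomplexes $A_0,A_1$ of $Q(C)$ induced on the concepts with $x\notin c$, resp.\ $x\in c$. Intersecting $C$ with the subcube $\{c:x\notin c\}$ (resp.\ $\{c:x\in c\}$) and restricting to $\X\setminus\{x\}$ produces, by Theorem~\ref{thm:ample1}(6) and closure under restriction, an ample class $C_x^{(0)}$ (resp.\ $C_x^{(1)}$) on $\X\setminus\{x\}$ with $A_b\cong Q(C_x^{(b)})$; here $C^x=C_x^{(0)}\cap C_x^{(1)}$, and inside $A_0$ (resp.\ $A_1$) the subcomplex on the vertex set $C^x$ is exactly $0C^x$ (resp.\ $1C^x$), a copy of $Q(C^x)$. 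Finally $A_0\cap N_x(C)=0C^x$, $A_1\cap N_x(C)=1C^x$, and $A_0\cap A_1=\varnothing$.

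I would then collapse $Q(C)$ in three phases. \emph{Phase 1:} collapse $A_1$ onto $1C^x$. Every cell deleted lies in $A_1\setminus 1C^x$, and such a cell has all its vertices on the side $x\in c$, so it is a face of no cell of $A_0$ and, being outside $1C^x$, of no cell of $N_x(C)$ either; hence a face is free in $Q(C)$ iff it is free in $A_1$, and the collapse transplants verbatim to $Q(C)$. \emph{Phase 2:} the complex is now $N_x(C)\cup A_0$, and the standard collapse of the prism $Q(C^x)\times[0,1]$ onto its base $Q(C^x)\times\{0\}=0C^x$ deletes only cells of the form $\eta\times\{1\}$ and $\eta\times[0,1]$, none of which meets $A_0$; it is therefore legitimate and leaves exactly $A_0$. \emph{Phase 3:} $A_0\cong Q(C_x^{(0)})$ with $C_x^{(0)}$ ample on $\X\setminus\{x\}$, so by the inductive hypothesis it collapses to a point. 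Composing the three phases yields $Q(C)\searrow\{\mathrm{pt}\}$.

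The crux is Phase~1: the assertion that $Q(C_x^{(1)})$ collapses onto its subcomplex $Q(C^x)$, where $C^x=C_x^{(0)}\cap C_x^{(1)}\subseteq C_x^{(1)}$; symmetrically one needs $Q(C_x^{(0)})\searrow Q(C^x)$ if one prefers the cleaner intermediate statement $Q(C)\searrow N_x(C)$. This cannot be obtained by peeling corners of $C_x^{(1)}$: by Theorem~\ref{thm:no_corner_bis} an ample class may have no corner, equivalently $Q(C_x^{(1)})$ may have no free \emph{vertex}, so the collapse must start at a higher-dimensional free face. I would prove this key lemma by a secondary induction (on $|\X|$, or on $|C_x^{(1)}\setminus C^x|$), exhibiting at each stage a free face of $Q(C_x^{(1)})$ lying in $Q(C_x^{(1)})\setminus Q(C^x)$; the ample structure enters through Lemma~\ref{ample_bis}, which for $t\notin C_x^{(1)}$ forces the cube $F[t]$ spanned by $t$ and its neighbours to lie in $C_x^{(1)}\cup\{t\}$ and thereby controls which faces are free. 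I expect this key lemma to be the main obstacle; once it is in hand, the prism collapse of Phase~2 and the reduction to $C_x^{(0)}$ in Phase~3 are routine.
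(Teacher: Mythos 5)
Your decomposition $Q(C)=A_0\cup N_x(C)\cup A_1$ and the logic of Phases 2 and 3 are sound, and the overall route is genuinely different from the paper's. The paper also fixes a coordinate $x$, but inducts on $|C|$ and works with the single restriction $C_x$ (your $C_x^{(0)}\cup C_x^{(1)}$): it takes a full collapsing sequence of $Q(C_x)$ from the inductive hypothesis and lifts each collapse pair through the projection $Q(C)\to Q(C_x)$, whose fibre over a cube $Q$ has size~$1$ when $Q\not\subseteq C^x$ (exactly one of $0Q,1Q$ lies in $C$) and size~$3$ when $Q\subseteq C^x$ (namely $0Q$, $1Q$, and $Q\times[0,1]$). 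Each pair of the collapse of $Q(C_x)$ thus lifts to one, two, or three pairs for $Q(C)$, and this yields a complete collapse of $Q(C)$ in one pass, with no need to first collapse one side onto the hyperplane.

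The gap in your proposal is exactly where you flag it: the key lemma $Q(C_x^{(1)})\searrow Q(C^x)$ (equivalently $Q(C)\searrow N_x(C)$) is asserted but not proved, and neither of your two proposed secondary inductions is routine. Induction on $|C_x^{(1)}\setminus C^x|$ fails for two reasons: a free face found via Lemma~\ref{ample_bis} is in general a higher-dimensional cube, so one elementary collapse need not decrease this vertex count, and after a single collapse the remaining complex is no longer $Q(D)$ for any ample $D$, so Lemma~\ref{ample_bis} cannot be reinvoked. Induction on $|\X|$ requires a \emph{relative} lifting step: given $Q(D_y)\searrow Q(E_y)$ for $E=C^x\subseteq D=C_x^{(1)}$, one wants $Q(D)\searrow Q(E)$, but the natural lift only removes cells of $Q(D)\setminus Q(E)$ whose projection lies \emph{outside} $Q(E_y)$, while there do exist cells of $Q(D)\setminus Q(E)$ projecting into $Q(E_y)$ (e.g.\ a vertex $q\in D\setminus E$ adjacent along a $y$-edge to a vertex of $E$; then $q_y\in E_y$). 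So the relative statement does not reduce to a smaller instance of itself the way the absolute statement does, and it genuinely needs a new idea. This is precisely what the paper's fibre-lifting argument avoids: by collapsing all of $Q(C)$ over $Q(C_x)$ at once rather than side-by-side, it never has to prove that one side collapses onto the hyperplane.
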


\begin{proof}
  We proceed by induction on the size of $C$. If $C$ contains only one
  concept, then the statement is trivially true. Otherwise, let
  $x \in \U$ and suppose by induction hypothesis that $Q(C_x)$ is
  collapsible to $v_0$. Let $\Lambda:=((Q_1,Q'_1),\ldots, (Q_n,Q'_n))$
  be the corresponding collapsing sequence of $Q(C_x)$, i.e., a
  partition of the faces of $Q(C_x)\setminus \{v_0\}$ into pairs
  $(Q_i,Q'_i)$ such that $Q_i$ is a free face in the current
  subcomplex $Q(C_x) \setminus \{Q_1,Q_1', \ldots, Q_{i-1},Q_{i-1}'\}$
  of $C_x$ and $Q'_i$ is the unique face properly containing $Q_i$.

  Each cube $Q^*$ of $C$ is mapped to a cube $Q$ of $C_x$ with
  $\supp(Q) = \supp(Q^*) \setminus \{x\}$. Observe that
  $\dim(Q) = \dim(Q^*) -1 $ if $x \in \supp(Q^*)$ and
  $\dim(Q) = \dim(Q^*)$ otherwise.  In the first case, $Q$ is
  contained in $C^x$ and $Q^*$ is entirely contained in the carrier
  $N_x(C)$.
  % In this case, denote by $P^*$ and $R^*$ the two
  % opposite facets of $Q^*$ such that
  % $\supp(P^*) = \supp(R^*) = \supp(Q) = \supp(Q^*)\setminus \{x\}$
  % (observe that $P^*, R^*$ and $Q^*$ are all three mapped to the same
  % cube $Q$ that is contained in $C^x$).
  Conversely, by Lemma~\ref{lem:cubecontraction}, any cube $Q$ of
  $C_x$ is the image of at least one cube $Q^*$ of $C$ (with
  $\supp(Q) = \supp(Q^*) \setminus \{x\}$).  If there exists $Q^*$
  such that $\supp(Q^*) = \supp(Q) \cup \{x\}$, then there exist
  exactly three cubes of $C$ that are mapped to $Q$: $Q^*$ and the two
  opposite facets $P^*$ and $R^*$ of $Q^* $ such that
  $\supp(P^*) = \supp(R^*) = \supp(Q) = \supp(Q^*)\setminus \{x\}$.
  Otherwise, there exists a unique cube $Q^*$ of $C$ that is mapped to
  $Q$.

  % For any cube $Q$ of $C_x$, consider the cube $Q' \subseteq 2^U$ such
  % that $x \in \supp(Q')$ and $Q'_x = Q$ . By
  % Theorem~\ref{thm:ample1}(6), $C \cap Q'$ is ample. Since
  % $\supp(Q) = \supp(Q')\setminus \{x\}$ is shattered by $C \cap Q'$,
  % there exists a cube $Q^*$ in $C \cap Q'$ such that
  % $\supp(Q^*) = \supp(Q)$. Consequently, $Q^*$ is mapped to $Q$ in
  % $C_x$ and thus any cube of $C_x$ is the image of a cube of $C$.

  % Consequently, we
  % conclude that each cube $Q$ of $C_x$ is the image of either a single
  % cube $Q^*$ of $C$ with $\dim(Q^*) = \dim(Q)$, or of three cubes
  % $P^*, Q^*, R^*$ with $\dim(P^*) = \dim(R^*)= \dim(Q)$ and
  % $\dim(Q^*) = \dim(Q)+1$. The latter case occurs if and only if $Q$
  % is contained in $C^x$.

  We derive a collapsing sequence $\Lambda^*$ for $Q(C)$ by replacing
  each elementary collapse of $\Lambda$ by one, two, or three
  elementary collapses in $Q(C)$ and when $v_0 \in C^x$, we add a last
  elementary collapse.  Consider a pair $(Q_i,Q'_i)\in \Lambda$. If
  neither $Q_i$ nor $Q'_i$ are contained in $C^x$, let $Q^*_i$ and
  ${Q^*_i}'$ be the unique preimages of $Q_i$ and $Q'_i$ in $Q(C)$ and
  insert the pair $(Q^*_i,{Q^*_i}')$ in $\Lambda^*$.

  If both $Q_i$ and ${Q_i}'$ are contained in $C^x$, let
  $P_i^*, R_i^*, Q_i^*$ be the cubes of $C$ mapped to $Q$ and
  ${P_i^*}', {R_i^*}', {Q_i^*}'$ the ones mapped to $Q'$ such that
  $\dim(P_i^*) = \dim(R_i^*) = \dim(Q_i^*) - 1 = \dim (Q_i)$,
  $\dim({P_i^*}') = \dim({R_i^*}') = \dim({Q_i^*}') - 1 = \dim
  (Q_i')$, $P_i^*$ is contained in ${P_i^*}'$, $R_i^*$ is contained in
  ${R_i^*}'$, and $Q_i^*$ is contained in ${Q_i^*}'$. We insert the three
  pairs $(Q^*_i,{Q^*_i}'), (P^*_i,{P^*_i}'),(R^*_i,{R^*_i}')$ in
  $\Lambda^*$ in this order.

  Suppose now that $Q_i$ is included in $C^x$ and $Q'_i$ is not
  included in $C^x$.  Let ${Q_i^*}'$ be the unique cube of $C$ mapped
  to $Q_i'$ and let $P_i^*, R_i^*, Q_i^*$ be the cubes of $C$ mapped
  to $Q$ such that
  $\dim(P_i^*) = \dim(R_i^*) = \dim(Q_i^*) - 1 = \dim (Q_i)$. Assume
  without loss of generality that $P_i^*$ is a facet of both $Q_i^*$
  and ${Q_i^*}'$. We insert the two pairs
  $(R_i^*,Q_i^*), (P_i^*,{Q_i^*}')$ in $\Lambda^*$ in this order.

  Finally, we consider the vertex $v_0$. If $v_0 \notin C^x$, then the
  preimage of $v_0$ contains only one vertex that we denote by
  $v_0^*$. Suppose now that $v_0 \in C^x$. Let $v_0^*, u_0^* \in C$
  such that $v_0^* = v_0$ and $u_0^* = v_0 \cup \{x\}$. Then the
  preimage of $v_0$ in $Q$ is of size $3$: it contains the vertices
  $v_0^*, u_0^*$ and the edge $\{u_0^*,v_0^*\}$. In this case, we add
  to $\Lambda^*$ the elementary collapse $(u_0^*,\{u_0^*,v_0^*\})$.

  Each cube $Q^*$ of $Q(C)$ is in the preimage of a single cube of
  $Q(C_x)$. Moreover, each cube of $Q(C_x)\setminus \{v_0\}$ appears
  in exactly one collapsing pair of the sequence
  $\Lambda$. Consequently, each cube of $Q(C)$ whose image is not
  $v_0$ appears exactly once in the sequence $\Lambda^*$. Notice also
  that when the preimage of $v_0$ is of size $3$, both $u_0^*$ and
  $\{u_0^*,v_0^*\}$ appear in the last pair of
  $\Lambda^*$. Consequently, each cube of $Q(C)\setminus \{v_0^*\}$
  appears in exactly one pair of the sequence $\Lambda^*$.  Since each
  pair of $\Lambda^*$ (except potentially $(u_0^*, \{u_0^*,v_0^*\}$)
  is derived from a collapsing pair of $\Lambda$ and since $\Lambda$
  is a sequence of elementary collapses of $Q(C_x)$, we can deduce
  that $\Lambda^*$ is a sequence of elementary collapses of the cube
  complex $Q(C)$ to $v_0^*$.
  % % If $Q^*_i$ and ${Q^*_i}'$ have the same
  % % dimensions as $Q_i$ and $Q'_i$, then we insert the pair
  % % $(Q^*_i,{Q^*_i}')$ at the end of $\Lambda^*$. If both $Q^*_i$ and
  % % ${Q^*_i}'$ have larger dimensions than $Q_i$ and $Q'_i$, then we
  % % insert at the end of $\Lambda^*$ the pairs
  % % $(Q^*_i,{Q^*_i}'), (P^*_i,{P^*_i}'),(R^*_i,{R^*_i}')$.
  % Finally,
  % suppose that ${Q^*_i}'$ has the same dimension as $Q'_i$ and $Q^*_i$
  % has dimension larger than $Q_i$. Suppose that $P^*_i$ is the facet
  % of $Q^*_i$ contained in ${Q^*_i}'$. Then we insert at the end of
  % $\Lambda^*$ the pairs $(R^*_i,Q^*_i),(P^*_i,{Q^*_i}')$. One can
  % easily check that each cube of $Q(C)$ belongs to a single pair of
  % $\Lambda^*$ and, since each pair of $\Lambda^*$ is derived from a
  % collapsing pair of $\Lambda$, we deduce that $\Lambda^*$ is a
  % sequence of elementary collapses of $Q(C)$.
\end{proof}

  % \begin{figure}
  %   \includegraphics[scale=0.75,page=10]{figs-lopsided.pdf}%
  %   \caption{An ample class $C$ and its restriction $C_x$ for $x=3$.}%
  %   \label{fig-ex-collapsible}
  % \end{figure}

\begin{example}\label{ex-collapsibility}
  We illustrate the construction in the proof of
  Proposition~\ref{collapsible1} with the ample class $C$ from
  Figure~\ref{fig-ex-collapsible} and its restriction $C_x$.  The
  names of the concepts are depicted in the figure. The faces will be
  denoted by the list of their vertices.

  Consider the following two collapsing sequences of $C_x$:
  \begin{itemize}
  \item
    $\Lambda_1 = (\{a\},\{a,b\}), (\{b\},\{b,c\})$,  and
  \item $\Lambda_2 = (\{a\},\{a,b\}), (\{c\},\{b,c\})$.
  \end{itemize}
  Note that $\Lambda_1$ collapses $Q(C_x)$ to $c$ and $\Lambda_1$
  collapses $Q(C_x)$ to $b$.

  The corresponding collapsing sequences for $C$ obtained by the
  construction described in the proof of
  Proposition~\ref{collapsible1} are the following:
  \begin{itemize}
  \item  $\Lambda_1^* = (\{a,a'\},\{a,b,b',a\}), (\{a'\},\{a',b'\}),
    (\{a\},\{a,b\}), (\{b'\},\{b,b'\}), (\{b\},\{b,c\})$ collapses
    $Q(C)$ to $c$, and

  \item  $\Lambda_2^* = (\{a,a'\},\{a,b,b',a\}), (\{a'\},\{a',b'\}),
    (\{a\},\{a,b\}), (\{c\},\{b,c\}), (\{b'\},\{b,b'\})$ collapses
    $Q(C)$ to $b$.
  \end{itemize}

  In both cases, the subsequence
  $(\{a,a'\},\{a,b,b',a\}), (\{a'\},\{a',b'\}), (\{a\},\{a,b\})$
  corresponds to the elementary collapse $(\{a\},\{a,b\})$. In
  $\Lambda_1^*$, the subsequence $ (\{b'\},\{b,b'\}), (\{b\},\{b,c\})$
  corresponds to the elementary collapse $(\{b\},\{b,c\})$ in
  $\Lambda_1$. In $\Lambda^*_2$, the elementary collapse
  $ (\{c\},\{b,c\})$ corresponds to the elementary collapse
  $(\{c\},\{b,c\})$ in $\Lambda_2$ and the elementary collapse
  $(\{b'\},\{b,b'\})$ is the elementary collapse associated to the
  last vertex $b$ of $\Lambda_2$.
  \end{example}

  \begin{figure}
    \includegraphics[scale=0.75,page=10]{figs-lopsided.pdf}%
    \caption{The ample class $C$ used in
      Example~\ref{ex-collapsibility} and its restriction $C_x$ for
      $x=3$.}%
    \label{fig-ex-collapsible}
  \end{figure}

\section{Representation Maps for Maximum Classes}\label{s:maximum}

In this section, we prove that maximum classes admit representation
maps, and therefore optimal unlabeled sample compression schemes.

\begin{theorem}\label{th-maximum}
  Any maximum class $C\subseteq 2^{\U}$ of VC-dimension $d$ admits a
  representation map, and consequently, an unlabeled sample
  compression scheme of size $d$.
\end{theorem}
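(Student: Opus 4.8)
I would prove this by induction on $n=|\X|$, building $r\colon C\to X(C)$ by gluing representation maps of the hyperplane $C^x$ and the restriction $C_x$. We may assume $\supp(C)=\X$, since a coordinate on which all concepts of $C$ agree can be dropped without affecting $C$ (up to isomorphism) or $X(C)$. If $d=n$ then $C=2^{\X}$, $X(C)=2^{\X}$, and $r:=\mathrm{id}_C$ is a representation map, because for $c\ne c'$ any $y\in c\Delta c'$ lies in $c\cup c'$ and separates $c$ from $c'$ there; this is the base case. For the inductive step fix $x\in\X$. By Theorem~\ref{thm:maximum-char}, $C^x$ is maximum of VC-dimension $d-1$ on $\X\setminus\{x\}$ and $C_x$ is maximum of VC-dimension $d$ on $\X\setminus\{x\}$, and $C=0C^x\cupdot 1C^x\cupdot\tail_x(C)$. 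Moreover $Y\subseteq\X\setminus\{x\}$ is shattered by $C$ iff it is shattered by $C_x$, while $Y\cup\{x\}$ is shattered by $C$ iff $Y$ is shattered by $C^x$, so $X(C)=X(C_x)\cupdot\{Y\cup\{x\}:Y\in X(C^x)\}$. Finally, $0c^x\mapsto c^x$ on $0C^x$ together with $c\mapsto c|(\X\setminus\{x\})$ on $\tail_x(C)$ define a bijection $\phi\colon 0C^x\cupdot\tail_x(C)\to C_x$ sending $0C^x$ onto $C^x$ and $\tail_x(C)$ onto $C_x\setminus C^x$; note every concept of $C$ agrees off the coordinate $x$ with its image under $\phi$ (and every concept of $1C^x$ agrees off $x$ with its $C^x$-trace).

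The heart of the argument is a strengthened inductive claim: $C_x$ has a representation map $\hat r\colon C_x\to X(C_x)$ whose restriction to the subclass $C^x$ is a representation map for $C^x$; equivalently, $\hat r(C^x)=X(C^x)$ (equivalently still, $\hat r$ maps the concepts of $C_x$ with a unique extension in $C$ precisely onto the shattered sets $Y$ for which $Y\cup\{x\}$ is not shattered). Non-clashing of $\hat r|_{C^x}$ is automatically inherited, so the content of the claim is only the set equality $\hat r(C^x)=X(C^x)$. Granting it, put $\tilde r:=\hat r|_{C^x}$ and define $r\colon C\to X(C)$ by $r(c):=\tilde r(c^x)\cup\{x\}$ for $c\in 1C^x$ (where $c^x:=c|(\X\setminus\{x\})\in C^x$) and $r(c):=\hat r(\phi(c))$ for $c\in 0C^x\cupdot\tail_x(C)$. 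Since $\tilde r$ bijects $C^x$ with $X(C^x)$ and $\hat r\circ\phi$ bijects $0C^x\cupdot\tail_x(C)$ with $X(C_x)$, the above decomposition of $X(C)$ shows $r$ is a bijection, and $|r(c)|\le d$.

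To verify non-clashing for distinct $c,c'$: if $c,c'\in 0C^x\cupdot\tail_x(C)$, apply non-clashing of $\hat r$ to $\phi(c)\ne\phi(c')$ and use that $c,c'$ agree off $x$ with $\phi(c),\phi(c')$ while $r(c)\cup r(c')\subseteq\X\setminus\{x\}$; if $c,c'\in 1C^x$, apply non-clashing of $\tilde r$ to their traces, which agree off $x$ with $c,c'$, noting $c,c'$ coincide at $x$. The only remaining case is $c\in 1C^x$, $c'\in 0C^x\cupdot\tail_x(C)$: if $c'(x)=0$ (in particular whenever $c'\in 0C^x$) then $c,c'$ differ at $x\in r(c)$; otherwise $c'(x)=1$ forces $c'\in\tail_x(C)$, whence $\phi(c')\in C_x\setminus C^x$ while $c^x\in C^x$, so $c^x\ne\phi(c')$, and non-clashing of $\hat r$ gives a coordinate $y\ne x$ in $\hat r(c^x)\cup\hat r(\phi(c'))$ separating $c^x$ from $\phi(c')$; since $\hat r(c^x)=\tilde r(c^x)\subseteq r(c)$ and $\hat r(\phi(c'))=r(c')$, this $y$ lies in $r(c)\cup r(c')$ and, as $c,c'$ agree off $x$ with $c^x,\phi(c')$, it separates $c$ from $c'$. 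Hence $r$ is a representation map, and by the result of Kuzmin and Warmuth (reproved for ample classes in Section~\ref{s:rep_maps}) it yields an unlabeled sample compression scheme of size $\vcdim(C)=d$.

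The main obstacle is the strengthened claim — that the representation map of the larger class $C_x$ can be chosen to respect the trace subclass $C^x$. This is not automatic, since a maximum class generally admits many mutually incompatible representation maps, so one must pin down the correct invariant and check that the gluing above preserves it when $C_x$ is in turn reduced along a further coordinate $x'\ne x$; morally the invariant encodes a compatibility of the representation map with the reduction operation, reminiscent of the commutation $(C^Y)_Z=(C_Z)^Y$ in Theorem~\ref{thm:ample1}(7). Verifying the stability of this invariant through the induction (or, alternatively, establishing the claim as a self-contained lemma by a separate combinatorial argument on maximum classes) is the technical core of the proof.
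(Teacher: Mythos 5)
Your induction scheme and gluing construction coincide essentially verbatim with the paper's own proof: the paper also proceeds by induction on $|\X|$, defines the new map by $r(c)=r_x(c_x)$ when $c_x\notin C^x$ or $x\notin c$, and $r(c)=r_x(c_x)\cup\{x\}$ when $c_x\in C^x$ and $x\in c$ (which is exactly your $\phi$-based splitting across $0C^x\cupdot 1C^x\cupdot\tail_x(C)$), and verifies non-clashing and bijectivity by the same case analysis you give. Your derivation of $X(C)=X(C_x)\cupdot\{Y\cup\{x\}:Y\in X(C^x)\}$ and the case-checking of the non-clashing condition are correct.

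However, you explicitly flag the one load-bearing step as unestablished, and that is precisely the genuine gap. Your ``strengthened claim'' --- that $C_x$ admits a representation map whose restriction to $C^x$ is itself a representation map for $C^x$ --- is exactly what the paper supplies via Proposition~\ref{lem-unique-source}. That proposition is stronger and cleaner than what you ask for: it shows that \emph{any} representation map $r'$ of the $(d-1)$-dimensional maximum subclass $D\subset C'$ extends to a representation map of the $d$-dimensional maximum class $C'$ by assigning to each $c\in C'\setminus D$ the support of the unique ``incomplete cube'' of which $c$ is the source. Because this statement extends an arbitrary representation map of the smaller class, the paper's main induction does not need a strengthened invariant at all: it simply applies the ordinary inductive hypothesis to $C^x$ (domain of size $n-1$), then applies Proposition~\ref{lem-unique-source} to extend $r^x$ to $r_x$ on $C_x$. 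Of the two escape routes you sketch, the paper takes your second one, a self-contained lemma; but the lemma is itself nontrivial, resting on the uniqueness of the source of each concept (a minimality argument that combines Lemmas~\ref{source_incomplete}, \ref{lem-source-restriction}, and~\ref{claim-source-sink}, using galleries and the tree structure of one-dimensional maximum reductions), plus a separate minimality argument for non-clashing. Nothing in your proposal suggests how to prove the unique-source property, and without it your argument does not close. Trying instead to carry the ``respects the trace subclass'' property as an induction invariant, as your first suggested route does, is more delicate than it looks, because when you descend from $C_x$ to its own hyperplane along another coordinate $x'$ the relevant pair becomes $((C_x)_{x'},(C_x)^{x'})$, and it is not clear that the invariant for $(C_x,C^x)$ propagates; the commutation $(C^Y)_Z=(C_Z)^Y$ alone does not immediately give this.
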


The crux of the proof of Theorem~\ref{th-maximum} is the following
proposition.  Let $C$ be a $d$-dimensional maximum class and let
$D\subseteq C$ be a $(d-1)$-dimensional maximum subclass.  A
\emph{missed simplex} for the pair $(C,D)$ is a simplex
$\sigma\in \uoX(C)\setminus \uoX(D)$.  Note that since $C$ and $D$ are
maximum, any missed simplex has size $d$.  An \emph{incomplete
  cube}~$Q$ for $(C,D)$ is a cube of $C$ such that $\supp(Q)$ is a
missed simplex.  For any incomplete cube $Q$ with $\sigma = \supp(Q)$,
$C|\sigma$ is a $d$-cube and $D|\sigma$ is a maximum class of
dimension $d-1$. Observe that any incomplete cube for $(C,D)$ is a
maximal cube in $C$ and by Corollary~\ref{cor:maxcub}, there is a
bijection between the missed simplices for $(C,D)$ and the incomplete
cubes for $(C,D)$.

Consider a missed simplex $\sigma$ for $(C,D)$ and the incomplete cube
$Q$ for $(C,D)$ such that $\supp(Q) = \sigma$. Since
$\lvert \sigma\rvert = d$, we have
$\lvert C|\sigma\rvert = \binom{d}{\leq d} = \binom{d}{\leq d-1} +1 =
\lvert D|\sigma\rvert + 1$. Since $Q|\sigma = C|\sigma$, there exists
a unique concept $c \in Q$ such that~$c|\sigma\notin D|\sigma$.  We
call $c$ the \emph{source} of $Q$ and we consider the
\emph{source-map} $s$ from the set of incomplete cubes for $(C,D)$ to
$C\setminus D$ where $s(Q)$ is the source of $Q$. In fact, we show in the
following proposition that the source-map is a bijection between the
incomplete cubes for $(C,D)$ and the concepts of $C\setminus D$.

% We denote $c$ by $s(Q)$, and call $c$ the \emph{source} of $Q$. In
% fact, the source-map is a bijection between missed simplices for
% $(C,D)$ and concepts of $C\setminus D$:

\begin{proposition}\label{lem-unique-source}
  Each $c\in C \setminus D$ is the source of a unique incomplete cube
  for $(C,D)$. Moreover, if $r': D \to \uoX(D)$ is a representation
  map for $D$ and $r:C\to \uoX(C)$ extends $r'$ by setting
  $r(c) = \supp(s^{-1}(c))$ for each $c\in C\setminus D$, then $r$ is
  a representation map for $C$.
\end{proposition}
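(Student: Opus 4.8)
The plan is to first establish the counting identity that forces the source map $s$ to be a bijection from the incomplete cubes onto $C\setminus D$, and then to verify the non-clashing condition for the extended map $r$ by a case analysis.

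For the first part, I would count incomplete cubes via missed simplices. Since $C$ is $d$-maximum and $D$ is $(d-1)$-maximum, we have $|C\setminus D| = \Phi_d(n)-\Phi_{d-1}(n-1)$ (here using that $D\subseteq 2^{\X'}$ for some ground set, but more cleanly: $D$ is a maximum subclass, so its shatter function is $\Phi_{d-1}$ on its own support). Rather than juggle ground sets, the cleaner route is: $|X(C)\setminus X(D)| = |X(C)|-|X(D)| = |C|-|D| = |C\setminus D|$, using ampleness ($|X(C)|=|C|$, $|X(D)|=|D|$, which holds since maximum classes are ample). So the number of missed simplices equals $|C\setminus D|$. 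Each missed simplex $\sigma$ is a shattered set of $C$ of size $d$, hence (by ampleness of $C$) there is at least one $\sigma$-cube in $C$; I would like exactly one, and its source to lie in $C\setminus D$. Uniqueness of the $\sigma$-cube: if $Q,Q'$ were two parallel $\sigma$-cubes, then by Corollary~\ref{cor:maxcub} neither is maximal, but $|\sigma|=d=\vcdim(C)$ forces any $\sigma$-cube to be maximal — contradiction. That $s(Q)\in C\setminus D$: $s(Q)$ is the unique element of $Q$ whose restriction to $\supp(Q)=\sigma$ is not in $D|\sigma$; since $Q|\sigma = 2^\sigma$ and $D|\sigma$ misses exactly one point (as $D|\sigma$ is $(d-1)$-maximum on a $d$-set, so has $2^d-1$ concepts), $s(Q)$ is well-defined and $s(Q)\notin D$. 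Conversely, each $c\in C\setminus D$ is the source of at least one incomplete cube: consider the cubes of $C$ through $c$; I would argue that some maximal cube $B\ni c$ has $\supp(B)$ of size $d$ and $c$ is the source, using that $c\notin D$ forces a "new" shattered set when $c$ is added over $D$. Combining: the map $Q\mapsto s(Q)$ from incomplete cubes to $C\setminus D$ is well-defined, and $\sigma\mapsto$(the unique $\sigma$-cube) is a bijection from missed simplices to incomplete cubes; a counting/injectivity argument then pins down that $s$ is a bijection, giving "each $c\in C\setminus D$ is the source of a unique incomplete cube."

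For the second part, given a representation map $r'$ for $D$, define $r$ on $C$ by $r|_D = r'$ and $r(c)=\supp(s^{-1}(c))$ for $c\in C\setminus D$. First, $r$ is a bijection onto $X(C)$: it is injective on $D$ (as $r'$ is) with image $X(D)$, injective on $C\setminus D$ (as $s^{-1}$ and $\supp$ are injective on incomplete cubes) with image exactly the missed simplices $X(C)\setminus X(D)$, and these images are disjoint. Then check non-clashing $c|(r(c)\cup r(c'))\ne c'|(r(c)\cup r(c'))$ in three cases. Case $c,c'\in D$: this is the non-clashing of $r'$, but I must note $r(c)=r'(c)$ as subsets of the smaller ground set extend to the same set in $\X$, so $c|_D$ and $c'|_D$ already disagree on $r'(c)\cup r'(c')$, hence $c,c'$ disagree there too. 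Case $c,c'\in C\setminus D$: let $Q=s^{-1}(c)$, $Q'=s^{-1}(c')$, so $r(c)=\supp(Q)$, $r(c')=\supp(Q')$; since $c=s(Q)$ is the source, $c|\supp(Q)$ is the unique pattern in $2^{\supp(Q)}$ not realized by $D|\supp(Q)$ — in particular $c'|\supp(Q)$ lies in $D|\supp(Q)$, so $c'|\supp(Q)\ne c|\supp(Q)$, giving disagreement already on $r(c)$. Case $c\in C\setminus D$, $c'\in D$: again $c'|\supp(Q)\in D|\supp(Q)$ whereas $c|\supp(Q)\notin D|\supp(Q)$, so $c,c'$ disagree on $r(c)=\supp(Q)\subseteq r(c)\cup r(c')$. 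In all cases the non-clashing condition holds, so $r$ is a representation map for $C$.

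The main obstacle I anticipate is the "existence" half of the unique-source claim — showing every $c\in C\setminus D$ is actually the source of some incomplete cube (not just that the counting matches). The clean way is to show the two-sided count: incomplete cubes biject with missed simplices (done above via Corollary~\ref{cor:maxcub} and the dimension bound), missed simplices number $|C\setminus D|$, and $s$ maps incomplete cubes \emph{into} $C\setminus D$; it then remains to see $s$ is injective, after which a cardinality argument forces surjectivity. Injectivity of $s$ should follow because if $s(Q)=s(Q')=c$ with $\supp(Q)\ne\supp(Q')$ both missed simplices, one derives that $c$ together with $D$ shatters a set larger than $d$, contradicting $\vcdim(C)=d$; making this last step precise — relating the two incomplete cubes at $c$ to a $(d{+}1)$-shattered set — is the delicate point and is where I would spend the most care, likely invoking that $C|(\supp(Q)\cup\supp(Q'))$ is maximum (Theorem~\ref{thm:maximum-char}(2)) and analyzing its structure near $c$.
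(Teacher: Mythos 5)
Your counting framework is correct and is a clean way to see the ``bijection'' shape of the first claim: the missed simplices number $|X(C)|-|X(D)|=|C|-|D|=|C\setminus D|$, each missed simplex $\sigma$ has a unique $\sigma$-cube in $C$ (by Corollary~\ref{cor:maxcub}, since $|\sigma|=d=\vcdim C$ forces $\sigma$-cubes to be maximal), and $s$ maps these incomplete cubes into $C\setminus D$. So one only needs injectivity or surjectivity of $s$. However, that is precisely where your proposal has a genuine gap, and your sketch of how to fill it does not work. You suggest that $s(Q)=s(Q')=c$ with $\supp(Q)\neq\supp(Q')$ would let you ``derive that $c$ together with $D$ shatters a set larger than $d$.'' But everything here lives inside $C$, which has VC-dimension exactly $d$, so no such shattered set can exist; the contradiction cannot come from a dimension bound. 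The paper's actual proof of uniqueness is considerably more delicate: it first shows (via Lemma~\ref{lem-source-restriction}, minimality, and restrictions/reductions) that a doubly-sourced $c$ forces $\dom(C)=\supp(Q)\mathbin{\dot\cup}\supp(Q')$, then establishes a parity statement (Lemma~\ref{claim-source-sink}: along an edge of $G(C\setminus D)$, ``source of $2$'' flips to ``source of $0$''), and finally reaches a contradiction by walking a gallery in the tree $C^{\sigma'}$ back to the root panel in $D$. None of this is anticipated by your ``restrict to $\sigma\cup\sigma'$ and analyze the maximum class'' remark, which remains a placeholder.

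There is a second gap, in the non-clashing verification for $c,c'\in C\setminus D$. You assert that since $c=s(Q)$ realizes the unique forbidden pattern on $\supp(Q)$, ``in particular $c'|\supp(Q)$ lies in $D|\supp(Q)$.'' This does not follow: $c'\in C\setminus D$ can perfectly well also realize the forbidden pattern on $\supp(Q)$, i.e.\ $c'|\supp(Q)=c|\supp(Q)$, in which case your argument gives no disagreement on $r(c)\cup r(c')$. (Many concepts of $C$ share any given pattern on a $d$-set; only one of them is the source of the unique $\supp(Q)$-cube.) The paper handles exactly this case by a separate minimality argument: if $c,c'$ clashed, pass to a minimal counterexample, use Lemma~\ref{lem-source-restriction}(i) to conclude $\supp(Q)\cup\supp(Q')=\X$, and then $c\neq c'$ forces a coordinate of disagreement inside $\supp(Q)\cup\supp(Q')$. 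Your two easy cases ($c,c'\in D$; and $c\in C\setminus D$, $c'\in D$) are correct, and your observation that $r$ is a bijection onto $X(C)$ is fine, but the two gaps above sit exactly at the crux of the proposition.
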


\begin{proof}[Proof of Theorem~\ref{th-maximum}.]
    Following the general recursive construction idea of~\cite{KuWa}, we derive a
  representation map for $C$ by induction on $\lvert \U\rvert$.  If
  $\lvert U \rvert = 0$, then $C$ is a $0$-dimensional maximum class
  containing a unique concept $c$. In this case, we set
  $r(c)=\emptyset$ and $r$ is clearly a representation map for $C$.
  For the induction step, (see
  Figure~\ref{fig-representation-map-max}), pick $x\in \U$ and
  consider the maximum classes $C_x$ and $C^x\subset C_x$ with domain
  $\U\setminus\{x\}$.  By induction, $C^x$ has a representation map
  $r^x$.  Use Proposition~\ref{lem-unique-source} to extend $r^x$ to a
  representation map $r_x$ of $C_x$. Define a map $r$ on $C$ as
  follows:
  \begin{equation*}
    r(c) = \begin{cases}
      r_x(c_x) & \text{if } c_x \notin C^x \text{ or } x\notin c,\\
      r_x(c_x) \cup\{x\} & \text{if } c_x \in  C^x \text{ and } x \in c.
    \end{cases}
  \end{equation*}
  It is easy to verify that $r$ is non-clashing: indeed, if
  $c'\neq c''\in C$ satisfy $c'_x\neq c''_x$ then
  $c'_x\vert r_x(c'_x)\cup r_x(c''_x) \neq c''_x\vert r_x(c'_x)\cup
  r_x(c''_x)$.  Since
  $r_x(c'_x)\subseteq r(c'), r_x(c''_x)\subseteq r(c'')$, it follows
  that also $c',c''$ disagree on $r(c')\cup r(c'')$.  Else,
  $c'_x = c''_x\in C^x$ and $c'(x)\neq c''(x)$.  In this case,
  $x\in r(c')\cup r(c'')$ and therefore $c',c''$ disagree on
  $r(c')\cup r(c'')$.

  It remains to show that $r$ is a bijection between $C$ and
  $\uoX(C) = {\binom{\U}{\leq d}}$.  It is easy to verify that $r$ is
  injective. So, it remains to show that $\lvert r(c)\rvert \leq d$,
  for every $c\in C$. This is clear when $c_x\notin C^x$ or
  $x\notin c$.  If $c_x\in C^x$ and $x\in c$, then
  $r(c) = r^x(c_x)\cup\{x\}$ and $\lvert r^x(c_x)\rvert \leq d-1$
  (since $C^x$ is $(d-1)$-dimensional).  Hence,
  $\lvert r(c)\rvert \leq d$ as required, concluding the proof.
\end{proof}

\begin{figure}[ht]
  \includegraphics[page=7,scale=0.7]{figs-lopsided.pdf}%
  \caption{Illustrating the proof of Theorem~\ref{th-maximum} (when
    $x=5$): to construct a representation map for $C$, we inductively
    construct a representation map $r^x$ for $C^x$, extend it to a
    representation map $r_x$ for $C_x$ using
    Proposition~\ref{lem-unique-source} with $D = C^x$, and finally
    extend it to a representation map $r$ for $C$.  The representation
    maps $r^x,r_x,$ and $r$ are defined by the coordinates of the
    underlined bits or by the labels of the outgoing edges (see
    Theorem~\ref{t:local-to-global}).}%
  \label{fig-representation-map-max}
\end{figure}

\begin{proof}[Proof of Proposition~\ref{lem-unique-source}.]
  To prove the proposition, we first prove that incomplete cubes and
  their sources are preserved by restrictions and reductions
  (Lemma~\ref{lem-source-restriction}). To show that each concept of
  $C \setminus D$ is the source of one incomplete cube, we consider a
  minimal counterexample and we establish that in this counterexample
  each concept is the source of at most $2$ incomplete
  cubes. Moreover, if a concept is the source of $0$ (respectively,
  $1, 2$) incomplete cubes, then any of its neighbors in $G(C)$ is the
  source of $2$ (respectively, $1, 0$) incomplete cubes
  (Lemma~\ref{claim-source-sink}). Using this and the notions of
  galleries and the associated trees defined below, we establish the
  first assertion of the proposition. Then using
  Lemma~\ref{lem-source-restriction} and the first assertion, we
  establish the second part of the proposition. We also give a
  geometric characterization of sources
  (Lemma~\ref{source_incomplete}) that is used later to estimate the
  complexity of computing the representation map (see
  Remark~\ref{rem-complexity}).

  Call a maximal cube of $C$ a \emph{chamber} and a facet of a chamber
  a \emph{panel} (a $\sigma'$-\emph{panel} if its support is
  $\sigma'$).  Any $\sigma'$-panel in $C$ satisfies
  $\lvert \sigma'\rvert = d-1$ and $\sigma' \in \uoX(D)$. Recall that
  a gallery between two parallel cubes $Q',Q''$ (say, two
  $\sigma'$-cubes) is any simple path of $\sigma'$-cubes
  $(Q_0:=Q',Q_1,\ldots, Q_k:=Q'')$, where $Q_i\cup Q_{i+1}$ is a
  $d$-cube.  By Theorem~\ref{thm:ample2}(3), any two parallel cubes of
  $C$ are connected by a gallery in $C$.  Since $D$ is a maximum
  class, any panel of $C$ is parallel to a panel that is a maximal
  cube of $D$.  Also for any maximal simplex $\sigma' \in \uoX(D)$,
  the class $C^{\sigma'}$ is a maximum class of dimension~1 and
  $D^{\sigma'}$ is a maximum class of dimension 0 (single concept).
  Thus $C^{\sigma'}$ is a tree (e.g.~\cite[Lemma 7]{GaWe}) which
  contains the unique concept $c\in D^{\sigma'}$.  We call $c$
  \emph{the root} of $C^{\sigma'}$ and we denote by $P(\sigma')$ the
  unique~$\sigma'$-panel $P$ of $D$ such that $P^{\sigma'}=c$.

  The next result provides a geometric characterization of sources:

  \begin{lemma}\label{source_incomplete}
    Let $\sigma$ be a missed simplex of the pair $(C,D)$.  A concept
    $c\in Q$ is the source of the unique $\sigma$-cube $Q$ if and only
    if for any $x\in \sigma$, if $\sigma':=\sigma \setminus \{x\}$ and
    $P',P''$ are the two $\sigma'$-panels of $Q$ with $c\in P''$, then
    $(P')^{\sigma'}$ is on the path between $(P'')^{\sigma'}$ and the root
    $(P(\sigma'))^{\sigma'}$ of the tree $C^{\sigma'}$.
  \end{lemma}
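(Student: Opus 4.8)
The plan is to establish the equivalence locally, one coordinate $x\in\sigma$ at a time, reducing the condition about the root panel $P_0(\sigma\setminus\{x\})$ to the position of a single vertex in the tree ${\mathcal T}(\sigma\setminus\{x\})=C^{\sigma\setminus\{x\}}$. I would first record that the $\sigma$-cube $Q$ is unique: since $\sigma\in X(C)=\uX(C)$, the ample class $C$ contains a $\sigma$-cube, and being a $d$-cube it is a chamber (as $\dim Q(C)=d$), so by Corollary~\ref{cor:maxcub} it is the only cube of $C$ with support $\sigma$. Fix $x\in\sigma$, set $\sigma':=\sigma\setminus\{x\}$, and let $P',P''$ be the two $\sigma'$-panels of $Q$ with $c\in P''$, so that $Q=P'\cup P''$. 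As $P''$ is a $\sigma'$-panel of $C$, the panel property yields $\sigma'\in X(D)$, and since $|\sigma'|=d-1=\vcdim(D)$ it is a maximal simplex of $X(D)$; hence ${\mathcal T}(\sigma')=C^{\sigma'}$ is a tree, rooted at the unique concept of $D^{\sigma'}$, which equals $\tg(P_0(\sigma'))$ with $P_0(\sigma')\subseteq D$.

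The key construction is a ``sign'' on the tree: regarding $C^{\sigma'}$ as a subgraph of the cube on $\X\setminus\sigma'$, each vertex $\tg(P)$ has a well-defined value $\tg(P)(x)$ at $x$ (the common $x$-value of all concepts of the $\sigma'$-cube $P$), and each tree edge is labeled by the coordinate flipped along it. Now observe that an $x$-labeled edge $\{\tg(P_1),\tg(P_2)\}$ of $C^{\sigma'}$ is the same datum as a $\sigma$-cube $P_1\cup P_2$ of $C$; by the uniqueness of $Q$ there is thus exactly one $x$-labeled edge, namely $\{\tg(P'),\tg(P'')\}$. Removing it splits $C^{\sigma'}$ into the subtree $T''$ containing $\tg(P'')$ and the subtree $T'$ containing $\tg(P')$, and since crossing any other edge preserves the $x$-coordinate, it is constant on each part: equal to $c(x)$ on $T''$ (because $\tg(P'')=c|(\X\setminus\sigma')$) and to $1-c(x)$ on $T'$. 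Consequently, ``$P'$ lies between $P''$ and $P_0(\sigma')$ in ${\mathcal T}(\sigma')$'' holds if and only if $\tg(P_0(\sigma'))\in T'$, i.e.\ if and only if $\tg(P_0(\sigma'))(x)\neq c(x)$.

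It remains to express the source in the same terms. The restriction $D|\sigma$ is a maximum class on the $d$-element set $\sigma$ of VC-dimension $d-1$ (each $\sigma\setminus\{y\}$ supports a facet of the chamber $Q$, hence is shattered by $D$), so $|D|\sigma|=\Phi_{d-1}(d)=2^d-1$ and $D|\sigma$ omits a single element $m$ of $2^\sigma$. For each $y\in\sigma$ the $(\sigma\setminus\{y\})$-cube $P_0(\sigma\setminus\{y\})\subseteq D$ puts the whole half $\{w\in 2^\sigma: w(y)=\tg(P_0(\sigma\setminus\{y\}))(y)\}$ inside $D|\sigma$, forcing $m(y)=1-\tg(P_0(\sigma\setminus\{y\}))(y)$; since this holds for every $y$, it pins down $m$. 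As $Q|\sigma=2^\sigma$ and $s(Q)$ is the unique concept of $Q$ with $s(Q)|\sigma\notin D|\sigma$, we conclude $s(Q)|\sigma=m$. Combining with the previous paragraph, $c=s(Q)$ iff $c(y)=1-\tg(P_0(\sigma\setminus\{y\}))(y)$ for all $y\in\sigma$, which is exactly the asserted condition that for every $x\in\sigma$ the panel $P'$ lies between $P''$ and $P_0(\sigma\setminus\{x\})$.

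The step I expect to need the most care is the correspondence between $x$-labeled edges of $C^{\sigma'}$ and $\sigma$-cubes of $C$, and hence the uniqueness of that edge, together with keeping the sign function and the betweenness relation consistently oriented; the rest is bookkeeping built on the panel property, the count $|D|\sigma|=2^d-1$ (maximality of restrictions of maximum classes), and the tree structure of $C^{\sigma'}$, all already available.
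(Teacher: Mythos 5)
Your proof is correct, and it rests on the same key geometric observation as the paper's: the edge $\{\tg(P'),\tg(P'')\}$ is the \emph{unique} $x$-labeled edge of the tree $C^{\sigma'}$ (because $Q$ is the unique $\sigma$-cube), so the $x$-coordinate of $\tg(\cdot)$ is constant on each of the two subtrees obtained by deleting that edge. The organization, though, is genuinely different. The paper argues only the ``only if'' direction, by contradiction: it walks the gallery from $P_0(\sigma')$ to $P''$, notes that no chamber along it except the last has $x$ in its support, and concludes that some $c_0\in P_0(\sigma')\subseteq D$ has $c_0|\sigma=c|\sigma$, so $c$ cannot be the source; the converse direction (betweenness for every $x$ forces $c=s(Q)$) is left implicit, following from the fact that the $d$ betweenness constraints pin $c$ down uniquely inside the $d$-cube $Q$. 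You instead introduce the sign function on $C^{\sigma'}$ and, separately, compute the unique forbidden label $m\in 2^\sigma\setminus(D|\sigma)$ coordinate-by-coordinate via the root panels $P_0(\sigma\setminus\{y\})$, then show $c=s(Q)$ iff $c|\sigma=m$. This gives both directions of the equivalence at once and makes explicit the ``forbidden label'' bookkeeping that the paper only uses informally (via the phrase ``missed sample''); it also requires you to verify that $D|\sigma$ is maximum of VC-dimension exactly $d-1$, for which you correctly invoke that each $\sigma\setminus\{y\}$ is in $X(D)$ (the paper records this just before the lemma: every panel of $C$ is parallel to one that is a maximal cube of $D$). The two arguments buy slightly different things: the paper's is shorter and more pictorial; yours is more self-contained, handles the biconditional symmetrically, and isolates the algebraic content in the single identity $s(Q)|\sigma=m$.
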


  \begin{proof}
    Observe that there exists a unique concept $c \in Q$ such that for
    any $x\in \sigma$, if $\sigma':=\sigma \setminus \{x\}$ and
    $P',P''$ are the two $\sigma'$-panels of $Q$ with $c\in P''$, then
    $(P')^{\sigma'}$ is on the path between $(P'')^{\sigma'}$ and the
    root $(P(\sigma'))^{\sigma'}$ of the tree $C^{\sigma'}$. Since $Q$
    contains a unique source, it is enough to show that this unique
    concept $c$ is the source.

    % Consider $c \in Q$, $x \in \sigma$ and $P', P''$, the two
    % $\sigma'$-panels of $Q$ with $c \in P''$. Suppose that that the
    % unique gallery $L$ between $P'$ and the root $P(\sigma')$ passes
    % via $P''$, i.e.,
    % $L = (P_0 = P(\sigma'), P_1, \ldots, P_{m-1} = P'', P_m = P')$.
    % Since $Q$ is a maximal cube, $x$ is not in the domain of the
    % chamber $P_i\cup P_{i+1}$ for $i < m-1$. This implies that there
    % exists $c_0 \in P(\sigma') \subseteq D$ such that
    % $c_0|\sigma = c|\sigma$, and consequently, $c|\sigma$ is not the
    % missed sample for $\sigma$ and $c$ cannot be the source of $Q$.

    Assume by way of contradiction that this is not the case, i.e.,
    that $c$ is the source of $Q$ and that there exists $x \in \sigma$
    and two $\sigma'$-panels $P', P''$ with
    $\sigma' = \sigma\setminus\{x\}$ and $c \in P''$ such that the
    unique gallery $L$ between $P'$ and the root $P(\sigma')$ passes
    via $P''$, i.e.,
    $L = (P_0 = P(\sigma'), P_1, \ldots, P_{m-1} = P'', P_m = P')$.
    Since $Q$ is a maximal cube, by Corollary~\ref{cor:maxcub}, $x$ is
    not in the domain of the chamber $P_i\cup P_{i+1}$ for $i <
    m-1$. This implies that there exists
    $c_0 \in P(\sigma') \subseteq D$ such that
    $c_0|\sigma = c|\sigma$, and consequently, $c|\sigma$ is not the
    missed sample for $\sigma$.
  \end{proof}

  In the next lemma, we show that incomplete cubes and their sources
  are preserved by restrictions and reductions.

  \begin{lemma}\label{lem-source-restriction}
    Let $Q$ be an incomplete cube for $(C,D)$ with source $s$ and
    support $\sigma$, and let $x,y \in \U$ such that $x \notin \sigma$
    and $y\in \sigma$. Then, the following holds:
    \begin{enumerate}[(i)]
    \item $Q_x$ is an incomplete cube for $(C_x,D_x)$ whose source is
      $s_x$.
    \item $Q^y$ is an incomplete cube for $(C^y,D^y)$ whose source is
      $s^y$.
    \end{enumerate}
  \end{lemma}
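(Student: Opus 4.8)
The plan is to prove Lemma~\ref{lem-source-restriction} by unwinding the definitions of ``incomplete cube'' and ``source'', and checking that both notions behave well under the two operations $C\mapsto C_x$ (deleting a coordinate not in $\sigma$) and $C\mapsto C^y$ (passing to the $y$-hyperplane for $y\in\sigma$). Throughout I will use that $C_x,C^y$ are maximum (Theorem~\ref{thm:maximum-char}) of VC-dimension $d$ and $d-1$ respectively, and likewise for $D_x,D^y$, so the very same source/missed-simplex terminology is available for the pairs $(C_x,D_x)$ and $(C^y,D^y)$. The natural characterization to invoke is Lemma~\ref{source_incomplete}: a concept $c\in Q$ is the source of the $\sigma$-cube $Q$ iff for each $x\in\sigma$ the ``near'' $(\sigma\setminus\{x\})$-panel $P''$ containing $c$ lies on the far side of the ``far'' panel $P'$, relative to the root $P_0(\sigma\setminus\{x\})$ of the tree $\mathcal T(\sigma\setminus\{x\})$. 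So the argument will track what restriction does to missed simplices, to the cube $Q$, and to these roots and trees.

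For part (i), first I would check that $\sigma$ is a missed simplex for $(C_x,D_x)$: since $x\notin\sigma$, restricting to $\X\setminus\{x\}$ does not touch $\sigma$, and $X(C_x)|\sigma = X(C)|\sigma$, $X(D_x)|\sigma = X(D)|\sigma$ on the relevant part — more precisely $\sigma\in X(C)\setminus X(D)$ translates directly to $\sigma\in X(C_x)\setminus X(D_x)$ because shattering a subset of $\X\setminus\{x\}$ is unaffected by deleting $x$. Then $Q_x$ is a $\sigma$-cube of $C_x$ (restriction of a cube is a cube with the same support when the deleted coordinate is outside the support), hence an incomplete cube for $(C_x,D_x)$; its source is the unique concept of $Q_x$ not lying in $D_x|\sigma$. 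To see this source is $s_x$, observe $s_x|\sigma = s|\sigma$ and $s|\sigma$ is precisely the missed label for $\sigma$ in $C$, which is the missed label for $\sigma$ in $C_x$ as well; so $s_x\notin D_x|\sigma$, and by uniqueness $s_x$ is the source of $Q_x$. Alternatively, and perhaps more robustly, one uses Lemma~\ref{source_incomplete}: for each $y\in\sigma$, the trees $\mathcal T_{C_x}(\sigma\setminus\{y\})$ and $\mathcal T_C(\sigma\setminus\{y\})$ and their roots correspond under restriction (the root is determined by $D^{\sigma\setminus\{y\}}$, which restricts consistently since $x\notin\sigma$), and the betweenness of panels is preserved, so the characterizing condition for $s$ transfers verbatim to $s_x$ in $C_x$.

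For part (ii), with $y\in\sigma$, set $\sigma':=\sigma\setminus\{y\}$, so $|\sigma'| = d-1$ and $\sigma'$ is the support of the cube $Q^y$ in $C^y$. I would first argue $\sigma'$ is a missed simplex for $(C^y,D^y)$: since $C^y$ is maximum of dimension $d-1$, a $(d-1)$-set $\sigma'$ is in $X(C^y)$ iff $C^y$ contains a $\sigma'$-cube; and $Q^y$ is such a cube. For the ``missed'' part, $\sigma'\notin X(D^y)$: if $D^y$ shattered $\sigma'$, then $D$ would contain a $\sigma'$-cube inside $N_y(D)$, which — together with the root structure — would force $\sigma = \sigma'\cup\{y\}\in X(D)$ (using that $D$ is ample, Theorem~\ref{thm:ample1}(3), so a shattered set is strongly shattered), contradicting that $\sigma$ is missed. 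Then $Q^y$ is an incomplete cube for $(C^y,D^y)$, and its source is the unique concept of $Q^y$ not in $D^y|\sigma'$; to identify it with $s^y$, I would again appeal to Lemma~\ref{source_incomplete}, this time noting that for $z\in\sigma'$ the tree $\mathcal T_{C^y}(\sigma'\setminus\{z\})$ is obtained from $\mathcal T_C(\sigma\setminus\{z\})$ by the hyperplane map (galleries of $(\sigma\setminus\{z\})$-cubes project to galleries of $(\sigma'\setminus\{z\})$-cubes, roots to roots), so the between-ness conditions characterizing $s$ in $Q$ restrict to those characterizing $s^y$ in $Q^y$.

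The main obstacle I expect is part (ii): unlike deleting a coordinate outside the support, passing to the $y$-hyperplane genuinely changes dimensions and reshapes the local cube structure, so one must be careful that (a) the roots/trees $\mathcal T(\sigma'\setminus\{z\})$ in $C^y$ are the images of the corresponding ones in $C$ and not something new, and (b) the ``missed label'' bookkeeping is consistent — in particular that the projected concept $s^y$ is still the one avoiding $D^y|\sigma'$. I would handle this by working with the explicit correspondence between $\sigma'$-panels of $Q$ and $\sigma'$-panels of $Q^y$ induced by $c\mapsto c^y$, and checking it commutes with the gallery/root constructions; the facts that $C,D,C^y,D^y$ are all maximum and that $C^y\subseteq C_y$ keep everything rigid enough that this is bookkeeping rather than a new idea.
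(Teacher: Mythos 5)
Your proposal is correct, and part (i) matches the paper's argument essentially verbatim: since $x\notin\sigma$, $D|\sigma=D_x|\sigma$, so $s_x|\sigma=s|\sigma$ is still the unique missed label and $s_x=s(Q_x)$. For part (ii), however, you and the paper diverge, and the divergence is worth noting. You propose identifying $s^y$ as the source of $Q^y$ by re-deploying the geometric characterization of Lemma~\ref{source_incomplete} and tracking how the trees $\mathcal T(\cdot)$, their roots, and panel-betweenness transform under the hyperplane map $c\mapsto c^y$. This can be made to work (indeed, reductions compose, so $(C^y)^{\sigma'\setminus\{z\}}=C^{\sigma\setminus\{z\}}$ and the trees and roots are literally the same objects), but you flag it yourself as delicate ``bookkeeping,'' and you do not carry it out. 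The paper instead uses the primitive definition of source directly: if $s^y|\sigma'\in D^y|\sigma'$, then the $y$-cube of $D$ witnessing this would put both $y$-extensions of $s^y|\sigma'$ — namely $s|\sigma$ and $(s\Delta\{y\})|\sigma$ — into $D|\sigma$, contradicting $s|\sigma\notin D|\sigma$. That is a two-line argument requiring no tree machinery at all, and it makes the ``main obstacle'' you identify in (ii) evaporate. Similarly, your argument that $\sigma'\notin X(D^y)$ (via lifting a hypothetical $\sigma'$-cube of $D^y$ to a $\sigma$-cube of $D$) is correct but can be replaced by the immediate observation that $|\sigma'|=d-1>\vcdim(D^y)=d-2$. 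In short: your plan works, but invoking Lemma~\ref{source_incomplete} for (ii) adds genuine technical overhead that the definitional route avoids entirely, and you would need to flesh out the tree/root correspondence to have a complete proof.
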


  \begin{proof}
    Item~(i): $C_x$ and $D_x$ are maximum classes on
    $\U \setminus \{x\}$ of VC-dimensions $d$ and $d-1$, and
    $\supp(Q_x)=\sigma$. Therefore, $Q_x$ is an incomplete cube for
    $(C_x,D_x)$.  By definition, $s$ is the unique concept $c\in Q$
    such that $c|\sigma \notin D| \sigma$. Since $x \notin \sigma$,
    $D| \sigma = D_x | \sigma$ and $s_x$ is the unique concept $c$ of
    $Q_x$ so that $c | \sigma \notin D_x| \sigma$, i.e., $s_x$ is the
    source of $Q_x$.

    Item~(ii): $C^y$ and $D^y$ are maximum classes on
    $\U \setminus \{y\}$ of VC-dimensions $d-1$ and $d - 2$.  Since
    $y \in \supp(Q)$, $\dim(Q^y) = d-1$ and $Q^y$ is an incomplete
    cube for $(C^y,D^y)$.  Let $\sigma'=\sigma\setminus\{y\}$. It
    remains to show that $s^y \vert \sigma' \notin D^y \vert \sigma'$.
    Indeed, otherwise both extensions of $s^y$ in $\sigma$,
    namely~$s,s\Delta\{y\}$, are in $D\vert \sigma$ which contradicts
    that $s=s(Q)$.
  \end{proof}

  Next we prove that each concept of $C \setminus D$ is the source of
  a unique incomplete cube. Since there is a bijection between
  incomplete cubes and missed simplices, the number of incomplete
  cubes is $|X(C) \setminus X(D)| = |C\setminus D|$. Therefore, it is
  sufficient to show that each concept of $C \setminus D$ is the
  source of at most one incomplete cube.  Assume the contrary and let
  $(C,D)$ be a counterexample minimizing the size of $\U$. First, if a
  concept $c \in C\setminus D$ is the source of two incomplete cubes
  $Q_1,Q_2$, then $\dom(C) = \supp(Q_1) \cupdot\supp(Q_2)$.  Indeed,
  let $\sigma_1 = \supp(Q_1)$ and $\sigma_2 = \supp(Q_2)$.  By
  Lemma~\ref{lem-source-restriction}(i) and minimality of $(C,D)$,
  $\dom(C) = \sigma_1 \cup \sigma_2$.  Indeed, if there exists
  $x \notin \sigma_1 \cup \sigma_2$, $c_x$ is the source of the
  incomplete cubes $(Q_1)_x$ and $(Q_2)_x$ for $(C_x, D_x)$, contrary
  to minimality of $(C,D)$.  By Lemma~\ref{lem-source-restriction}(ii)
  and minimality of $(C,D)$, $\sigma_1 \cap \sigma_2 =
  \varnothing$. Indeed, if there exists $x \in\sigma_1 \cap \sigma_2$,
  $c^x$ is the source of the incomplete cubes $Q_1^x$ and $Q_2^x$ for
  $(C^x, D^x)$, contrary to minimality of $(C,D)$.

  Next we assert that any $c\in C\setminus D$ is the source of at most
  2 incomplete cubes. Indeed, let $c$ be the source of incomplete
  cubes $Q_1, Q_2, Q_3$. Then
  $\dom(C) = \supp(Q_1) \cupdot \supp(Q_2)$, i.e.,
  $\supp(Q_2) = \dom(C) \setminus \supp(Q_1)$. For similar reasons,
  $\supp(Q_3) = \dom(C) \setminus \supp(Q_1) = \supp(Q_2)$.  Thus, by
  Corollary~\ref{cor:maxcub}, $Q_2=Q_3$.

  \begin{lemma}\label{claim-source-sink}
    Let $c',c''\in C\setminus D$ be neighbors and let
    $c'\Delta c'' =\{x\}$.  Then, $c'$ is the source of 2 incomplete
    cubes if and only if $c''$ is the source of 0 incomplete cubes.
    Consequently, every connected component in $G(C\setminus D)$
    either contains only concepts $c$
    with~$\lvert s^{-1}(c)\rvert \in\{0,2\}$, or only concepts $c$
    with~$\lvert s^{-1}(c)\rvert = 1$.
  \end{lemma}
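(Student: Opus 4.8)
The plan is to analyze the local situation around the edge $c'c''$ in $G(C\setminus D)$ and exploit that incomplete cubes containing $c'$ or $c''$ are tightly constrained by the preceding structural observations (each such concept is the source of at most $2$ incomplete cubes, and when it is the source of $2$, their supports partition $\dom(C)$). First I would set $d=\vcdim(C)$ and recall from Lemma~\ref{source_incomplete} the geometric criterion: $c$ is the source of a $\sigma$-cube $Q$ iff for every $x\in\sigma$, writing $\sigma'=\sigma\setminus\{x\}$ and letting $P',P''$ be the two $\sigma'$-panels of $Q$ with $c\in P''$, the panel $P'$ lies between $P''$ and the root $P_0(\sigma')$ of the tree $\mathcal{T}(\sigma')=C^{\sigma'}$. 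The key point is that the edge $c'c''$ is labeled $x$, so $x\in\supp(Q)$ for any incomplete cube $Q$ through the edge, and restricting along $x$ via Lemma~\ref{lem-source-restriction}(ii) reduces $2$-source and $0$-source statements to statements about $(C^x,D^x)$, a pair on a smaller domain.

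The main step is the equivalence ``$c'$ is the source of $2$ incomplete cubes $\iff$ $c''$ is the source of $0$.'' For the forward direction, suppose $c'$ is the source of incomplete cubes $Q_1,Q_2$; then by the partition fact $\dom(C)=\supp(Q_1)\,\dot\cup\,\supp(Q_2)$, so $x$ lies in exactly one of them, say $x\in\supp(Q_1)$. Then $x\notin\supp(Q_2)$, so $c''=c'\Delta\{x\}$ also belongs to (the parallel copy of) $Q_2$; I would argue that $c''$ cannot be the source of $Q_2$ because $s(Q_2)=c'$ is the unique concept of $Q_2$ whose restriction to $\supp(Q_2)$ is the missed sample, and $c''|\supp(Q_2)=c'|\supp(Q_2)$ would force $c''=c'$. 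For any hypothetical incomplete cube $Q$ with source $c''$, we must have $x\in\supp(Q)$ (otherwise $c'$ lies in the same cube with $c'|\supp(Q)=c''|\supp(Q)$, contradicting uniqueness of the source as above), so $Q^x$ is an incomplete cube for $(C^x,D^x)$ with source ${c''}^x={c'}^x$; but $Q_1^x$ is also an incomplete cube for $(C^x,D^x)$ with source ${c'}^x$, and these would have to be equal by Corollary~\ref{cor:maxcub} applied to $C^x$ (as in the ``at most $2$ sources'' argument already given), forcing $\supp(Q)=\supp(Q_1)$ and hence $Q=Q_1$ (same support, both maximal in the relevant sense) — contradicting that $Q$ has source $c''\ne c'$. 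Hence $c''$ is the source of $0$ incomplete cubes. For the converse, suppose $c''$ is the source of $0$ incomplete cubes; since $c''\in C\setminus D$, it is the source of some cube by the minimality setup's running assumption only if... — rather, I would instead argue directly: by the forthcoming global claim or by a counting/parity argument, and more cleanly, apply the forward direction with the roles of $c',c''$ reversed together with the ``at most $2$'' dichotomy: $c'$ is the source of $0,1$, or $2$ cubes; if it were $0$ then neither endpoint is a source, but then along the $x$-hyperplane ${c'}^x={c''}^x$ is a concept of $C^x\setminus D^x$ which, by minimality of $(C,D)$, is the source of a unique incomplete cube $\widehat Q$ of $(C^x,D^x)$, and lifting $\widehat Q$ back (adjoining the $x$-direction if $\dim$ permits) would exhibit an incomplete cube of $(C,D)$ with source $c'$ or $c''$, a contradiction; if $c'$ were the source of exactly $1$, a symmetric lift/restrict argument along $x$ shows the unique cube's support avoids $x$, whence $c''$ shares its source-defining restriction — again contradiction. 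So $c'$ is the source of exactly $2$.

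The consequence then follows by a straightforward connectivity argument: the map $c\mapsto |s^{-1}(c)|$ takes values in $\{0,1,2\}$ on $C\setminus D$, and the equivalence just proved says that across any edge of $G(C\setminus D)$ the value $1$ is adjacent only to $1$, while $0$ and $2$ are adjacent only to each other; hence on each connected component of $G(C\setminus D)$ either every value is $1$, or every value lies in $\{0,2\}$. I expect the main obstacle to be the converse direction of the equivalence — pinning down that a $0$-source concept forces its neighbour to be a $2$-source — since the forward direction flows cleanly from the partition property, whereas the converse seems to need either the minimality hypothesis on $(C,D)$ to produce a unique incomplete cube in the $x$-hyperplane and then a careful lifting argument (checking dimensions so that adjoining the $x$-coordinate yields a genuine cube of $C$), or a global parity/double-counting argument comparing $|C\setminus D|$ with $\sum_c |s^{-1}(c)|$ and the number of incomplete cubes. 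I would present the lifting argument, being careful that lifting an incomplete cube $\widehat Q$ of $(C^x,D^x)$ of support $\widehat\sigma$ gives, since $x\notin\widehat\sigma$, a cube of $C$ of support $\widehat\sigma$ or $\widehat\sigma\cup\{x\}$, and tracking which concept becomes the source in each case.
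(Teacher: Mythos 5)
Your overall strategy --- invoking minimality of $(C,D)$ through the reduction $(C^x,D^x)$ and restriction $(C_x,D_x)$ and lifting back --- is the right one, but the execution contains a false step in the forward direction and leaves the converse, which you yourself flag as the harder part, essentially unproved. The paper's proof is shorter and handles both directions at once by a counting argument. By minimality, $(c')^x=(c'')^x$ is the source of a \emph{unique} incomplete cube $\widehat Q_1$ for $(C^x,D^x)$, and $c'_x=c''_x$ is the source of a \emph{unique} incomplete cube $\widehat Q_2$ for $(C_x,D_x)$. Since $C$ and $C_x$ are maximum of dimension $d$, for a $d$-set $\sigma''$ not containing $x$ there is a unique $\sigma''$-cube in each of $C$ and $C_x$, and cubes of $C$ whose support contains $x$ correspond bijectively (via $Q\mapsto Q^x$) to cubes of $C^x$; hence $\widehat Q_1$ and $\widehat Q_2$ lift to unique incomplete cubes $Q_1$ (with $x\in\supp(Q_1)$) and $Q_2$ (with $x\notin\supp(Q_2)$) for $(C,D)$. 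Lemma~\ref{lem-source-restriction} gives $s(Q_1),s(Q_2)\in\{c',c''\}$, and conversely any incomplete cube with source in $\{c',c''\}$ reduces or restricts to $\widehat Q_1$ or $\widehat Q_2$ and so equals $Q_1$ or $Q_2$. Therefore $\lvert s^{-1}(c')\rvert+\lvert s^{-1}(c'')\rvert=2$, from which the equivalence (and the parity dichotomy on connected components) follows immediately, without separating ``forward'' and ``converse'' cases or hedging about lifts.

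The concrete flaw in your forward direction: to rule out an incomplete cube $Q$ with $s(Q)=c''$ and $x\notin\supp(Q)$ you write that ``$c'$ lies in the same cube with $c'\vert\supp(Q)=c''\vert\supp(Q)$, contradicting uniqueness of the source.'' Since $x\notin\supp(Q)$ and $c'\Delta c''=\{x\}$, the concept $c'$ does \emph{not} lie in $Q$ --- it lies in the parallel $\supp(Q)$-cube --- so the uniqueness of the source inside $Q$ cannot be invoked. The correct fix is that a maximum class of VC-dimension $d$ has a \emph{unique} cube with any given $d$-element support (its reduction to that support is a single concept), so $Q$ would coincide with the unique $\supp(Q)$-cube, which contains $c'$ and already has source $c'$ --- contradiction. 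As for your converse, the hedging (``if $\dim$ permits'', ``tracking which concept becomes the source'') is precisely where the remaining work lies; organizing the argument around $Q_1$, $Q_2$ and the count $\lvert s^{-1}(c')\rvert+\lvert s^{-1}(c'')\rvert=2$ eliminates the need for it.
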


  \begin{proof}
    By minimality of $(C,D)$, $(c')^x=(c'')^x$ is the source of a
    unique incomplete cube for~$(C^x,D^x)$ and $c'_x=c''_x$ is the
    source of a unique incomplete cube for $(C_x,D_x)$. Let $Q_1$ be
    the incomplete cube for $(C,D)$ such that $(c')_x$ is the source
    of $(Q_1)_x$. Let $Q_2$ be the incomplete cube for $(C,D)$ such that
    $(c')^x$ is the source of $(Q_2)^x$.  Since $(Q_1)_x$ has a unique
    source $c'_x = c''_x$, by Lemma~\ref{lem-source-restriction}(i),
    $(s(Q_1))_x = c'_x$ and consequently, $s(Q_1) \in
    \{c',c''\}$. Similarly, since $Q_2^x$ has a unique source
    $(c')^x = (c'')^x$, by Lemma~\ref{lem-source-restriction}(ii),
    $(s(Q_2))^x = (c')^x$ and thus $s(Q_2) \in \{c',c''\}$.
    Consequently, $c'$ is the source of 2 incomplete cubes ($Q_1$ and
    $Q_2$) if and only if $c''$ is the source of 0 incomplete cubes.
  \end{proof}

  Pick $c\in C\setminus D$ that is the source of two incomplete cubes
  for $(C,D)$ and an incomplete cube $Q$ such that $c = s(Q)$. Let
  $\sigma=\supp(Q)$, $x\in\sigma$, and $\sigma'=\sigma\setminus\{x\}$.
  The concept $c$ belongs to a unique $\sigma'$-panel $P$.  Let
  $L=(P_0 = P(\sigma'), P_1 ,\ldots, P_{m-1}, P_m = P)$ be the unique
  gallery between the root $P(\sigma')$ of the tree $C^{\sigma'}$ and
  $P$.  For $i=1,\ldots, m$, denote the chamber $P_{i-1}\cup P_i$ by
  $Q_i$.  Since $Q_i\cap D$ is ample and $Q_i$ is not contained in
  $D$, it follows that the complement $Q_i\setminus D$ is a nonempty
  ample class.  Hence $Q_i\setminus D$ induces a nonempty connected
  subgraph of $G(C\setminus D)$.  Therefore, it follows that~$c$ and
  each concept $c'\in Q_i\setminus D$ are connected by a path in
  $G(C\setminus D)$, and by Lemma~\ref{claim-source-sink} it follows
  that
  % Since $P_i\cap D$ is ample for $i \geq 0$, $Q_i\cap D$ is
  % ample for $i > 0$, and $P_i$ is not contained in $D$ for $i>0$, it
  % follows that the complements $P_i\setminus D$ and $Q_i\setminus D$
  % are nonempty ample classes.  Hence $P_i\setminus D$ and
  % $Q_i\setminus D$ induce nonempty connected subgraphs of
  % $G(C\setminus D)$ when $i >0$.  Therefore, it follows that~$c$ and
  % each concept $c'\in Q_i\setminus D$ are connected in
  % $G(C\setminus D)$ by a path for $i>0$, and by
  % Lemma~\ref{claim-source-sink} it follows that
  \begin{equation}\label{eq:0or2}
    \text{For each $i$, each $c'\in Q_i\setminus D$ is the source of
      either 0 or 2 incomplete cubes.}
  \end{equation}
	
  Consider the chamber $Q_1=P_0\cup P_1$ and its source $s=s(Q_1)$.
  By the definition of the source, necessarily $s \in P_1$ and
  $s\notin D$.  Therefore, Property~(\ref{eq:0or2}) implies that there
  must exist another cube $Q'$ such that $s=s(Q')$.  Let $s'$ be the
  neighbor of $s$ in $P_0=P(\sigma')$; note that $s'\in D$.  Since
  $\supp(Q_1) \cap \supp(Q') = \varnothing$, it follows that
  $s | \supp(Q') = s' |\supp(Q') \in D|\supp(Q')$, contradicting that
  $s=s(Q')$.  This establishes the first assertion of
  Proposition~\ref{lem-unique-source}.

  We prove now that the map $r$ defined in
  Proposition~\ref{lem-unique-source} is a representation map for $C$.
  It is easy to verify that it is a bijection between $C$ and
  $\uoX(C)$, so it remain to establish the non-clashing property:
  $c|(r(c) \cup r(c')) \neq c'|(r(c) \cup r(c'))$ for all distinct
  pairs $c,c'\in C$.  This holds when $c,c' \in D$ because $r'$ is a
  representation map.  Next, if $c \in C\setminus D$ and $c' \in D$,
  this holds because $c|r(c) \notin D|r(c)$ by the properties of $s$.

  Thus, it remains to show that every distinct $c,c' \in C\setminus D$
  satisfy the non-clashing condition.  Assume towards contradiction
  that this does not hold and consider a counterexample with minimal
  domain size~$\lvert\U\rvert$. Consequently, there exist distinct
  $c, c' \in C\setminus D$ such that $c(z) = c'(z)$ for any
  $z \in \supp(Q)\cup\supp(Q')$, where $Q = s^{-1}(c)$ and
  $Q' = s^{-1}(c')$. Since $c \neq c'$, there exists
  $x \in U \setminus (\supp(Q)\cup\supp(Q'))$ such that
  $c(x) \neq c(x')$.  If there exists
  $y \in U \setminus (\supp(Q)\cup\supp(Q'))$ distinct from $x$, then
  from Lemma~\ref{lem-source-restriction}(1), $Q_y$ and $Q'_y$ are
  incomplete cubes for $(C_y,D_y)$ whose respective sources are $c_y$
  and $c'_y$. Since $r(c_y) = r(c)$, $r(c'_y) = r(c')$, and
  $c_y \neq c'_y$ (since they differ on $x$), $c_y$ and $c'_y$ clash
  in $C_y$, contradicting the minimality of the counterexample
  $(C,D)$. Consequently, we can assume that
  $U = \supp(Q) \cup \supp(Q') \cup \{x\}$ and that $c$ and $c'$
  differ only on $x$. In this case, by
  Lemma~\ref{lem-source-restriction}(1), $c_x = c'_x$ is the source of
  both incomplete cubes $Q_x$ and $Q'_x$. Since $Q$ and $Q'$ are
  distinct incomplete cubes, we have
  $\supp(Q_x) = \supp(Q) \neq \supp(Q') = \supp(Q'_x)$ and thus,
  $c_x = c'_x$ is the source of two different incomplete cubes of
  $C_x$, contradicting the first statement of the proposition.
  % By
  % minimality, $\supp(Q')\cup\supp(Q) = \U$ (or else $(C_x,D_x)$, for
  % some $x\notin \supp(Q')\cup\supp(Q)$ would be a smaller
  % counterexample).  Therefore, since $c,c'$ are distinct, there must
  % be $x\in\U=\supp(Q')\cup\supp(Q)$ such that~$c(x)\neq c'(x)$, which
  % is a contradiction.
  This ends the proof of
  Proposition~\ref{lem-unique-source}.
\end{proof}	

\begin{remark}
  Proposition~\ref{lem-unique-source} relies on a canonical bijection
  between missed simplices and incomplete cubes (that allows to define
  the source-map). This is due to the fact that any missed simplex for
  $(C,D)$ is a maximal simplex of $C$ and is thus the support of a
  unique cube in $C$.  This property does not longer hold for general
  ample classes since there are missed simplices that are not maximal
  and therefore there are the supports of several cubes in $C$. A
  first step to generalize Proposition~\ref{lem-unique-source} to
  ample classes could be to find a way to define a source-map between
  the missed simplices in $X(C)\setminus X(D)$ and the concepts in
  $C\setminus D$.
\end{remark}

\begin{remark}\label{rem-complexity}
  To compute the representation map for a maximum class $C$ of
  dimension $d$, we make $d$ recursive calls. For each call, the
  costliest operation is to compute the source-map that can be done in
  $O(|C|^3)$ time as follows. First we naively compute in $O(|C|^3)$
  the cubes of dimension $d$ in $C$ and the cubes of dimension $d-1$
  in $C^x$. Then we compute in $O(d|C|)$ time the trees $C^{\sigma'}$
  for any maximal simplex $\sigma' \in X(C^x)$ (the roots of those
  trees are the cubes of $C^x$ of dimension $d-1$) and we can then
  compute the source of each cube by
  Lemma~\ref{source_incomplete}. Consequently, one can compute a
  representation map for a maximum class $C$ of dimension $d$ in
  $O(d|C|^3)$.
\end{remark}

\begin{remark}\label{rem-rm-CH}
  In the appendix, we give two representation maps for Hall's concept
  class $C_H$ presented in Figure~\ref{fig-hall}. One of this
  representation map is obtained by the method described in the proof
  of Theorem~\ref{th-maximum}. The other one is obtained by
  transforming the problem into a SAT formula and using a SAT solver.
\end{remark}

\section{Representation Maps for Ample Classes}\label{s:rep_maps}

In this section, we provide combinatorial and geometric
characterizations of representation maps of ample classes. We first
show that representation maps lead to optimal unlabeled sample
compression schemes and that they are equivalent to unique sink
orientations (USO) of $G(C)$ (see below for the two conditions
defining USOs).  We also show that corner peelings of ample classes
are equivalent to the existence of acyclic USOs.  In
Section~\ref{s:substr}, we show how from representation maps for an
ample class $C$ to derive representation maps for substructures of $C$
(intersections with cubes, restrictions and reductions). In
Section~\ref{s:pre-rep}, we show that there exist maps satisfying each
one of the two conditions defining USOs (but not both). Finally, using
the geometric characterization of representation maps as USOs, we show
that constructing a representation map for an ample concept class can
be reduced to solving an instance of the Independent System of
Representatives problem~\cite{AhBeZi}.

\subsection{Unlabeled Sample Compression Schemes and Representation Maps}

In the next theorem, we prove that, analogously to maximum classes,
representation maps for ample classes lead to unlabeled sample
compression schemes of size $\vcdim(C)$. This also shows that the
representation maps for ample classes are equivalent to
$\Delta$-representation maps.

\begin{theorem}\label{t:clash_bis}\label{t:clashalt} Let
  $C\subseteq 2^{\U}$ be an ample class and let $r:C\to \uoX(C)$ be a
  bijection.  The following conditions are equivalent:
  \begin{enumerate}[{(R}1)]
  \item \label{i:or} \textsf{$\cup$-non-clashing}:
    For all distinct concepts $c',c''\in C$,
    $c'|r(c')\cup r(c'')\neq c''|r(c')\cup r(c'')$.
  \item \label{i:rec} \textsf{Reconstruction}:
    For every realizable sample $s$ of $C$, there is a unique $c\in C$
    that is consistent with $s$ and $r(c)\subseteq \dom(s)$.
  \item \label{i:inj} \textsf{Cube injective}: For every cube $B$ of
    $2^{\U}$, the map $c\mapsto r(c)\cap\supp(B)$ is an injection
    from  $C\cap B$ to $\uoX(C\cap B)$.
  \item \label{i:xor} \textsf{$\Delta$-non-clashing}: For all distinct
    concepts $c',c''\in C$,
    $c'|r(c')\Delta r(c'')\neq c''|r(c')\Delta r(c'')$.
  \end{enumerate}
  Moreover, any $\Delta$-non-clashing map $r: C \to \uoX(C)$ is
  bijective and is therefore a representation map.  Furthermore, if
  $r$ is a representation map for $C$, then there exists an unlabeled
  sample compression scheme for $C$ of size $\vcdim(C)$.
\end{theorem}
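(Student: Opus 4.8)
## Proof Plan for Theorem \ref{t:clash_bis}

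The plan is to establish the cycle of implications (R1) $\Rightarrow$ (R2) $\Rightarrow$ (R3) $\Rightarrow$ (R4) $\Rightarrow$ (R1), then handle the two "moreover" claims separately. For (R1) $\Rightarrow$ (R2): given a realizable sample $s$ with domain $Y = \dom(s)$, existence of a consistent $c$ with $r(c) \subseteq Y$ follows by a counting argument — the number of realizable samples on $Y$ is $|C|Y|$, which by ampleness (restrictions of ample classes are ample, and $|C|Y| = |X(C|Y)|$) equals the number of simplices of $X(C)$ contained in $Y$, which equals $|\{c \in C : r(c) \subseteq Y\}|$ since $r$ is a bijection onto $X(C)$; uniqueness follows from $\cup$-non-clashing, because two distinct concepts $c',c''$ both consistent with $s$ and with $r(c'), r(c'') \subseteq Y = \dom(s)$ would agree on $r(c') \cup r(c'')$, a contradiction. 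Conversely (R2) $\Rightarrow$ (R1) is immediate by taking $s = c'|(r(c')\cup r(c''))$.

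For (R2) $\Leftrightarrow$ (R3): unwinding the definition, cube-injectivity on $C \cap B$ for a $Y$-cube $B$ with $\tg(B) = $ some fixed assignment on $\X \setminus Y$ is exactly the statement that on each realizable sample whose domain is $\X \setminus Y$ extended appropriately — more precisely, realizable samples correspond bijectively to pairs (cube $B$, consistent concept), and injectivity of $c \mapsto r(c) \cap \supp(B)$ says no two concepts of $C \cap B$ get mapped to the same trace, which combined with the counting/ampleness fact ($C \cap B$ is ample by Theorem \ref{thm:ample1}(6)) upgrades to bijectivity onto $X(C \cap B)$ and yields the unique-reconstruction property. The equivalence (R1) $\Leftrightarrow$ (R4): here I would use that for ample $C$ and any $c', c''$, one can twist (Theorem \ref{thm:ample1}, ample classes closed under twisting) to reduce $\Delta$-non-clashing to $\cup$-non-clashing, since $r(c') \Delta r(c'')$ and $r(c') \cup r(c'')$ differ only on $r(c') \cap r(c'')$, and the key point is that $c'$ and $c''$ must already disagree somewhere on the symmetric difference — I expect this to follow by considering the cube $B(c',c'')$, noting $d(c',c'') \le |r(c') \cup r(c'')|$ is not automatic and needs the non-clashing structure, so this is where care is needed.

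For the first "moreover" (every $\Delta$-non-clashing map is automatically bijective): since $r$ maps $C$ into $X(C)$ and $|C| = |X(C)|$ by ampleness, it suffices to show $r$ is injective, which is immediate from (R4) (if $r(c') = r(c'')$ then $c'|r(c')\Delta r(c'') = c'|\varnothing = \varnothing = c''|\varnothing$, forcing $c' = c''$); hence $r$ is a bijection and, by the equivalence with (R1), a representation map. For the final claim — a representation map yields an unlabeled sample compression scheme of size $\vcdim(C)$ — I would define the reconstructor $\beta$ and compressor $\alpha$ directly using (R2): given realizable $s$, let $\alpha(s) = r(c)$ where $c$ is the unique concept consistent with $s$ having $r(c) \subseteq \dom(s)$ (this exists and is unique by (R2), and $|r(c)| \le \dim X(C) = \vcdim(C)$ since $X(C) = \oX(C)$); define $\beta$ on a set $Y \in \Ima(\alpha)$ by $\beta(Y) = r^{-1}(Y)$. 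One checks $\alpha(s) \in \binom{\dom(s)}{\le \vcdim(C)}$ and $\beta(\alpha(s))|\dom(s) = r^{-1}(r(c))|\dom(s) = c|\dom(s) = s$ by consistency. The main obstacle throughout is the (R1) $\Leftrightarrow$ (R4) equivalence and correctly invoking the counting identity "$|C|Y| = \#\{c : r(c) \subseteq Y\}$", which hinges on the ampleness of restrictions together with $\uX = \oX$; everything else is bookkeeping.
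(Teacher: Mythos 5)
There is a genuine gap in two of the four implications, and you yourself flag one of them.

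Your treatment of (R1) $\Leftrightarrow$ (R2) is correct and matches the paper: the counting identity $|C|Y| = |\{c : r(c)\subseteq Y\}|$ (from ampleness of $C|Y$ and bijectivity of $r$) gives existence, and $\cup$-non-clashing gives uniqueness; the converse is immediate by instantiating $s = c'|(r(c')\cup r(c''))$. The ``moreover'' claims are also handled correctly: injectivity of a $\Delta$-non-clashing map is trivial (since $r(c')=r(c'')$ would make $r(c')\Delta r(c'')=\varnothing$), and your construction of $\alpha,\beta$ from (R2) is exactly the right one.

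The problems are in the middle. You describe (R2) $\Leftrightarrow$ (R3) as ``unwinding the definition'' plus a counting/ampleness upgrade from injectivity to bijectivity. But unique reconstruction is about equivalence classes of concepts agreeing on an arbitrary $Y\subseteq\X$, while cube-injectivity is about concepts inside a fixed cube $B$ (which agree on $\X\setminus\supp(B)$ in a prescribed way). Passing from one to the other is not bookkeeping: the paper needs a nontrivial inductive claim showing that for every $Y\in X(C\cap B)$ there is a unique $c_Y\in C\cap B$ with $r(c_Y)\cap\supp(B)=Y$, where the base case $Y=\varnothing$ invokes (R2) on the sample $\tg(B)$ and the inductive step manufactures a specific sample on $Y\cup(\X\setminus\supp(B))$ not realized by any of the $2^{|Y|}-1$ previously identified $c_V$'s, again invoking (R2). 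Nothing in your sketch supplies this construction. Likewise, for (R1) $\Leftrightarrow$ (R4) you propose a twisting reduction and then concede ``this is where care is needed''---rightly so, because twisting does not transport $r(c')\cup r(c'')$ into $r(c')\Delta r(c'')$ in any useful way. The paper sidesteps this by proving (R3) $\Rightarrow$ (R4) directly and cleanly: for distinct $c',c''$ take $B := B(c',c'')$; cube-injectivity gives $r(c')\cap\supp(B)\neq r(c'')\cap\supp(B)$, so there is $x\in\supp(B)\cap(r(c')\Delta r(c''))$, and $x\in\supp(B)$ means $c'(x)\neq c''(x)$. (Then (R4) $\Rightarrow$ (R1) is trivial, closing the cycle.) You mention $B(c',c'')$ but do not assemble this argument; I would replace your twisting idea with it, and replace the hand-wave in (R2) $\Rightarrow$ (R3) with the induction on $|Y|$.
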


\begin{proof}
  Fix $Y\subseteq{\U}$ and partition $C$ into equivalence classes
  where two concepts $c,c'$ are equivalent if $c|Y=c'|Y$.  Thus, each
  equivalence class corresponds to a sample of $C$ with domain $Y$,
  i.e., a concept in $C|Y$. We first show that the number of such
  equivalence classes equals the number of concepts whose
  representation set is contained in $Y$:
  \begin{align*}
    \lvert C|Y\rvert
    &= \lvert{\oX(C|Y)}\rvert
    & \text{(Since $C|Y$ is ample)}\\
    &= \lvert{\oX(C)}\cap 2^Y\rvert & \\
    &=\lvert\{c:r(c)\subseteq Y\}\rvert
    & \text{(Since $r:C\rightarrow{\oX(C)=\uX(C)}$ is a bijection)}
  \end{align*}
  Condition (R2) asserts that in each equivalence class there is
  exactly one concept $c$ such that $r(c)\subseteq Y$.

  \medskip\noindent {(R1) $\Rightarrow$ (R2):}
  Assume $\neg (R2)$ and consider a sample $s$ for which the property
  does not hold. This implies that there exists an equivalence class
  with either zero or (at least) two equivalent concepts $c$ for which
  $r(c) \subseteq Y$ with $Y = \dom(s)$.  Note that since the number
  of equivalence classes equals the number of concepts whose
  representation set is contained in $Y$, if some equivalence class
  has no concept $c$ for which $r(c) \subseteq Y$, then there must be
  another equivalence class with two distinct concepts $c',c''\in C$
  for which $r(c'),r(c'')\subseteq Y$.  Therefore, in both cases,
  there exist two equivalent concepts $c, c'\in C$ such that
  $r(c),r(c')\subseteq Y$. Since $c |Y = c'|Y$, we have
  $c|r(c)\cup r(c') = c'|r(c)\cup r(c')$, contradicting $(R1)$.

  \medskip\noindent{(R2) $\Rightarrow$ (R1):}
  Assume $\neg (R1)$, i.e., for two distinct concepts $c',c''\in C$,
  we have $c'|r(c')\cup r(c'') = c''|r(c')\cup r(c'')$.  Now for the
  sample $s=c'|r(c')\cup r(c'')$, we have $\dom(s)=r(c')\cup
  r(c'')$. Furthermore, $c'|\dom(s) =c''|\dom(s)$ and
  $r(c'),r(c'')\subseteq \dom(s)$. This implies $\neg (R2)$.

  \medskip\noindent{(R1)$\&$(R2) $\Rightarrow$ (R3):} Since $C\cap B$
  is ample, it suffices to show that for every $Y\in \uoX(C\cap B)$
  there is some $c\in C\cap B$ with $r(c) \cap \supp(B) = Y$.  This is
  established by the following claim.

  \begin{claim}\label{claim_rm}
    Conditions (R1) and (R2) together imply that for any
    $Y\in \uoX(C\cap B)$, there exists a unique concept
    $c_Y\in C\cap B$ such that $r(c_Y)\cap \supp(B)=Y$.
  \end{claim}

  \begin{proof}
    For any concept $c \in C \cap B$, let
    $r_B(c):=r(c)\cap \supp(B)$.  Let
    $Z = {\U} \setminus \supp(B)$.  Note that all concepts in $B$
    agree on domain $Z$: indeed, $B|Z$ is the single sample $\tg(B)$.
    We prove the claim by induction on $\lvert Y\rvert$.

    % \smallskip\noindent\emph{Base case: $Y=\varnothing$.}
    Suppose first that $Y = \varnothing$. Since
    $\uoX(C\cap B)\ne \varnothing$, $C\cap B\ne \varnothing$ and
    $\tg(B)$ is a sample of $C$.  By condition (R2), there is a unique
    concept $c\in C$ such that (i) $c|Z=\tg(B)$ (i.e., $c\in B$), and
    (ii) $r(c)\subseteq Z$ (i.e., $r_B(c) = \varnothing$).  Thus
    choosing $c_\varnothing=c$ settles this case.

    % \smallskip\noindent\emph{Induction step: $Y\neq\varnothing$.}
    Assume now that $Y \neq \varnothing$ and that
    % By induction hypothesis,
    for every $V\subsetneq Y$, there is a
    unique $c_V\in C\cap B$ with $r_B(c_V)=r(c_V)\cap \supp(B)=V$.

    We assert that for any two distinct concepts $c, c' \in C\cap B$
    such that $r_B(c) \cup r_B(c') \subseteq Y$, we have
    $c| Y \neq c'| Y$.  Indeed, suppose by way of contradiction that
    $c|Y=c'|Y$ and note that
    $r_B(c) \cup r_B(c') =(r(c)\cup r(c'))\cap \supp(B) \subseteq Y$.
    If $c|Y=c'|Y$, then
    \begin{equation*}
      c|(r(c)\cup r(c'))\cap \supp(B)= c'|(r(c)\cup r(c'))\cap \supp(B).
    \end{equation*}
    Since $c |Z = c' | Z =
    \tg(B)$, this implies that $c|r(c)\cup r(c')= c'|r(c)\cup r(c')$,
    contradicting condition (R1).

    Consequently, all the samples $c_V|Y$ are pairwise distinct. There
    are $2^{\lvert Y\rvert} - 1$ such samples and each sample $c_V|Y$
    corresponds to a proper subset $V$ of $Y$. By the previous
    assertion, there exist at most $2^{\lvert Y\rvert}$ concepts $c$
    such that $r_B(c) \subseteq Y$, and thus there exists at most one
    concept $c_Y \in C \cap B$ such that $r_B(c_Y) = Y$.

    Thus, it remains to establish the existence of $c_Y\in C\cap B$
    such that $r_B(c_Y)= r(c_Y) \cap \supp(B)= Y$.  Since there are
    $2^{\lvert Y\rvert}-1$ $c_V$'s for $V\subsetneq Y$, it follows
    that there is a unique sample $s'$ with $\dom(s')=Y$ that is not
    realized by any of the $c_V$'s.  Consider the sample $s$ with
    domain $Y\cup Z$ defined by
    \begin{equation*}
      s(x) = \begin{cases}
        s'(x)  & \text{if } x\in Y,\\
        \tg(B)(x) & \text{if } x\in Z.
      \end{cases}
    \end{equation*}
    Since $Y$ is shattered by $C\cap B$, it follows that $s$ is
    realized by $C$.  By condition (R2) there is a unique concept
    $c\in C$ that agrees with $s$ such that $r(c)\subseteq Y\cup Z$
    (i.e.~$r(c)\cap \supp(B) \subseteq Y$).  We claim that $c$ is the
    desired concept $c_Y$. First notice that $c\in C\cap B$, because
    $c$ agrees with $\tg(B)$ on $Z$. Since $s$ is not realized by any
    $c_V$, $V\subsetneq Y$, and since $c_V$ is the unique concept of
    $B$ such that $r_B(c_V) = V$ (by induction hypothesis),
    necessarily we have that $c \neq c_V$ for any $V\subsetneq
    Y$. Consequently, $r_B(c) = Y$, concluding the proof of the claim.
  \end{proof}

  \medskip\noindent{(R3) $\Rightarrow$ (R4):} For any distinct
  concepts $c',c''\in C$, consider the minimal cube $B:=B(c',c'')$
  which contains both $c',c''$.  This means that $c'(x)\neq c''(x)$
  for every $x\in\supp(B)$, and that $c'(x)= c''(x)$ for every
  $x\notin\supp(B)$.  Condition (R3) guarantees that the map
  $r(c)\mapsto r(c)\cap \supp(B)$ is an injection from $C\cap B$ to
  $\uoX(C\cap B)$. Therefore
  $r(c')\cap \supp(B) \ne r(c'')\cap \supp(B)$.  It follows that there
  must be some $x\in\supp(B)$ such that
  $x\in (r(c')\cap \supp(B)) \Delta (r(c'')\cap \supp(B))$.  Since
  $x\in \supp(B)$, $c'(x)\ne c''(x)$ and therefore
  $c'|{r(c')\Delta r(c'')} \neq c''|{r(c')\Delta r(c'')}$ and
  condition (R4) holds for $c'$ and $c''$.

  \medskip\noindent{(R4) $\Rightarrow$ (R1):} This
  is immediate because if two concepts clash on their symmetric
  difference, then they also clash on their union.

  \medskip Moreover, observe that for any map $r: C \to \uoX(C)$, if
  $r(c) = r(c')$ for $c\neq c'$, then $r(c)\Delta r(c') = \varnothing$
  and $r$ is not $\Delta$-non-clashing. Consequently, any
  $\Delta$-non-clashing map $r: C \to \uoX(C)$ is injective and thus
  bijective since $\lvert C\rvert = \lvert \uoX(C)\rvert$.

  \medskip We now show that if $r: C \to \uoX(C)$ is a representation
  map for $C$ then there exists an unlabeled sample compression scheme
  for $C$. Indeed, by (R2), for each realizable sample $s \in \RS(C)$,
  let $\gamma(s)$ be the unique concept $c \in C$ such that
  $r(c) \subseteq \dom(s)$ and $c|\dom(s) = s$.  Then consider the
  compressor $\alpha: \RS(C) \to \uoX(C)$ such that for any
  $s \in \RS(C)$, $\alpha(s) = r(\gamma(s))$ and the reconstructor
  $\beta: \uoX(C) \to C$ such that for any $Z \in \uoX(C)$,
  $\beta(Z) = r^{-1}(Z)$. Observe that by the definition of
  $\gamma(s)$, $\alpha(s) \subseteq \dom(s)$ and
  $\beta(\alpha(s)) = \gamma(s)$ coincides with $s$ on
  $\dom(s)$. Consequently, $\alpha$ and $\beta$ defines an unlabeled
  sample compression scheme for $C$ of size
  $\dim(\uoX(C)) = \vcdim(C)$. This concludes the proof of the
  theorem.
\end{proof}

By applying (R3) to the $1$-dimensional cubes of $B$, we get the
following corollary.

\begin{corollary}\label{cor-xedges}
  For any representation map $r$ of an ample class $C$, for any
  $c \in C$ and $x \in r(c)$, we have $c \Delta \{x\} \in C$.
\end{corollary}

\subsection{Representation Maps as Unique Sink Orientations}\label{s-USOs}

We call a map $r: C \to 2^{\U}$ \emph{edge-non-clashing} if for any
$x$-edge $cc'$, $x \in r(c)\Delta r(c')$. Note that $r$ defines an
orientation~$o_r$ of the edges of $G(C)$: an $x$-edge $cc'$ is
oriented from $c$ to $c'$ if and only if $x\in r(c)\setminus
r(c')$. Conversely, given an orientation $o$ of the edges of $G(C)$,
the \emph{out-map} $r_o$ of $o$ associates to each $c\in C$ the
coordinate set of the edges outgoing from $c$. Note that the out-map
of any orientation is edge-non-clashing.  Note also that any
$\Delta$-representation map $r: C\rightarrow \uoX(C)$ is
edge-non-clashing and thus defines an orientation $o_r$ of
$G(C)$. Moreover, by Corollary~\ref{cor-xedges}, the out-map of $o_r$
coincides with $r$.

% We now show that $o_r$ satisfies extra local conditions on the stars
% $\St(c)$ of all concepts $c \in C$ (the \emph{star} $\St(c)$ is the
% set of all faces of the cubes containing $c$).

An orientation $o$ of the edges of $G(C)$ (or the corresponding
out-map $r_o$) is a \emph{unique sink orientation (USO)} if it
satisfies the following two conditions.
\begin{enumerate}[{(C}1)]
\item for any $c\in C$, the cube of $2^{\U}$ which has support
  $r_o(c)$ and contains $c$ is a cube of $C$, i.e., all outgoing
  neighbors of $c$ belong to a cube of $C$;
\item For any cube $B$ of $C$, there exists a unique $c \in C\cap B$
  such that $r_o(c) \cap \supp(B) = \varnothing$, i.e., $c$ is a sink
  in $G(C \cap B)$. %Consequently, $o_r$ is a USO on each cube of $C$.
\end{enumerate}

% The two statements for (C1) are indeed equivalent because if $c$ is
% contained in an $r_o(c)$-cube, then for any $x \in r(c)$,
% $c \Delta \{x\}$ is a concept of $C$.
Observe also that a map $r: C \to 2^{\U}$ satisfying Condition (C2) is
necessarily an edge-non-clashing map.  If $C$ is a cube, then
Condition (C1) trivially holds and Condition (C2) corresponds to the
usual definition of USOs on cubes~\cite{SzWe}.
In the following, we use the characterization of USOs for cubes given
in~\cite{SzWe}.
%% and its immediate consequences.

\begin{lemma}[\!\!\cite{SzWe}]\label{lem-SzWe}
  For a cube $B$ and a map $r : B \to 2^{\supp(B)}$, the following are
  equivalent:
  \begin{enumerate}[(1)]
  \item $r$ is the out-map of a unique sink orientation of $B$;
  \item $r$  is $\Delta$-non-clashing;
  \item for any subcube $B'$ of $B$, the map
    $c \mapsto r(c) \cap \supp(B')$ is a bijection between $B'$ and
    $2^{\supp(B')}$;
  \item $r$ is the out-map of a \emph{unique source orientation} of
    $B$, i.e., for any subcube $B'$ of $B$, there exists a unique
    $c \in B'$ such that $r(c) \cap \supp(B') = \supp(B)$.
  \end{enumerate}
  % if and
  % only if $r$ is $\Delta$-non-clashing.
  % Moreover, the out-map of a unique sink orientation of a cube $B$
  % is
  % a bijection between $B$ and $2^{\supp(B)}$.
  % an orientation $o$ of the edges of $B$ and the
  % corresponding out-map $r_o: B \to 2^{\supp(B)}$, the following
  % conditions are equivalent:
  % \begin{enumerate}[(i)]
  % \item $o$ is a unique sink orientation of $B$;
  % \item $r_o$ is a bijective;
  % \item $r_o$ is $\Delta$-non-clashing.
  % \end{enumerate}
\end{lemma}

The equivalences (1)$\Leftrightarrow$(2) and (1)$\Leftrightarrow$(4)
are respectively~\cite[Lemma~2.3]{SzWe}
and~\cite[Lemma~2.1]{SzWe}. The implications (2)$\Rightarrow$(3) and
(3)$\Rightarrow$(1) are trivial.

\begin{remark}
  In view of Lemma~\ref{lem-SzWe}, Condition (C2) looks similar to
  Condition (R3) of Theorem~\ref{t:clash_bis}. Note however that (C2)
  is about the cubes of $2^{\U}$ contained in $C$ while (R3) is about
  all cubes of $2^{\U}$. Similarly, Condition (C2) implies that $r_o$
  is $\Delta$-non-clashing on each cube of $2^{\U}$ contained in $C$
  while condition (R4) of Theorem~\ref{t:clash_bis} requires that
  $r_o$ is $\Delta$-non-clashing on $C$.
\end{remark}

\begin{remark}\label{rem-USO-subcubes}
  If $C$ is an ample class and $o$ is a USO of $G(C)$, then for any
  cube $B$ of $C$, the restriction of $o$ to the edges of $G(B)$
  trivially satisfies (C1) and (C2) and is thus a USO of
  $G(B)$. Consequently, its out-map
  $r_B: c \mapsto r_o(c) \cap \supp(B)$ from $B$ to $2^{\supp(B)}$
  satisfies the conditions of Lemma~\ref{lem-SzWe}.
\end{remark}
% Note that this implies that for any subcube $B'$ of $B$, the out-map
% of a USO induces a bijection between $B'$ and $2^{\supp(B')}$. In
% particular, this implies that each subcube $B'$ has a unique source.

We now show that representation maps for ample classes give rise to
USOs.

% We now show that if $r$ is a representation map for an ample class
% $C$, then $o_r$ is a unique sink orientation.

\begin{corollary}\label{c:cube}
  If $r: C\rightarrow \uoX(C)$ is a representation map for an ample
  class $C\subseteq 2^{\U}$, then $o_r$ is a unique sink orientation.
  % satisfy the following
  % two conditions:
  % \begin{enumerate}[{(C}1)]
  % \item for any $c\in C$, the cube of $2^{\U}$ which has support
  %   $r(c)$ and contains $c$ is a cube of $C$, i.e., all outgoing
  %   neighbors of $c$ belong to a cube of $C$;
  % \item For any cube $B$ of $C$, there exists a unique
  %   $c \in C\cap B$
  %   such that $r(c) \cap \supp(B) = \varnothing$, i.e., $c$ is a
  %   sink
  %   in $G(C \cap B)$. Consequently, $o_r$ is a USO on each cube of
  %   $C$.
  % \end{enumerate}
\end{corollary}

\begin{proof}
  Pick some concept $c \in C$ and consider the  unique cube $B$ of
  $2^{\U}$ that contains $c$ and has support $\supp(B) = r(c)$.  Since
  $r$ is a representation map, by Condition (R3) of
  Theorem~\ref{t:clashalt}, the map $c' \mapsto r(c') \cap \supp(B)$ is
  a bijection between the ample set $C\cap B$ and $X(C \cap
  B)$. Consequently, $r(c) = \supp(B) \in X(C \cap B)$, which is
  possible only if $C \cap B = B$.
  % Therefore, there exists $Y\in \uoX(C\cap B)$ such that
  % $Y=r(c)\cap \supp(B)=\supp(B)$ (since $r(c)=\supp(B)$).  Hence
  % $\supp(B)\in \uoX(C\cap B)$, which is possible only if $C\cap
  % B=B$.
  This proves (C1).  Condition (C2) follows from Condition (R4)
  of Theorem~\ref{t:clashalt} applied to the cubes of $C$ and
  Lemma~\ref{lem-SzWe}.
\end{proof}

We continue with a characterization of representation maps of ample
classes as out-maps of USOs, extending a similar result of Szab\'{o}
and Welzl~\cite{SzWe} for cubes.  This characterization is
``local-to-global'', since (C1) and (C2) are conditions on the cubes
around each concept $c \in C$.

% An orientation $o$ of the edges of $G(C)$ is a \emph{unique sink
% orientation (USO)} if $o$ satisfies (C1) and (C2).  The
% \emph{out-map} $r_o$ of an orientation $o$ associates to each
% $c\in C$ the coordinate set of the edges outgoing from $c$. We
% continue with a characterization of representation maps {of ample
% classes as out-maps of USOs, extending a similar result of Szab\'{o}
% and Welzl~\cite{SzWe} for cubes.}  {This characterization is
% ``local-to-global'', since (C1) and (C2) are conditions on the stars
% $\St(c)$ of all concepts $c \in C$.}

\begin{theorem}\label{t:local-to-global}
  For an ample class $C$ and a %n edge-non-clashing
  map $r:C\to 2^{\U}$,
  the following are equivalent:
  \begin{enumerate}[(1)]
  \item $r$ is a representation map;
  \item $r$ is the out-map of a USO;
  \item $r(c)\in \uoX(C)$ for any $c\in C$ and $r$ satisfies (C2).
  \end{enumerate}
\end{theorem}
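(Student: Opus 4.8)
The plan is to establish the cycle of implications (i) $\Rightarrow$ (ii) $\Rightarrow$ (iii) $\Rightarrow$ (i); the first two are short and the third carries essentially all the content. For (i) $\Rightarrow$ (ii): if $r$ is a representation map then it is $\Delta$-non-clashing by Theorem~\ref{t:clashalt}, so every $x$-edge $cc'$ of $G(C)$ satisfies $x\in r(c)\Delta r(c')$ and hence $o_r$ is a well-defined orientation of $G(C)$; Corollary~\ref{c:cube} says $o_r$ satisfies (C1) and (C2), i.e.\ it is a USO, and $r=r_{o_r}$ because the inclusion $r_{o_r}(c)\subseteq r(c)$ is immediate while, conversely, if $x\in r(c)$ then (C1) puts the $x$-edge at $c$ inside $C$ and $\Delta$-non-clashing forces $x$ out of the representation set of its other endpoint, so that edge is outgoing and $x\in r_{o_r}(c)$. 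For (ii) $\Rightarrow$ (iii): if $r=r_o$ for a USO $o$ then, by (C1), $r(c)=r_o(c)=\supp(B)$ for a cube $B$ of $C$ containing $c$, whence $r(c)\in\uX(C)=X(C)$; and since a vertex of $G(C)\subseteq Q_n$ has at most one incident $x$-edge, $o$ and $o_r$ agree on every edge, so $o_r=o$ and in particular $o_r$ satisfies (C2).

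For (iii) $\Rightarrow$ (i) — the heart of the proof — the goal is to show that $r$ is $\Delta$-non-clashing, since the final sentence of Theorem~\ref{t:clashalt} then makes $r$ automatically a bijection onto $X(C)$, hence a representation map. Applying (C2) to the $1$-cubes of $C$ first shows that $o_r$ is a genuine orientation and that $r_{o_r}(c)\cap\supp(B)=r(c)\cap\supp(B)$ for $c$ in any cube $B$ of $C$; combined with the ``$o_r$ is a USO on each cube of $C$'' consequence of (C2) and \cite[Lemma~2.3]{SzWe}, this makes $c\mapsto r(c)\cap\supp(B)$ a $\Delta$-non-clashing map, hence a bijection onto $2^{\supp(B)}$, on every cube $B$ of $C$. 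It then remains to globalize this local information: for distinct $c',c''\in C$, with $B=B(c',c'')$, one must exhibit $z\in(r(c')\Delta r(c''))\cap\supp(B)$. I would do this by induction on $|\X|$: fix $x\in\X$; $C_x$ and $C^x$ are ample on a strictly smaller domain (Theorem~\ref{thm:ample2}(4)); using the $x$-source of each $x$-edge (well defined by (C2) on $1$-cubes) one defines induced maps $r^x$ on $C^x$ and $r_x$ on $C_x$, which land in $X(C^x)$ and $X(C_x)$ because $X(C)$ is downward closed; one checks that these inherit condition (C2); the induction hypothesis then gives that $r^x$ and $r_x$ are $\Delta$-non-clashing; and the proof concludes by a case analysis on distinct $c',c''\in C$ according to whether $c'(x)=c''(x)$ (reduce to $C_x$), whether $\{c',c''\}$ is an $x$-edge (apply (C2) to that $1$-cube directly), or the remaining mixed case.

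The main obstacle is the step ``the induced maps on $C^x$ and $C_x$ inherit (C2)''. For $C^x$ this is manageable: a $Y$-cube of $C^x$ is the reduction of a $(Y\cup\{x\})$-cube of $C$, so (C2) for $r^x$ follows from the per-cube bijectivity of $r$ obtained above. The real difficulty is $C_x$, where a cube pulls back only to an intersection of $C$ with a cube of $2^{\X}$ rather than to a cube of $C$, so (C2) for $r_x$ must be extracted using ampleness of $C$ together with the connectivity and gallery properties of its reductions $C^Y$ (Theorem~\ref{thm:ample2}(2)). This is exactly the genuinely ``local-to-global'' ingredient that distinguishes ample classes from the cube case of Szab\'o and Welzl; equivalently, the obstacle can be phrased as deducing condition (C1) directly from (iii) — namely that $r(c)\in X(C)$ together with the unique-sink condition forces the $r(c)$-cube through $c$ to lie in $C$ — which would upgrade (iii) to (ii) and reduce (iii) $\Rightarrow$ (i) to the (then routine) implication (ii) $\Rightarrow$ (i).
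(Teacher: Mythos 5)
Your decomposition of the cycle is genuinely different from the paper's, and the difference is where your proposal breaks down. The paper proves (i)$\Rightarrow$(ii) via Corollary~\ref{c:cube} and then spends the bulk of its effort on (ii)$\Rightarrow$(i) — a minimal-counterexample argument running through Claim~\ref{claim_cp1} (the clash must happen at $r(u_0)=r(v_0)=\varnothing$ and $C=C\cap B(u_0,v_0)$, using the $x$-out-map $r^x$), Claim~\ref{claim_cp2} ($C$ is a cube minus a vertex, using the $x$-in-map $r_x$), and a final pigeonhole on $s$-cubes. Separately, (ii)$\Rightarrow$(iii) is trivial and (iii)$\Rightarrow$(ii) is an induction on $\lvert\X\rvert$ that uses \emph{only} $r^x$, showing $r^x$ satisfies (C2) on $C^x$ and extracting (C1) for $r$ from (C1) for $r^x$. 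You instead concentrate all the content in (iii)$\Rightarrow$(i), and this creates two problems the paper's structuring deliberately avoids.

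First, under hypothesis (iii) alone the map $r_x$ you invoke is not even well defined, before one can ask whether it inherits (C2). The defining formula $r_x(c\setminus\{x\})=r(cx)$ in the ``$x\in r(c)$'' branch requires $cx=c\Delta\{x\}\in C$, and that is exactly what (C1) would guarantee via the $r(c)$-cube through $c$; (iii) does not. (By contrast, $r^x$ is defined only on the carrier $N_x(C)$, where the relevant $x$-edge exists by fiat, which is why the paper can use it safely from (iii) alone.) You frame the issue as ``$r_x$ must inherit (C2)'' and gesture at gallery connectivity, but the obstruction is more basic: without (C1) there is no $r_x$ to speak of. Second, the escape you suggest — prove (C1) from (iii), then fall back on (ii)$\Rightarrow$(i) — calls (ii)$\Rightarrow$(i) ``routine''. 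It is not: it is the longest part of the paper's proof and requires the full minimal-counterexample machinery sketched above. So even the fix you propose would have left the hardest implication unproved. In short: your (i)$\Rightarrow$(ii)$\Rightarrow$(iii) legs are fine (the per-cube argument using (C2) on $1$-cubes together with Szab\'o--Welzl is exactly right), but your (iii)$\Rightarrow$(i) leg has a genuine gap that you identify but do not close, and the paper resolves it by routing through (ii) with a substantially nontrivial argument that your outline treats as a given.
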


Before proving the theorem, starting from an edge-non-clashing map for
an ample class $C$, we show how to derive maps for restrictions $C_x$,
reductions $C^x$ and intersections $C \cap B$ with cubes of $2^{\U}$.
Consider an edge-non-clashing map $r: C\rightarrow \uoX(C)$ for an
ample class $C$.  Given a cube $B$ of $2^{\U}$, define
$r_B : C\cap B \to \uoX(C\cap B)$ by setting
$r_B(c):= r(c)\cap \supp(B)$ for any $c \in C\cap B$. Note that $r_B$
is the out-map of the orientation $o_r$ restricted to the edges of
$G(C\cap B)$.

Given $x \in \U = \dom(C)$, define
$r^x: C^x \mapsto 2^{\U\setminus\{x\}}$ by
\begin{equation*}
  r^x(c) = \begin{cases}
    r(c)\setminus\{x\}  & \text{ if } x\in r(c),\\
    r(c \cup \{x\})\setminus\{x\} & \text{ otherwise.}
  \end{cases}
\end{equation*}
Hence, $r^x(c\setminus \{x\}) = r(c) \setminus \{x\}$ for each
$c \in C$ with $x \in r(c)$. Consequently, for an $x$-edge of $C$
between $c$ and $c \cup \{x\}$, $r^x(c)$ is the label of the origin of
this edge minus $x$; we call $r^x$ the \emph{$x$-out-map} of $r$.

Given $x \in \U = \dom(C)$, define
$r_x: C_x \mapsto 2^{\U\setminus\{x\}}$ by
\begin{equation*}
  r_x(c) = \begin{cases}
    r(c)  & \text{ if } x\notin r(c),\\
    r(c \cup \{x\}) & \text{ otherwise.}
  \end{cases}
\end{equation*}
Hence, $r_x(c\setminus \{x\}) = r(c)$ for each $c \in C$ with
$x \notin r(c)$. Consequently, for an $x$-edge of $C$ between $c$ and
$c \cup \{x\}$, $r_x(c)$ is the label of the destination of this edge;
we call $r_x$ the \emph{$x$-in-map} of $r$.

In the next lemma, we show that if the map $r$ is the out-map of a
USO, then $r_B$ and $r^x$ are also outmaps of USOs. A similar result
holds for $r_x$ but it will be proved in Section~\ref{s:substr}. The
first assertion of the lemma generalizes
Remark~\ref{rem-USO-subcubes}.
\begin{lemma}\label{lem-substructure}
  Consider a map $r: C\rightarrow 2^{\U}$ that is the out-map of a USO
  of an ample class $C$.  For any cube $B \subseteq 2^{\U}$ and any
  $x \in \U = \dom(C)$, the following hold:
  \begin{enumerate}[(1)]
  \item $r_B$ is the out-map of a USO of $C \cap B$;
  \item $r^x$ is the out-map of a USO of $C^x$;
%  \item $r_x$ is the out-map of a USO of $C_x$.
  \end{enumerate}
\end{lemma}

\begin{proof}
  Item~(1):
  Consider a cube $B \subseteq 2^{\U}$.  Since (C1) holds for $C$ and
  $r$, for any concept $c \in C\cap B$, the cube $B(c)$ that has
  support $r(c)$ and contains $c$ is a cube of $C$. Since the
  intersection of two cubes is a cube, the cube $B'(c) = B(c) \cap B$
  has support $r(c) \cap \supp(B) = r'(c)$ and contains
  $c$. Consequently, (C1) holds for $C'$ and $r'$. Since any cube of
  $C' = C\cap B$ is also a cube of $C$ and since (C2) holds for $C$
  and $r$, (C2) also holds for $C'$ and $r'$.

  \medskip\noindent Item~(2): Consider $c \in C^x$ and let
  $c' \in \{c, c\cup \{x\}\}$ such that
  $r^x(c) = r(c') \setminus \{x\}$.  By (C1), there exists an
  $r(c')$-cube $B'$ containing $c'$ in $C$. Thus, there exists an
  $(r(c')\setminus \{x\})$-cube containing $c$ in $C^x$. Consequently,
  $r^x$ satisfies (C1).

  Suppose that there exists a cube $B$ in $C^x$ violating (C2), i.e.,
  there exist $c_1, c_2 \in C^x \cap B$ such that
  $r^x(c_1)\cap \supp(B) = r^x(c_2)\cap \supp(B)$. By definition of
  $r^x$, there exist
  $c'_1 \in \{c_1, c_1 \cup \{x\}\}, c'_2 \in \{c_2, c_2 \cup \{x\}\}$
  such that $r(c'_1) = r^x(c_1) \cup \{x\}$ and
  $r(c'_2) = r^x(c_2) \cup \{x\}$. Since $B$ is a cube of $C^x$
  containing $c_1$ and $c_2$, there exists a cube $B'$ of $C$
  containing $c'_1$ and $c'_2$ such that
  $\supp(B') = \supp(B)\cup \{x\}$.  Consequently,
  $r(c'_1) \cap \supp(B') = (r^x(c_1) \cup \{x\}) \cap (\supp(B) \cup
  \{x\}) = (r^x(c_1)\cap \supp(B)) \cup \{x\}$ and similarly,
  $r(c'_2) \cap \supp(B') = (r^x(c_2)\cap \supp(B)) \cup
  \{x\}$. Therefore, $r(c'_1) \cap \supp(B') = r(c'_2) \cap \supp(B')$
  and $r$ is not injective on the cube $B'$ of $C$, contradicting
  Lemma~\ref{lem-SzWe} (see Remark~\ref{rem-USO-subcubes}).

  % \medskip\noindent Item~(3): Consider $c \in C_x$ and let
  % $c' \in \{c, c\cup \{x\}\}$ such that $r_x(c) = r(c')$.  By (C1),
  % there exists an $r(c')$-cube $B'$ containing $c'$ in $C$. Thus,
  % there exists an $r(c')$-cube containing $c$ in $C_x$. Consequently,
  % $r_x$ satisfies (C1).

  % Suppose that there exists a cube $B$ in $C_x$ violating (C2), i.e.,
  % there exist $c_1, c_2 \in C_x \cap B$ such that
  % $r_x(c_1)\cap \supp(B) = r_x(c_2)\cap \supp(B)$. Let
  % $\sigma = r_x(c_1)\cap \supp(B)$. By definition of $r_x$, there
  % exist
  % $c'_1 \in \{c_1, c_1 \cup \{x\}\}, c'_2 \in \{c_2, c_2 \cup \{x\}\}$
  % such that $r(c'_1) = r_x(c_1)$ and $r(c'_2) = r_x(c_2) \cup
  % \{x\}$.

  % Note that if $c'_1$ and $c'_2$ belong to a $\sigma$-cube of
  % $C$, then $r$ does not satisfy (C2), a contradiction.

  % Since $B$ is a cube of $C_x$
  % containing $c_1$ and $c_2$, there exists a cube $B'$ of $2^U$
  % containing $c'_1$ and $c'_2$ such that
  % $\supp(B') = \supp(B)\cup \{x\}$. Observe that if $x \notin c'_1
  % \Delta c'_2$,

  % Consequently,
  % $r(c'_1) \cap \supp(B') = (r^x(c_1) \cup \{x\}) \cap (\supp(B) \cup
  % \{x\}) = (r^x(c_1)\cap \supp(B)) \cup \{x\}$ and similarly,
  % $r(c'_2) \cap \supp(B') = (r^x(c_2)\cap \supp(B)) \cup
  % \{x\}$. Therefore, $r(c'_1) \cap \supp(B') = r(c'_2) \cap \supp(B')$
  % and $r$ is not injective on the cube $B'$ of $C$, contradicting (R3)
  % for $r$.
\end{proof}

\begin{proof}[Proof of Theorem~\ref{t:local-to-global}]
  The implication {(1)}$\Rightarrow${(2)} is
  established in Corollary~\ref{c:cube}.  Now, we prove
  {(2)}$\Rightarrow${(1)}.  Clearly, property
  (C1) implies that $r(c)\in \uoX(C)$ for any $c\in C$, whence $r$ is
  a map from $C$ to $\uoX(C)$.

  % \begin{claim}\label{claim-intersection-cubes}
  %   Consider an ample class $C$ and a map $r: C \to 2^{\U}$ satisfying
  %   (C1) and (C2). For any cube $B$ of $2^{\U}$, $C' = C\cap B$ and
  %   the restriction $r'$ of $r$ to $C'$ satisfy (C1) and (C2).
  % \end{claim}

  % \begin{proof}
  %   Since (C1) holds for $C$ and $r$, for any concept $c \in C\cap B$,
  %   the cube $B(c)$ that has support $r(c)$ and contains $c$ is a cube
  %   of $C$. Since the intersection of two cubes is a cube, the cube
  %   $B'(c) = B(c) \cap B$ has support $r(c) \cap \supp(B) = r'(c)$ and
  %   contains $c$. Consequently, (C1) holds for $C'$ and $r'$. Since
  %   any cube of $C' = C\cap B$ is also a cube of $C$ and since (C2)
  %   holds for $C$ and $r$, (C2) also holds for $C'$ and $r'$.
  % \end{proof}

  Let $C$ be an ample class of smallest size admitting a
  non-representation map $r: C\to \uoX(C)$ satisfying (C1) and
  (C2). Hence there exist $u_0,v_0\in C$ such that
  $u_0 \vert (r(u_0)\Delta r(v_0))=v_0\vert(r(u_0) \Delta r(v_0))$,
  i.e., $(u_0\Delta v_0)\cap (r(u_0)\Delta r(v_0))=\varnothing$;
  $(u_0,v_0)$ is called a \emph{clashing pair}.

  \begin{claim} \label{claim_cp1}
    If  $(u_0,v_0)$ is a clashing pair,
    then $C=C\cap B(u_0,v_0)$ and  $r(u_0)=r(v_0)=\varnothing$.
  \end{claim}

  \begin{proof}
    Since $C\cap B(u_0,v_0)$ is ample and
    $(u_0\Delta v_0)\cap (r(u_0)\Delta r(v_0))=\varnothing$,
    $(u_0,v_0)$ is a clashing pair for $C\cap B(u_0,v_0)$ and the
    restriction $r_B$ of $r$ to $\supp(B(u_0,v_0))$. By
    Lemma~\ref{lem-substructure}, $r_B$ is the out-map of a USO of
    $C\cap B(u_0,v_0)$. Consequently, by minimality of $C$,
    $C=C\cap B(u_0,v_0)$ and thus
    $\dom(C) = \supp(B(u_0,v_0)) = u_0 \Delta v_0$. Moreover, if
    $r(u_0) \neq r(v_0)$, then there is $x \in r(u_0)\Delta r(v_0)$
    and $x \in \supp(B(u_0,v_0)) = u_0\Delta v_0$, contradicting that
    $(u_0,v_0)$ is a clashing pair.

    Suppose $r(u_0)\ne \varnothing$ and pick $x \in r(u_0)=r(v_0)$.
    % Consider the carrier $N_x(C)$ of $C^x$. We show that
    % $r(u_0)\subseteq \supp(N_x(C))$.  Trivially $x \in N_x(C)$. Let
    % $y\in r(u_0)\setminus \{x\}$. By (C1), $u_0$ belongs to an
    % $\{ x,y\}$-square of $C$, whence $y\in
    % \supp(N_x(C))$.
    % Consequently, $(u_0,v_0)$ is a clashing pair for
    % $N_x(C)$ and the restriction of $r$ to $N_x(C)$. $N_x(C)$ is ample
    % as the product of $C^x$ by an $x$-edge. By minimality of $C$,
    % $C=N_x(C)$.
    % Define $r^x: C^x \mapsto 2^{\U\setminus\{x\}}$ by
    % \begin{equation*}
    %   r^x(c) = \begin{cases}
    %     r(c)\setminus\{x\}  & \text{ if } x\in r(c),\\
    %     r(c \cup \{x\})\setminus\{x\} & \text{ otherwise.}
    %   \end{cases}
    % \end{equation*}
    % Consequently, for an $x$-edge of $C$ between $c$ and
    % $c \cup \{x\}$, $r^x(c)$ is the label of the origin of this edge
    % minus $x$; we call $r^x$ the \emph{$x$-out-map} of $r$.
    % Observe
    % that $r^x$ coincides with $r^{\{x\}}$ and
    Consider the $x$-out-map $r^x: C^x \to 2^{\U \setminus \{x\}}$.
    By Lemma~\ref{lem-substructure}, $r^x$ is the out-map of a USO for
    $C^x$.
    % We show
    % that $C^x$ and $r^x$ satisfy (C1).  For any $c \in C^x$, let
    % $c' \in \{c,c \cup \{x\}\}$ such that
    % $ r(c') = r^x(c) \cup \{x\}$. Since (C1) holds for $C$ and $r$,
    % $c'$ belongs to a cube $B$ of $C$ with support
    % $r^x(c) \cup \{x\}$. Since $B^x$ is a cube of $C^x$ with support
    % $r^x(c)$ containing $c$, (C1) holds for $C^x$ and $r^x$.  To
    % establish condition (C2), suppose that there exists a cube $B'$
    % of
    % $C^x$ and $u',v' \in B'$ such that
    % $r^x(u')\cap \supp(B')=r^x(v')\cap \supp(B')$.  The cube
    % $B:=B'\times \{x\}$ is included in $C$ since $B'$ is a cube of
    % $C^x$. Then among the four pairs
    % $(u',v'), (u',v'\cup\{x\}), (u'\cup\{x\},v'), (u'\cup\{x\},
    % v'\cup\{x\})$ of $B$ one can select a pair $(u,v)$ such that
    % $r(u)= r^x(u') \cup \{x\} = r^x(v') \cup \{x\} = r(v)$, a
    % contradiction with condition (C2) for $C$ and $r$. This
    % established that $C^x$ and $r^x$ satisfy (C2).
    Suppose without loss of generality that $x\in v_0\setminus
    u_0$. Let $u'_0=u_0$ and $v'_0=v_0 \setminus \{x\}$. Then
    $r^x(u'_0)=r(u_0)\setminus \{x\}=r(v_0)\setminus \{x\}=r^x(v'_0)$,
    and consequently $(u'_0,v'_0)$ is a clashing pair for $r^x$ on
    $C^x$. Since $C^x$ is ample, smaller than $C$, and since $r^x$ is
    the out-map of a USO of $C^x$, this contradicts the minimality of
    $C$.
\end{proof}

\begin{claim}\label{claim_cp2}
  $C$ is a cube minus a vertex.
\end{claim}

\begin{proof}
  By (C2) and Claim~\ref{claim_cp1}, $C$ is not a cube. If $C$ is not
  a cube minus a vertex, since the complement $C^*=2^{\U}\setminus C$
  is also ample (thus $G(C^*)$ is connected), $G(C^*)$ contains an
  $x$-edge $ww'$ with $x\notin w$ and $x\in w'$.  Consider the
  $x$-in-map $r_x: C_x \mapsto 2^{\U\setminus\{x\}}$.

  For any $u' \in C_x$, let $u \in \{u',u'\cup\{x\}\}$ such that
  $r_x(u') = r(u)$. Since $C$ satisfies (C1), $u$ belongs to an
  $r(u)$-cube in $C$ and consequently, $u'$ belongs to an
  $r_x(u')$-cube in $C_x$. Thus $C_x$ and $r_x$ satisfy (C1).

  Suppose that $C_x$ and $r_x$ violate (C2).  Then there exists a cube
  $B'$ of $C_x$ and $u',v'\in B'$ such that
  $(u'\Delta v') \cap (r_x(u')\Delta r_x(v'))= \varnothing$. Without
  loss of generality, we can assume that $B' = B(u',v')$. Let
  $u \in \{u', u'\cup\{x\}\}$ such that $r(u)=r_x(u')$ and let
  $v \in \{v', v'\cup\{x\}\}$ such that $r(v)=r_x(v')$.  The
  restriction $C'_x$ of the ample class $C':=C\cap B(u,v)$ is the cube
  $B'$. By Lemma~\ref{lem-substructure}, $r_{B(u,v)}$ is the out-map
  of an USO of $C'$.  Since $w , w' \notin C$ and $ww'$ is an
  $x$-edge, $w \notin C'_x$. Thus there exists $y\in \supp(C)$ such
  that $C'$ and the edge $ww'$ of $C^*$ belong to different
  $y$-half-spaces $C^-=\{ c\in C: y\notin c\}$ and
  $C^+=\{ c\in C: y\in c\}$ of the cube $2^{\U}$.  Since
  $y\in \supp(C)$, the half-space containing $ww'$ also contains a
  concept of $C$.  Hence, $C'$ is a proper ample subset of $C$. Since
  $u\subseteq u'\cup \{x\}$, $v\subseteq v'\cup \{x\}$,
  $x\notin r(u)=r_x(u')$, $x \notin r(v)=r_x(v')$, we deduce that
  $u\cap (r(u)\Delta r(v))=u'\cap (r_x(u')\Delta r_x(v'))$ and
  $v' \cap (r_x(u') \Delta r_x(v'))= v\cap (r(u)\Delta r(v))$. Since
  $(u'\Delta v') \cap (r_x(u')\Delta r_x(v'))= \varnothing$, $(u,v)$
  is a clashing pair for the restriction of $r$ on $C'$, contrary to
  the minimality of $C$. Consequently, $C_x$ and $r_x$ satisfy (C2)
  and $r_x$ is the outmap of an USO of $C_x$. By minimality of $C$,
  $r_x$ is a representation map for $C_x$.

  Consider a clashing pair $(u_0,v_0)$ for $C$ and $r$, and let
  $u'_0 = u_0 \setminus\{x\}$ and $v'_0 = v_0 \setminus\{x\}$. By
  Claim~\ref{claim_cp1},
  $r_x(u'_0) = r(u_0) = r(v_0) = r_x(v_0') = \varnothing$. Since $r_x$
  is a representation map for $C_x$, necessarily $u_0' =
  v_0'$. Consequently, $u_0\Delta v_0 = \{x\}$, i.e., $u_0v_0$ is an
  $x$-edge of $G(C)$. This is impossible since $C$ satisfies
  (C2). Therefore, $C$ is necessarily a cube minus a vertex.
\end{proof}

Now, we complete the proof of the implication {(2)}$\Rightarrow${(1)}.
By Claim~\ref{claim_cp1}, $r(u_0)=r(v_0)=\varnothing$. By condition
(C1) and Claim~\ref{claim_cp2}, $r(c) \neq \U$ for any $c\in C$.  Thus
there exists a set
$s \in \uoX(C) = 2^{\U}\setminus \{\U,\varnothing\}$ such that
$s \neq r(c)$ for any $c\in C$.  Every $s$-cube $B$ of $C$ contains a
source $p(B)$ for $o_{r_B}$ (i.e., $s \subseteq r(p(B))$). For each
$s$-cube $B$ of $C$, let $t(B) = r(p(B))\setminus s$.  Notice that
$\varnothing \subsetneq t(B) \subsetneq \U\setminus s$ since
$s \subsetneq r(p(B)) \subsetneq \U$. Consequently, the number of
distinct sets $t(B)$ (when $B$ runs over the $s$-cubes of $C$) is at
most $2^{\lvert \U\rvert -\lvert s\rvert}-2$. On the other hand, since
$C$ is a cube minus one vertex by Claim~\ref{claim_cp2}, there are
$2^{\lvert\U\rvert - \lvert s\rvert}-1$ $s$-cubes in
$C$. Consequently, there exist two $s$-cubes $B, B'$ such that
$t(B)=t(B')$. Thus
$\varnothing \subsetneq s \subsetneq r(p(B)) = r(p(B'))$ and
$(p(B),p(B'))$ is a clashing pair for $C$ and $r$, contradicting
Claim~\ref{claim_cp1}.

The implication {(2)}$\Rightarrow${(3)} is trivial. To prove
{(3)}$\Rightarrow${(2)}, we show by induction on $\lvert \U\rvert$
that a map $r: C\rightarrow \uoX(C)$ satisfying (C2) also satisfies
(C1).  For any $x\in \U$, consider the $x$-out-map $r^x$. Recall that
if $cc'$ is an $x$-edge directed from $c$ to $c'$, then $x\in r(c)$
and $r^x$ maps $c^x=(c')^x\in C^x$ to
$r(c)\setminus \{x\}\in \uoX(C^x)$. Thus $r^x$ maps $C^x$ to
$\uoX(C^x)$.  Moreover, each cube $B^x$ of $C^x$ is contained in a
unique cube $B$ of $C$ such that $\supp(B) = \supp(B^x) \cup \{x\}$.
If there exist $c_1^x, c_2^x \in B^x$ such that
$r^x(c_1^x) = r(c_2^x)$, then there exist $c_1,c_2 \in B$ such that
$r(c_1) = r^x(c_1^x) \cup \{x\} = r^x(c_2^x) \cup \{x\} = r(c_2)$,
contradicting (C2). Consequently, $o_{r^x}$ satisfies (C2). By
induction hypothesis, $o_{r^x}$ satisfies (C1) for any $x \in \U$.

For any concept $c \in C$, pick $x \in r(c)$.  Since $r^x$ satisfies
(C1), $c^x$ belongs to a $\sigma'$-cube in $C^x$ with
$\sigma' =r^x(c^x) = r(c) \setminus\{x\}$.  This implies that $c$
belongs to a $\sigma$-cube in $C$ with
$\sigma = \sigma' \cup \{x\} = r(c)$. Thus $o_r$ satisfies (C1),
concluding the proof of Theorem~\ref{t:local-to-global}.
\end{proof}

A consequence of Theorems~\ref{t:clashalt} and~\ref{t:local-to-global}
is that corner peelings correspond \emph{exactly} to acyclic unique
sink orientations.

\begin{proposition}\label{prop-acyclicUSO}
  An ample class $C$ admits a corner peeling if and only if there
  exists an acyclic orientation $o$ of the edges of $G(C)$ that is a
  unique sink orientation.
\end{proposition}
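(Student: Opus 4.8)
The plan is to establish the two directions by connecting corner peelings with acyclic USOs through the out-map machinery of Theorems~\ref{t:clashalt} and \ref{t:local-to-global}, together with the corner/isometric-ordering equivalence from Proposition~\ref{dismantling} and Lemma~\ref{ample_bis}.

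For the forward direction, suppose $C$ admits a corner peeling $c_1, \ldots, c_m$ (reading the peeling in reverse as a buildup order, so $C_i = \{c_1, \ldots, c_i\}$ is ample and $c_i$ is a corner of $C_i$). I would define an orientation $o$ of $G(C)$ as follows: for an $x$-edge $c_i c_j$ with $i < j$, orient it from $c_j$ toward $c_i$ (i.e., each $c_j$ points ``backward'' toward the earlier endpoint). Equivalently, set $r(c_i) := \supp(F[c_i])$, where $F[c_i]$ is the unique maximal cube of $C_i$ containing $c_i$ (which exists since $c_i$ is a corner, and is contained in $C_i$ by Lemma~\ref{ample_bis}(i)); then $o = o_r$ orients each $x$-edge away from the endpoint whose $r$-value contains $x$. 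The key claims are: (a) $o$ is acyclic — this is immediate since every edge points toward the lower-indexed vertex, so any directed cycle would be impossible; (b) $r$ is a representation map, equivalently $o$ is a USO. For (b) I would verify (R3)/(R4) of Theorem~\ref{t:clashalt}, or directly verify conditions (C1) and (C2) of Theorem~\ref{t:local-to-global}. Condition (C1) holds because $F[c_i]$ is a cube of $C$ with support $r(c_i)$ containing $c_i$. For (C2), given a cube $B \subseteq C$, I would argue that the last-indexed vertex $c_j$ of $B$ in the peeling order has all its $B$-edges oriented into it from earlier vertices of $B$ (so $r(c_j) \cap \supp(B) = \varnothing$), and it is the unique such sink because any earlier vertex $c_k \in B$ has an edge of $B$ directed toward $c_j$'s ancestors, hence at least one outgoing $B$-edge; a bit of care is needed here to show $r(c_k)$ actually meets $\supp(B)$, which follows from the fact that in $C_j$ the corner $c_j$ sits in the unique maximal cube $F[c_j] \supseteq B \cap C_j$, forcing every other vertex of $B$ to have an outgoing edge within $B$. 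Alternatively, and perhaps more cleanly, I would invoke the restriction-to-cubes behavior: $C \cap B$ is ample and inherits a corner peeling compatible with the ordering, so by induction the out-map restricted to $B$ is a USO of $B$, giving a unique sink.

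For the converse, suppose $o$ is an acyclic USO of $G(C)$, with out-map $r = r_o$; by Theorem~\ref{t:local-to-global}, $r$ is a representation map. Since $o$ is acyclic and $G(C)$ is finite, there is a vertex $c$ with no outgoing edges, i.e., $r(c) = \varnothing$. I claim $c$ is a corner of $C$: by (R2)-type reasoning (Claim~\ref{claim_rm}, or condition (C1)), $c$ being a sink with $r(c) = \varnothing$ means $c$ lies in the (trivial) cube of support $\varnothing$ — this alone is not enough, so instead I would argue that a sink in an acyclic USO must be a corner. Concretely, consider any two maximal cubes $F, F'$ of $C$ both containing $c$; restricting $o$ to $F$ and to $F'$, the vertex $c$ is the unique sink of each (since $c$ is a global sink, all its edges point in). By Corollary~\ref{cor:maxcub}, $F$ and $F'$ have distinct supports; but I can then look at the cube $B = B(\,\cdot\,,\cdot\,)$ spanned suitably — actually the cleanest route is: if $c$ belonged to two distinct maximal cubes, then (using (C1) applied to a vertex adjacent to $c$, or a direct gallery argument via Theorem~\ref{thm:ample2}) one derives two $s$-cubes sharing structure that forces a clashing pair, contradicting that $r$ is a representation map — this mirrors the final contradiction in the proof of Theorem~\ref{t:local-to-global}. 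Having shown $c$ is a corner, I remove it: $C' := C \setminus \{c\}$ is ample by Lemma~\ref{ample_bis}(ii), and the restriction of $o$ to $G(C')$ is still acyclic and still satisfies (C1) and (C2) for cubes of $C'$ (since cubes of $C'$ not containing $c$ are cubes of $C$, and cubes containing $c$ disappear), hence is an acyclic USO of $C'$. By induction on $|C|$, $C'$ admits a corner peeling, and prepending $c$ yields a corner peeling of $C$.

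The main obstacle I anticipate is the converse direction's claim that a sink of an acyclic USO is necessarily a corner, i.e., lies in a unique maximal cube — the USO condition controls cubes of $C$ but one must rule out $c$ sitting at the corner of two maximal cubes with different supports, and the verification that the out-map restricted to $C \setminus \{c\}$ remains a valid USO (that removing a corner does not break condition (C2) for the surviving cubes) requires checking that no new violations of (C2) appear, which is where I would be most careful. A secondary subtlety on the forward direction is establishing (C2) cleanly; packaging it as ``corner peelings restrict to corner peelings of $C \cap B$, then induct'' is the approach I would favor to keep the argument short.
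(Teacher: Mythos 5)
Your forward direction is essentially the paper's: orient edges from higher to lower index in the peeling order, verify (C1) from the fact that each $c_i$ is a corner of $C_i$, and verify (C2) either by the connectivity argument (each $C_i\cap B$ is ample by Theorem~\ref{thm:ample1}(6), hence connected, so any $c_i\in B$ with $i>i_B$ has a lower-indexed $B$-neighbor) or, as you prefer, by restricting the ordering to $B$ and inducting. Both are fine.

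The converse direction, however, has a genuine error: you take a \emph{sink} of the acyclic USO, but a sink of an acyclic USO need not be a corner, so the whole inductive scheme collapses at the first step. Concretely, let $C=\{00,01,10\}\subseteq 2^{\{1,2\}}$ (a path, which is ample), and orient $01\to 00$ and $10\to 00$. This is acyclic, and one checks (C1) and (C2) directly, so it is an acyclic USO. Its unique sink is $00$, which lies in two maximal cubes $\{00,01\}$ and $\{00,10\}$ and is therefore not a corner; indeed $C\setminus\{00\}$ is disconnected, hence not ample, and Lemma~\ref{ample_bis}(ii) does not apply. Your attempted repair (``if $c$ belonged to two distinct maximal cubes, \ldots one derives a clashing pair'') cannot be made to work precisely because such a configuration is fully consistent with $r$ being a representation map, as the example shows.

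The fix is to take a \emph{source} $c$ of $o$ instead. Then \emph{every} edge at $c$ is outgoing, so (C1) directly gives that $c$ together with all its neighbors lies in a single cube $F[c]\subseteq C$; hence $c$ belongs to a unique maximal cube and is a corner, and $C'=C\setminus\{c\}$ is ample by Lemma~\ref{ample_bis}(ii). One also needs $c$ to be a source to show that (C1) survives the deletion: if some $c'\in C'$ had its (unchanged) set of outgoing neighbors lying only in a cube $B\subseteq C$ through $c$, then $c$ and $c'$ would both be sources of $B$, contradicting the USO property on $B$ (via Lemma~2.3 of~\cite{SzWe}). With a sink this source-clash argument is unavailable, which is another symptom of the same problem. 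Once you switch to sources, the rest of your inductive structure goes through as in the paper.
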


\begin{proof}
  Suppose that $C_< = (c_1, \ldots, c_m)$ is a corner peeling and
  consider the orientation $o$ of $G(C)$ where an edge $c_ic_j$ is
  oriented from $c_i$ to $c_j$ if and only if $i > j$. Clearly, this
  orientation is acyclic. For any $i$, since $c_i$ is a corner in
  $C_i = \{c_1,\ldots,c_i\}$, the outgoing neighbors of $c_i$ belong
  to a cube of $C_i$, i.e., $o$ satisfies (C1). For any cube $B$ of
  $C$, assume that $c_{i_B}$ is the first concept of $B$ in the
  ordering $C_<$. Observe that $c_{i_B}$ is a sink of $B$ for the
  orientation $o$. Note that for each $i > i_B$,
  $\emptyset \subsetneq C_{i-1} \cap B \subseteq C_i \cap B$. By
  Theorem~\ref{thm:ample1}(6), $C_i \cap B$ is ample and thus
  connected. Consequently, for each $i > i_B$, there exists $c_ic_j$
  in $G(C)$ with $i_B \leq j < i$ such that $c_j \in B$. Consequently,
  since $c_ic_j$ is oriented from $c_i$ to $c_j$, $c_i$ is not a sink
  of $B$. Therefore every cube $B$ has a unique sink for the
  orientation $o$ and $o$ is an acyclic unique sink orientation of
  $G(C)$.

  Suppose now that $G(C)$ admits an acyclic unique sink orientation
  $o$. Consider a concept $c$ that is a source for $o$. By (C1) all
  the neigbors of $c$ belong to a cube of $C$ and consequently, $c$ is
  a corner of $C$. Since $C$ is ample, by Lemma~\ref{ample_bis},
  $C' = C\setminus\{c\}$ is ample. Clearly the restriction $o'$ of the
  orientation $o$ to $C'$ is acyclic. We claim that $o'$ is also a USO
  for $C'$. Observe that any cube $B$ of $C'$ is also a cube of $C$
  and consequently, $o'$ satisfies (C2) since $o$ satisfies (C2).
  Suppose now that there exists $c' \in C'$ such that the outgoing
  neighbors of $c'$ in $C'$ do not belong to a cube of $C'$. Then
  since $o$ is a USO for $C$, necessarily the outgoing neighbors of
  $c'$ in $C$ belong to a cube $B$ of $C$ that contains $c$. But then
  $c$ and $c'$ are both sources of $B$ for the orientation $o$. By
  Lemma~\ref{lem-SzWe}, this implies that $o$ does not satisfy (C2) on
  $B$ and thus on $C$ (see Remark~\ref{rem-USO-subcubes}), a
  contradiction. Therefore, $o'$ satisfies (C1) and (C2) and is a
  unique sink orientation of $C'$. Applying the previous argument
  inductively we obtain a corner peeling of $C$.
\end{proof}

\begin{remark}\label{rem-rm-CH-acyclic}
  A consequence of Proposition~\ref{prop-acyclicUSO} is that for any
  representation map for Hall's concept class $C_H$, the corresponding
  USO of $G(C_H)$ contains a directed cycle and thus at least one
  non-trivial strongly connected component.

  The USOs corresponding to the two representation maps of $C_H$
  presented in the appendix contain one or two non-trivial strongly
  connected components.
\end{remark}

\subsection{Representation Maps for Substructures}\label{s:substr}

In this subsection, from a representation map $r$ for an ample class
$C$, we show how to derive representation maps for restrictions $C_Y$,
reductions $C^Y$ and intersections $C\cap B$ with cubes of $2^{\U}$.

% Let $r: C\rightarrow \uoX(C)$ be a representation map for an ample
% class $C$.  Given a cube $B$ of $2^{\U}$, define
% $r_B : C\cap B \to \uoX(C\cap B)$ by setting
% $r_B(c):= r(c)\cap \supp(B)$ for any $c \in C\cap B$. Note that $r_B$
% is the out-map of the orientation $o_r$ restricted to the edges of
% $G(C\cap B)$.

Given a subset $Y \subseteq \U = \dom(C)$, define
$r^Y: C^Y \to \uoX(C^Y)$ as follow. For any $c \in C^Y$, there exists
a unique $Y$-cube $B$ in $C$ such that $\tg(B) = c$.  By
Lemmas~\ref{lem-SzWe} and~\ref{lem-substructure}(1), there exists a
unique $c^B \in B$ such that $r_B(c^B) = r(c^B)\cap Y = Y$ ($c^B$ is
the source of $B$ for $o_r$). We set $r^Y(c) := r(c^B)\setminus Y$. If
$Y = \{x\}$, $r^Y$ coincides with the $x$-out-map $r^x$ defined in
Section~\ref{s-USOs}.

Given a subset $Y \subseteq \U = \dom(C)$, define
$r_Y: C_Y \to \uoX(C_Y)$ as follow. For any $c \in C_Y$, there exists
a unique $Y$-cube $B$ in $2^{\U}$ such that $\tg(B) = c$. Since
$c \in C_Y$, $C \cap B \neq \varnothing$.  By Claim~\ref{claim_rm},
there exists a unique $c_B \in C \cap B$ such that
$r_B(c_B) = r(c_B)\cap Y = \varnothing$ ($c_B$ is the unique sink of
$C \cap B$ for $o_r$). We set $r_Y(c) := r(c_B)$. If $Y = \{x\}$,
$r_Y$ coincides with the $x$-in-map $r_x$ defined in
Section~\ref{s-USOs}.

\begin{proposition}\label{p:substructure}
  For a representation map $r$ for an ample class $C$, any cube $B$ of
  $2^{\U}$, and any $Y\subseteq {\U}$, the following hold:
  \begin{enumerate}[(1)]
  \item $r_B$ is a representation map for $C\cap B$;
  \item $r^Y$ is a representation map for $C^Y$;
  \item $r_Y$ is a representation map for $C_Y$.
  \end{enumerate}
\end{proposition}

\begin{proof}
  By Theorem~~\ref{t:local-to-global}, Assertion (1) follows from
  Assertion (1) of Lemma~\ref{lem-substructure} and Assertion (2)
  follows by iteratively applying Assertion (2) of
  Lemma~\ref{lem-substructure} to the elements of $Y$.

  % Item~(1): By Theorem~\ref{t:local-to-global}, the
  % orientation $o_r$ satisfies conditions (C1) and (C2) and obviously
  % its restriction to the edges of $G(C\cap B)$ still satisfies these
  % two conditions. Therefore, $r_B$ is a representation map for
  % $C\cap B$.

  % \medskip\noindent Item~(2): Recall that for any $c \in C^Y$,
  % we have $r^Y(c) = r(c^B)\setminus Y$, where $B$ is the unique
  % $Y$-cube such that $\tg(B) = c$ and $c^B$ is the unique source of
  % $B$ for $o_r$. By (C1), there exists an $r(c^B)$-cube $B'$
  % containing $c^B$ in $C$. Thus, there exists an
  % $(r(c^B)\setminus Y)$-cube containing $c$ in $C^Y$ and thus
  % $r^Y(c) \in \uoX(C^Y)$. Consequently, $r^Y$ is a map from $C^Y$ to
  % $\uoX(C^Y)$ and $r^Y$ satisfies (C1).

  % Suppose that there exists a cube $B$ in $C^Y$ violating (C2), i.e.,
  % there exist $c_1, c_2 \in C^Y \cap B$ such that
  % $r^Y(c_1)\cap \supp(B) = r^Y(c_2)\cap \supp(B)$.  Any cube $B$ of
  % $C^Y$ extends to a unique cube $B'$ of $C$ such that
  % $\supp(B') = \supp(B)\cup Y$. By definition of $r^Y$, there exist
  % $c'_1, c'_2 \in B'$ such that $r(c'_1) = r^Y(c_1) \cup Y$ and
  % $r(c'_2) = r^Y(c_2) \cup Y$. Consequently,
  % $r(c'_1) \cap \supp(B') = (r^Y(c_1) \cup Y) \cap (\supp(B) \cup Y) =
  % (r^Y(c_1)\cap \supp(B)) \cup Y$ and similarly,
  % $r(c'_2) \cap \supp(B') = (r^Y(c_2)\cap \supp(B)) \cup
  % Y$. Therefore, $r(c'_1) \cap \supp(B') = r(c'_2) \cap \supp(B')$ and
  % $r$ is not injective on the cube $B'$ of $C$, contradicting (C2) for
  % $r$.

  By Theorem~\ref{t:local-to-global}, to prove Assertion~(3), it is
  enough to show that $r_Y$ is the out-map of a USO of $C_Y$. Recall
  that for any $c \in C_Y$, we have $r_Y(c) = r(c_B)$, where $B$ is
  the unique $Y$-cube of $2^{\U}$ such that $\tg(B) = c$ and $c_B$ is
  the unique sink of $C \cap B$ for $o_r$.  By (C1), there exists an
  $r(c_B)$-cube $B'$ containing $c_B$ in $C$. Since
  $r(c_B)\cap Y = \varnothing$, there exists an $r(c_B)$-cube of $C_Y$
  containing $c$. Consequently, $C_Y$ and $r_Y$ satisfy (C1).

  Suppose there exists a cube $B$ in $C_Y$ violating (C2), i.e., there
  exist $c_1, c_2 \in C_Y \cap B$ such that
  $r_Y(c_1) \cap \supp(B)=r_Y(c_2) \cap \supp(B)$.  Any cube $B$ in
  $C_Y$ extends to a unique cube $B'$ of $2^{\U}$ such that
  $\supp(B') = \supp(B)\cup Y$. By definition of $r_Y$, there exist
  $c'_1, c'_2 \in B'$ such that $r(c'_1) = r_Y(c_1)$ and
  $r(c'_2) = r_Y(c_2)$. Consequently,
  $r(c'_1) \cap \supp(B') = r_Y(c_1) \cap \supp(B)$ since
  $r(c'_1)\cap Y = \varnothing$ and similarly
  $r(c'_2) \cap \supp(B')=r_Y(c_2) \cap \supp(B)$.  Consequently, the
  map $c \mapsto r(c)\cap \supp(B')$ is not injective on
  $C \cap B'$, contradicting Condition (R3) of
  Theorem~\ref{t:clashalt}.
\end{proof}

\subsection{Pre-representation Maps}\label{s:pre-rep}

We now show that we can find maps satisfying each of the conditions
(C1) and (C2). Nevertheless, we were not able to find a map satisfying
(C1) and (C2).  It is surprising that, while each $d$-cube has at
least $d^{\Omega(2^d)}$ USOs~\cite{Ma}, it is so difficult to find a
single USO for ample classes.

\begin{proposition}\label{prop:pre-rep}
  For any ample class $C$ there exists a bijection
  $r': C\rightarrow \uoX(C)$ and an injection
  $r'': C\rightarrow 2^{\U}$ such that $r'$ satisfies the condition
  (C1) and $r''$ satisfies the condition (C2).
\end{proposition}

\begin{proof}
  First we prove the existence of the bijection $r'$. For
  $s\in \uoX(C)$, denote by $N_s(C)$ the union of all $s$-cubes
  included in $C$ and call $N_s(C)$ the \emph{carrier} of $s$. For
  $S\subseteq \uoX(C)$, denote by $N_S(C)$ the union of all carriers
  $N_s(C), s\in S$.  Define a bipartite graph
  $\Gamma(C)=(C {\mathaccent\cdot\cup} \uoX(C),E)$, where there is an
  edge between a concept $c\in C$ and a strongly shattered set
  $s\in \uoX(C)$ if and only if $c$ belongs to the carrier
  $N_s(C)$. We assert that $\Gamma(C)$ admits a perfect matching $M$.
  By the definition of the edges of $\Gamma(C)$, if the edge $cs$ is
  in $M$, then $c$ belongs to $N_s(C)$ and thus the unique $s$-cube
  containing $c$ is included in $C$. Thus $r':C\rightarrow \uoX(C)$
  defined by setting $r'(c)=s$ if and only if $cs\in M$ is a bijection
  satisfying (C1).

  Since $C$ is ample, we have $\lvert \uoX(C)\rvert =\lvert C\rvert$,
  and thus to prove the existence of $M$, we show that the graph
  $\Gamma$ satisfies the conditions of Philip Hall's
  theorem~\cite{LoPl}: if $S$ is an arbitrary subset of simplices of
  $\uoX(C)$, then $\lvert N_S(C)\rvert \ge \lvert S\rvert$.  Indeed,
  since $C$ is ample, for any $s\in S$ the carrier $N_s(C)$ contains
  at least one $s$-cube, thus $S\subseteq
  \underline{X}(N_S(C))$. Consequently,
  $\lvert S\rvert\le \lvert\underline{X}(N_S(C))\rvert \le \lvert
  N_S(C)\rvert$ by the Sandwich Lemma applied to the class $N_S(C)$.

  \medskip We now prove that there exists an injection
  $r'': C \to 2^{\U}$ satisfying (C2). We prove the existence of $r''$
  by induction on the size of $\U$. If $\lvert \U\rvert=0$, $r''$
  trivially exists. Consider now $x \in X$ and note that there exists
  an injection $r''_x: C_x \to 2^{\U\setminus\{x\}}$ satisfying (C2)
  by induction hypothesis. We define $r''$ by:
  \begin{equation*}
    r''(c) = \begin{cases}
      r''_x(c) & \text{if }  x \notin c,\\
      r''_x(c\setminus\{x\}) & \text{if }  x \in c \text{ and }
      c\setminus\{x\} \notin C,\\
      r''_x(c\setminus\{x\})\cup\{x\} & \text{otherwise.}
    \end{cases}
  \end{equation*}
  It means that the orientation of the edges of $G(C)$ is obtained by
  keeping the orientation of the edges of $G(C_x)$ and orienting all
  $x$-edges of $G(C)$ from $c \cup \{x\}$ to $c$. It is easy to verify that
  $r''$ is injective.

  Consider two distinct concepts $c, c' \in C$ belonging to a common
  cube of $C$ and let $B = B(c,c')$ that is the minimal cube of $C$
  containing $c$ and $c'$. If $c \Delta c' \neq \{x\}$, then since
  $r''(c) \setminus \{x\} = r''_x(c\setminus \{x\})$, since
  $r''(c') \setminus \{x\} = r''_x(c'\setminus \{x\})$, and since
  $r''_x$ satisfies (C2) on the cube $B_x$ of $C_x$, there exists
  $y \in (c\setminus \{x\} \cup c'\setminus \{x\})\Delta
  (r''_x(c\setminus \{x\}) \cup r''_x(c'\setminus \{x\})) \subseteq (c
  \cup c')\Delta (r''(c) \cup r''(c'))$. Suppose now that $c' =
  c \cup \{x\}$. In this case, $x \in r''(c')\setminus r''_x(c)$ and
  $x \in (c\Delta c')\cap (r''(c)\Delta r''(c'))$.
\end{proof}

One can try to find representation maps for ample classes by extending
the approach for maximum classes: given ample classes $C$ and $D$ with
$D\subset C$, a representation map $r$ for $C$ is called
\emph{$D$-entering} if all edges $cd$ with $c\in C\setminus D$ and
$d\in D$ are directed by $o_r$ from $c$ to $d$.  The representation
map defined in Proposition~\ref{lem-unique-source} is $D$-entering.
Given $x\in \dom(C)$, suppose that $r_x$ is a $C^x$-entering
representation map for $C_x$.

We can extend the orientation $o_{r_x}$ to an orientation $o$ of
$G(C)$ as follows.  Each $x$-edge $cc'$ of $G(C)$ is directed
arbitrarily, while each other edge $cc'$ is directed as the edge
$c_xc'_x$ is directed by $o_{r_x}$.  Since~$o_{r_x}$ satisfies (C1),
(C2) and $r_x$ is $C^x$-entering, $o$ also satisfies (C1), (C2), thus
the map~$r_o$ is a representation map for $C$. So, ample classes would
admit representation maps, if \emph{for any ample classes
  $D\subseteq C$, any representation map $r'$ of $D$ extends to a
  $D$-entering representation map $r$ of $C$.}

\subsection{Representation Maps as ISRs} \label{s:ISR}

Our next result formulates the construction of representation maps for
ample classes as an instance of the Independent System of
Representatives problem.  A system $(G,(V_i)_{1\leq i\leq n})$
consisting of a graph $G$ and a partition $V_1 ,\ldots, V_n$ of its
vertex set $V(G)$ is called an \emph{ISR-system}.  An independent set
in $G$ of the form $\{v_1, \ldots, v_n\}$, where $v_i \in V_i$ for
each $1 \leq i \leq n$, is called an an \emph{Independent System of
  Representatives}, or \emph{ISR} for short~\cite{AhBeZi}.

Consider an ample class $C\subseteq 2^{\U}$. We build an ISR-system
$(G,(V_c)_{c \in C})$ as follows. For each concept $c$ and each set
$Y \subseteq \uoX(C)$ such that $c$ belongs to a $Y$-cube of $C$,
there is a vertex $(c,Y)$ in $V(G)$. For each concept $c\in C$, set
$V_c:=\{(c,Y) \in V(G)\}$. Finally, $E(G)$ is defined as follows:
there is an edge between two vertices $(c_1,Y_1)$, $(c_2, Y_2)$ in $G$
if $c_1, c_2$ are distinct vertices belonging to a common cube $B$
such that $Y_1 \cap \supp(B) = Y_2 \cap \supp(B)$.

\begin{proposition}\label{ISR}
  There is an ISR for $(G,(V_c)_{c \in C})$ if and only if $C$ admits
  a representation map.
\end{proposition}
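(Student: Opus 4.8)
The plan is to establish a direct correspondence between ISRs of the system $(G,(V_c)_{c\in C})$ and representation maps of $C$, by exploiting the characterization of representation maps in Theorem~\ref{t:local-to-global}(iii): a map $r:C\to X(C)$ is a representation map if and only if $r(c)\in X(C)$ for every $c$ (and each $c$ lies on an $r(c)$-cube, which is exactly condition (C1)) and $o_r$ satisfies the unique-sink condition (C2). First I would observe that any function that assigns to each $c\in C$ a set $Y_c\in X(C)$ such that $c$ belongs to a $Y_c$-cube of $C$ is precisely a choice of one vertex $(c,Y_c)\in V_c$ for each $c$; thus such \emph{pre-representation maps} (maps satisfying (C1) alone) correspond bijectively to \emph{systems of representatives} (transversals) of the partition $(V_c)_{c\in C}$, with no independence required yet. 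The content of the proposition is then that the extra independence condition in $G$ is exactly equivalent to the unique-sink condition (C2).

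The key step is therefore to show: a transversal $\{(c,Y_c):c\in C\}$ is \emph{independent} in $G$ if and only if the associated map $r(c):=Y_c$ satisfies (C2), i.e.\ $o_r$ is a USO on every cube of $C$. For the ``independent $\Rightarrow$ USO'' direction, suppose some cube $B$ of $C$ fails to have a unique sink for $o_r$; since $C\cap B=B$ is ample, by \cite[Lemma~2.3]{SzWe} (used exactly as in Corollary~\ref{c:cube} and the proof of Theorem~\ref{t:local-to-global}) the restriction of $r$ to $B$ is not $\Delta$-non-clashing, so there are distinct $c_1,c_2\in B$ with $(c_1\Delta c_2)\cap(r(c_1)\Delta r(c_2))=\varnothing$, equivalently $Y_{c_1}\cap\supp(B)=Y_{c_2}\cap\supp(B)$ (using that $c_1,c_2$ agree precisely outside $\supp(B)$ and differ precisely on the coordinates of $B$ that lie in the minimal subcube $B(c_1,c_2)\subseteq B$). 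But $c_1,c_2$ lie in the common cube $B$ with $Y_{c_1}\cap\supp(B)=Y_{c_2}\cap\supp(B)$, so by definition of $E(G)$ there is an edge $(c_1,Y_{c_1})(c_2,Y_{c_2})$ in $G$, contradicting independence. Conversely, if the transversal is not independent, there is an edge $(c_1,Y_{c_1})(c_2,Y_{c_2})$, i.e.\ a common cube $B$ with $Y_{c_1}\cap\supp(B)=Y_{c_2}\cap\supp(B)$; taking $B':=B(c_1,c_2)\subseteq B$, which is a cube of $C$, the two concepts $c_1,c_2$ witness that $r$ restricted to $B'$ is not $\Delta$-non-clashing, so again by \cite[Lemma~2.3]{SzWe} $o_r$ is not a USO on $B'$, so (C2) fails.

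Combining the two steps: ISRs of $(G,(V_c)_{c\in C})$ are exactly the transversals satisfying (C1) and (C2), which by Theorem~\ref{t:local-to-global} are exactly the representation maps of $C$; hence an ISR exists if and only if $C$ admits a representation map. The main obstacle I anticipate is the bookkeeping in the key step — translating the graph-theoretic edge condition $Y_{c_1}\cap\supp(B)=Y_{c_2}\cap\supp(B)$ into the clashing condition and back, being careful that the relevant cube on which (C2) is tested can be taken to be the minimal cube $B(c_1,c_2)$ (so that $c_1,c_2$ differ on \emph{all} of its support and agree off it), and invoking \cite[Lemma~2.3]{SzWe} in the correct direction each time; the rest is a routine unwinding of definitions.
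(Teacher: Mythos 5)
Your proposal is correct and follows essentially the same approach as the paper: a choice of $(c,Y_c)\in V_c$ for each $c$ encodes a map satisfying (C1), independence in $G$ encodes (C2), and Theorem~\ref{t:local-to-global} ties these to representation maps. One small slip to note: the clash $(c_1\Delta c_2)\cap(r(c_1)\Delta r(c_2))=\varnothing$ is \emph{not} equivalent to $Y_{c_1}\cap\supp(B)=Y_{c_2}\cap\supp(B)$ for the original cube $B$ --- it only yields that equality for the minimal cube $B(c_1,c_2)\subseteq B$, whose support is exactly $c_1\Delta c_2$ --- but since $B(c_1,c_2)$ is itself a cube of $C$ containing both concepts, it still witnesses the required edge of $G$, so the argument goes through after replacing $B$ by $B(c_1,c_2)$ as the witness.
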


\begin{proof}
  Assume first that $r$ is a representation map for $C$. We show that
  $\{(c,r(c))\}_{c \in C}$ is an ISR for $(G,(V_c)_{c \in C})$. By
  (C1), for every $c \in C$, $c$ belongs to an $r(c)$-cube of $C$ and
  thus $(c,r(c)) \in V(G)$. Moreover, if $\{(c,r(c))\}_{c \in C}$ is
  not an independent set of $G$, there exist two concepts $c_1, c_2$
  in a cube $B$ such that
  $r(c_1) \cap \supp(B) = r(c_2) \cap \supp(B)$, contradicting (C2).

  Conversely, if $\{(c,Y_c)\}_{c \in C}$ is an ISR for
  $(G,(V_c)_{c\in C})$, then the map $r: c \to \uoX(C)$ defined by
  $r(c) = Y_c$ is a representation map.  Indeed, (C1) is satisfied by
  the definition of the vertices of $V(G)$, (C2) is satisfied by the
  definition of the edges of $E(G)$ and since $\{(c,Y_c)\}_{c \in C}$
  is an independent set of $G$.
\end{proof}

\subsection*{Acknowledgements}
We are very grateful to the referees of this submission and to some of
the referees of previous versions of this paper for useful remarks
that helped improving the presentation of this work. We are also
grateful to Olivier Bousquet for several insightful discussions.  The
research of J.\ C.\ and V.\ C.\ was supported by the ANR project DISTANCIA
(ANR-17-CE40-0015).
S.\ M.\ is a Robert J.\ Shillman Fellow and was supported by the ISF, grant no.~1225/20, by an Azrieli Faculty Fellowship, and by BSF grant 2018385;
part of this project was carried while S.M. was at the Institute for Advanced Study and was supported by the NSF 
%under agreement No.\ 
grant CCF-1412958.
% The research of S.M.\ is supported by the Simons
% Foundation and the NSF; part of this project was carried while S.M.\
% was at the Institute for Advanced Study and was supported by the
% National Science Foundation under agreement No. CCF-1412958. 
The research of M.\ W.\ was partially supported by the NSF grant IIS-1619271.

\bibliographystyle{plainurl}
\bibliography{refs-lopsided}

\newpage
\appendix
\section{Two representation maps for Hall's concept class $C_H$}

We give two representation maps for Hall's concept class $C_H$
given in Figure~\ref{fig-hall}. The representation map $r_1$ presented
in Table~\ref{table-rm-M} was found by constructing Boolean clauses
for a matching between the concepts and the dimension sets of size
$\le 3$ that satisfies the non-clashing condition.  A satisfying
assignment was then found with the open source \textsf{MiniSat} solver.
% (\url{http://minisat.se/Main.html}).
The representation map $r_2$ presented in Table~\ref{table-rm-J} was
found by following the steps of the proof of Theorem~\ref{th-maximum}
by recursing on dimensions 12, 11, $\ldots$ In both cases, the
representation map $r_i$, $i=1,2$ of $C_H$ is described as follows: we
list all the 299 concepts as bit vectors and for each concept
$c \in C_H$, we underline the bits of $r_i(c)$ (a subset of size
$\le 3$ from $\{1,2,\ldots,12\}$).

The USOs corresponding to these two representation maps are given in
Figures~\ref{fig-rm-ce-M} and ~\ref{fig-rm-ce-J}; they were obtained
by using \textsf{Sage} and \textsf{Graphviz}.  In both cases, the
non-trivial strongly connected components are represented by the blue
boxes. Except from the arcs within these strongly connected
components, the other arcs are downward arcs.

% \subsection{On the number of non-trivial strongly connected
%   components}

\medskip

The USO from Figure~\ref{fig-rm-ce-M} has a unique non-trivial
strongly connected component (NSCC) $K$ while the one from
Figure~\ref{fig-rm-ce-J} has two such components $K'$ and $K''$. Since
those components are small (each of them contains $12$ concepts),
these USOs are not ``far'' from being acyclic and $C_H$ is (not
``far'' from being) a minimal corner-free example.

Given a concept class $C$ and an orientation $o$ of the edges of
$G(C)$, we denote by $\dGo(C)$ the resulting directed graph.
From the following proposition, we deduce that any representation map
for $C_H$ obtained by the algorithm of Theorem~\ref{th-maximum}
contains at least two NSCCs. In particular, this shows that the 
representation map $r_1$ cannot be obtained in such a way.

\begin{proposition}
  For any maximum class $C$ without corners,  any representation map
  $r$ of $C$ computed by Theorem~\ref{th-maximum}, and the USO $o$
  corresponding to $r$, the directed graph $\dG_{o}(C)$ contains at
  least two non-trivial strongly connected components.
\end{proposition}
 
\begin{proof}
  We prove the result by induction on the dimension of $C$. If $C$ is
  of dimension at most $2$, then by Proposition~\ref{dismantlable_2dim} 
  $C$ contains a corner and we are done.

  Now suppose that the dimension of $C$ is at least 3. Recall that the first step 
  is to contract $C$ along one of the coordinates
  $x \in \U$ (to construct $r_2$, we contracted $C_H$ along coordinate
  $12$, see Figure~\ref{fig-rm-ce-CD}) and to compute recursively a
  USO $o^x$ (and the corresponding representation map $r^x$) of
  $C^x$. Then we extend $o^x$ to a USO $o_x$ (and find the
  corresponding representation map $r_x$) of $C_x$. Finally, we obtain
  $o$ from $o_x$ by keeping the orientation of all edges that do not
  correspond to coordinate $x$ and by orienting all $x$-edges from
  $1C^x$ to $0C^x$.

  Observe that any directed cycle of $\dGo(C)$ corresponds either to
  a cycle of $C^x$ or to a directed cycle of a connected component of
  $G(C_x \setminus C^x)$.  If $C^x$ does not contain a corner, then by
  induction hypothesis, $C^x$ contains at least two NSCCs and thus
  there are two NSCCs in $0 C^x$ and thus in $C$ and we are done.

  Suppose now that $C^x$ contains a corner.  Consider a connected
  component $A$ of $G(C_x \setminus C^x)$. If $A$ does not contain any
  directed cycle, then the concept class $A$ is acyclic and it
  contains a source $s$. By (C1) applied to $s$ and $r$, we conclude 
  that $s$ is a corner of $C$. Consequently, each connected component
  of $G(C_x \setminus C^x)$ contains at least one NSCC.

  If $C^x$ is not a separator of $G(C_x)$, i.e., if $N_x(C)$ is not a
  separator of $G(C)$, then without loss of generality, we can assume
  that there is no edge from $0C^x$ to $C\setminus N_x(C)$. Consider
  any corner $c$ of $C^x$ that is contained in a unique maximal cube
  $B^x$ of $C^x$. In $C$ there exists a unique cube $B$ containing $c$
  such that $\supp(B) = \supp(B^x) \cup \{x\}$. Since $c$ has no
  neighbor outside the carrier $N_x(C)$, $B$ is the unique maximal cube containing
  $c$ and thus $c$ is a corner of $C$, a contradiction.

  We can thus assume that $C^x$ is a separator of $G(C_x)$ and thus
  $G(C_x \setminus C^x)$ contains at least two connected
  components, and by the previous assertion, each of them contains at
  least one NSCC and we are done.
\end{proof}

%\subsection{}

\medskip

A strongly connected component (SCC) $S$ of a directed graph $\dG$ is
called a \emph{source-component} if there is no arc from $u$ to $v$
with $v \in S$ and $u \notin S$.

In the directed graph $\dGo$ corresponding to a representation map $r$
of an ample class $C$, if a source-component $S$ is reduced to a
single concept $c$ (i.e., $S$ is trivial), then $c$ is a corner of
$C$. Thus, one can view the source-components of $\dGo$ as a
generalization of corners. However, the definition of
source-components depends on a given representation map $r$ and the
corresponding orientation $o_r$ of $G(C)$, while the corners are
defined in the undirected graph $G(C)$. In the following proposition,
we prove that, similarly to corners, removing a source component $S$
from an ample class $C$ results into an ample class $C\setminus
S$. Moreover, the restriction of the representation map $r$ to
$C\setminus S$ is still a representation map.

\begin{proposition} \label{source-component}
  Let $C$ be an ample class $C$, $r$ be a representation map of $C$, $o=o_r$ the 
  USO defined by $r$, and $\dGo$ be the corresponding directed graph. Then  
  for any source-component $S$ of $\dGo$, $C \setminus S$ is an ample
  class and the restriction of $r$ to $C \setminus S$ is a
  representation map of $C\setminus S$.
\end{proposition}

\begin{proof} Set $C':=C\setminus S$ and denote by $r'$ the restriction of $r$ to $C'$. 
Since $r$ satisfies (C2) on $C$,  $r'$ satisfies (C2) 
on $C'$. Now, we show that $r'$ satisfies (C1) on $C'$. Pick any concept $c\in C'$ and let 
$B$ be the cube of $2^{\U}$ containing $c$ and defined by $r(c)$. By (C1), $B$ is included 
in $C$. If $B$ is also included in $C'$, then we are done. So, suppose that $B\cap S\ne \varnothing$ 
and pick a concept $c'\in B\cap S$ closest to $c$. Let $B'=B(c,c')$ be the cube spanned by $c$ and $c'$. 
Since $B'\subset B$, $B'$ is a cube of $C$. By the definition of $B$, $c$ is a source of $B$ and thus of $B'$. 
On the other hand, from the choice of $c'$ all neighbors of $c'$ in $B'$ belong to $C\setminus S$. Since 
$c'\in S$ and $S$ is a source-component of $\dGo$, all these edges have $c'$ as a source. Consequently, $c'$ 
is a second source of $B'$, contrary to condition (C2) applied to $C$. This shows that $B$ is included in $C'$, 
establishing that $r'$ satisfies (C1) on $C'$. 

Since $r$ is a bijective map from $C$ to $X(C)$, from (C1) applied to $r'$ we conclude that $r'$ is an injective map from 
$C'$ to $\uX(C')$, yielding $|C'|\le |\uX(C')|$. Since $|\uX(C')|\le |C'|$ by the sandwich lemma, we deduce that 
$|C'|=|\uX(C')|$ and thus $C'$ is ample by Theorem \ref{thm:ample1}(iv). Since the out-map of $r'$ satisfies 
(C1) and (C2), $r'$ is a representation map of $C'$. 
\end{proof}

  % Ideas (USO):
  %  (C2) is trivial

  %  (C1) for a vertex v, consider the closest vertex in the cube
  %  spanned by r(v) that is in K and obtain a contradiction (two
  %  sources)

  %  Ideas(ampleness):
  %  |C\setminus K| >= X(S\setminus K) : sandwich lemma
  %  |C\setminus K| <= X(S\setminus K) : use r and (C1)

\begin{corollary}
  $C_H\setminus K$ and $C_H\setminus (K' \cup K'')$ are ample classes admitting
  corner peelings.
\end{corollary}

\begin{proof} By Proposition \ref{source-component}, the concept
  classes $C_H\setminus K$ and $C_H\setminus K'$ are ample classes and
  the restrictions of $r$ to them are representation maps. Applying
  Proposition \ref{source-component} we conclude that
  $C_H\setminus (K' \cup K'')$ is an ample class and the restriction
  of $r$ to $C_H\setminus (K' \cup K'')$ is a representation map.
  Moreover, since all other NSCCs of $\dG(C)$ are trivial, the
  orientations of the edges of graphs $G(C\setminus K)$ and
  $C_H\setminus (K' \cup K'')$ defined by these representation maps
  are acyclic USOs. By Proposition \ref{prop-acyclicUSO},
  $C_H\setminus K$ and $C_H\setminus (K' \cup K'')$ admit corner
  peelings.
\end{proof}

\newcommand{\0}{0}
\newcommand{\1}{1}
\newcommand{\2}{\underline{0}}
\newcommand{\3}{\underline{1}}
\setlength{\tabcolsep}{1.1pt}

\newpage

\begin{table}[ht]
    \caption{The representation $r_1$ map for $C_H$ obtained using \textsf{MiniSat}}
  \centering
{\small
\begin{tabular}{ccc ccc c}
$\2\2\0\2\0\0\0\0\0\0\0\0$ &$\0\2\0\0\0\0\0\0\0\0\2\3$ &$\2\2\0\0\0\0\0\0\0\0\3\0$ &$\2\2\0\0\0\0\0\0\0\0\1\3$ &$\0\2\0\2\0\0\0\3\0\0\0\0$ &$\0\0\0\0\0\2\0\3\0\0\0\3$ &$\0\0\0\0\0\2\0\3\0\0\3\1$\\
$\0\2\0\2\0\0\3\1\0\0\0\0$ &$\0\0\0\0\0\2\3\1\0\0\0\3$ &$\0\0\0\0\0\2\3\1\0\0\3\1$ &$\0\0\0\2\0\2\1\1\0\3\1\1$ &$\0\0\2\0\0\0\1\1\3\2\1\1$ &$\0\0\0\2\0\2\1\1\3\1\1\1$ &$\0\2\0\0\0\3\0\0\0\0\0\2$\\
$\0\2\0\0\0\3\0\0\0\0\2\1$ &$\2\2\0\0\0\3\0\0\0\0\1\1$ &$\0\2\0\0\0\3\0\3\0\0\0\0$ &$\0\0\0\0\2\1\0\3\0\0\0\3$ &$\0\0\0\0\2\1\0\3\0\0\3\1$ &$\0\2\0\0\0\3\3\1\0\0\0\0$ &$\0\0\0\0\2\1\3\1\0\0\0\3$\\
$\0\0\0\0\2\1\3\1\0\0\3\1$ &$\0\0\0\2\2\1\1\1\0\3\1\1$ &$\0\0\0\0\0\3\1\1\3\2\1\1$ &$\0\0\0\2\2\1\1\1\3\1\1\1$ &$\2\0\0\0\3\2\0\0\0\0\1\1$ &$\0\0\0\0\3\2\0\3\0\0\1\1$ &$\0\0\0\0\3\2\3\1\0\0\1\1$\\
$\0\0\0\0\3\2\1\1\3\0\1\1$ &$\0\0\0\0\3\2\1\1\1\3\1\1$ &$\0\2\0\0\3\1\0\0\0\0\0\2$ &$\0\2\0\0\3\1\0\0\0\0\2\1$ &$\2\2\0\0\3\1\0\0\0\0\1\1$ &$\0\2\0\0\3\1\0\3\0\0\0\0$ &$\0\2\0\0\1\1\0\3\0\0\0\3$\\
$\0\2\0\0\1\1\0\3\0\0\3\1$ &$\0\2\0\0\3\1\3\1\0\0\0\0$ &$\0\2\0\0\1\1\3\1\0\0\0\3$ &$\0\2\0\0\1\1\3\1\0\0\3\1$ &$\0\2\0\2\1\1\1\1\0\3\0\0$ &$\0\2\0\0\1\1\1\1\0\3\0\3$ &$\0\2\0\0\1\1\1\1\0\3\3\1$\\
$\0\0\0\0\3\1\1\1\3\2\1\1$ &$\0\2\0\2\1\1\1\1\3\1\0\0$ &$\0\2\0\0\1\1\1\1\3\1\0\3$ &$\0\2\0\0\1\1\1\1\3\1\3\1$ &$\2\2\2\1\0\0\0\0\0\0\0\0$ &$\0\2\2\1\0\0\0\3\0\0\0\0$ &$\0\2\2\1\0\0\3\1\0\0\0\0$\\
$\0\0\0\3\0\2\1\1\0\0\0\3$ &$\0\0\0\3\0\2\1\1\0\0\3\1$ &$\0\0\2\1\0\2\1\1\0\3\1\1$ &$\0\0\2\1\0\2\1\1\3\1\1\1$ &$\0\2\0\3\0\3\1\1\0\0\0\0$ &$\0\0\0\3\2\1\1\1\0\0\0\3$ &$\0\0\0\3\2\1\1\1\0\0\3\1$\\
$\0\0\2\1\2\1\1\1\0\3\1\1$ &$\0\0\2\1\2\1\1\1\3\1\1\1$ &$\0\0\0\3\3\2\1\1\1\1\1\1$ &$\0\2\0\3\3\1\1\1\0\0\0\0$ &$\0\0\0\3\1\1\1\1\0\2\0\3$ &$\0\0\0\3\1\1\1\1\0\2\3\1$ &$\0\2\2\1\1\1\1\1\0\3\0\0$\\
$\0\0\0\3\1\1\1\1\2\1\0\3$ &$\0\0\0\3\1\1\1\1\2\1\3\1$ &$\0\2\2\1\1\1\1\1\3\1\0\0$ &$\0\2\0\3\1\1\1\1\1\1\0\3$ &$\0\2\0\3\1\1\1\1\1\1\3\1$ &$\0\0\3\2\0\0\1\1\0\2\1\1$ &$\0\0\3\2\0\0\1\1\2\1\1\1$\\
$\0\0\1\2\0\0\1\1\3\2\1\1$ &$\0\0\3\2\2\0\1\1\1\1\1\1$ &$\0\0\3\2\1\2\1\1\1\1\1\1$ &$\0\2\3\2\1\1\1\1\1\1\1\1$ &$\2\2\1\1\0\0\0\0\2\0\0\0$ &$\0\2\1\1\0\0\0\0\2\3\0\0$ &$\2\2\1\1\0\0\0\0\1\2\0\0$\\
$\2\2\1\1\0\0\0\2\1\1\0\0$ &$\0\2\1\1\0\0\0\3\0\2\0\0$ &$\0\2\1\1\0\0\0\3\2\1\0\0$ &$\2\2\1\1\0\0\2\1\1\1\0\0$ &$\0\0\1\1\0\0\3\2\0\2\0\0$ &$\0\0\1\1\0\0\3\2\2\1\0\0$ &$\0\0\1\1\0\0\3\0\3\2\0\0$\\
$\2\0\1\1\0\0\3\2\1\1\0\0$ &$\0\2\1\1\0\0\3\1\0\2\0\0$ &$\0\0\3\1\0\2\1\1\0\0\0\3$ &$\0\0\3\1\0\2\1\1\0\0\3\1$ &$\0\2\1\1\0\0\3\1\2\1\0\0$ &$\0\0\1\1\0\2\1\1\0\3\0\3$ &$\0\0\1\1\0\2\1\1\0\3\3\1$\\
$\0\0\1\1\0\0\1\3\3\2\0\0$ &$\0\0\1\1\0\0\1\1\3\2\0\3$ &$\0\0\1\1\0\0\1\1\3\2\3\1$ &$\2\2\1\1\0\0\1\1\1\1\0\0$ &$\0\0\1\1\0\2\1\1\3\1\0\3$ &$\0\0\1\1\0\2\1\1\3\1\3\1$ &$\0\2\3\1\0\3\1\1\0\0\0\0$\\
$\0\0\3\1\2\1\1\1\0\0\0\3$ &$\0\0\3\1\2\1\1\1\0\0\3\1$ &$\0\2\1\1\0\3\1\1\0\3\0\0$ &$\0\0\1\1\2\1\1\1\0\3\0\3$ &$\0\0\1\1\2\1\1\1\0\3\3\1$ &$\0\2\1\1\0\3\1\1\3\1\0\0$ &$\0\0\1\1\2\1\1\1\3\1\0\3$\\
$\0\0\1\1\2\1\1\1\3\1\3\1$ &$\0\0\3\1\3\2\1\1\1\1\1\1$ &$\0\2\3\1\3\1\1\1\0\0\0\0$ &$\0\0\3\1\1\1\1\1\0\2\0\3$ &$\0\0\3\1\1\1\1\1\0\2\3\1$ &$\0\2\1\1\3\1\1\1\0\3\0\0$ &$\0\0\3\1\1\1\1\1\2\1\0\3$\\
$\0\0\3\1\1\1\1\1\2\1\3\1$ &$\0\2\1\1\3\1\1\1\3\1\0\0$ &$\0\2\3\1\1\1\1\1\1\1\0\3$ &$\0\2\3\1\1\1\1\1\1\1\3\1$ &$\2\1\0\2\0\0\0\2\0\0\0\0$ &$\2\1\0\0\0\2\0\0\0\0\0\3$ &$\2\1\0\0\0\0\0\0\0\0\3\2$\\
$\2\1\0\0\0\2\0\0\0\0\3\1$ &$\2\1\0\2\0\0\2\1\0\0\0\0$ &$\2\1\0\2\0\2\1\1\0\0\0\0$ &$\2\1\0\0\0\3\0\2\0\0\0\0$ &$\2\1\0\0\2\1\0\0\0\0\0\3$ &$\2\1\0\0\2\1\0\0\0\0\3\1$ &$\2\1\0\0\0\3\2\1\0\0\0\0$\\
$\2\1\0\2\2\1\1\1\0\0\0\0$ &$\2\1\0\0\3\1\0\2\0\0\0\0$ &$\2\1\0\0\1\1\0\2\0\0\0\3$ &$\2\1\0\0\1\1\0\2\0\0\3\1$ &$\2\1\0\0\3\1\2\1\0\0\0\0$ &$\2\1\0\0\1\1\2\1\0\0\0\3$ &$\2\1\0\0\1\1\2\1\0\0\3\1$\\
$\2\1\0\2\1\1\1\1\0\2\0\0$ &$\2\1\0\0\1\1\1\1\0\2\0\3$ &$\2\1\0\0\1\1\1\1\0\2\3\1$ &$\2\1\0\2\1\1\1\1\2\1\0\0$ &$\2\1\0\0\1\1\1\1\2\1\0\3$ &$\2\1\0\0\1\1\1\1\2\1\3\1$ &$\2\1\0\2\1\1\1\1\1\1\0\2$\\
$\2\1\0\2\1\1\1\1\1\1\2\1$ &$\2\1\2\2\1\1\1\1\1\1\1\1$ &$\2\1\2\1\0\0\0\2\0\0\0\0$ &$\2\1\2\1\0\0\2\1\0\0\0\0$ &$\2\1\2\1\0\2\1\1\0\0\0\0$ &$\2\1\2\1\2\1\1\1\0\0\0\0$ &$\2\1\2\1\1\1\1\1\0\2\0\0$\\
$\2\1\2\1\1\1\1\1\2\1\0\0$ &$\2\1\2\1\1\1\1\1\1\1\0\2$ &$\2\1\2\1\1\1\1\1\1\1\2\1$ &$\2\1\2\1\1\1\1\1\1\1\1\1$ &$\2\1\1\2\1\1\1\1\1\1\1\1$ &$\2\1\1\1\0\0\0\2\0\2\0\0$ &$\2\1\1\1\0\0\0\2\2\1\0\0$\\
$\2\1\1\1\0\0\0\0\3\2\0\0$ &$\2\1\1\1\0\0\0\2\1\1\0\0$ &$\2\1\1\1\0\0\2\1\0\2\0\0$ &$\2\1\1\1\0\0\2\1\2\1\0\0$ &$\2\1\1\1\0\0\2\1\1\1\0\0$ &$\2\1\1\1\0\2\1\1\0\2\0\0$ &$\2\1\1\1\0\2\1\1\2\1\0\0$\\
$\2\1\1\1\0\2\1\1\1\1\0\0$ &$\2\1\1\1\2\1\1\1\0\2\0\0$ &$\2\1\1\1\2\1\1\1\2\1\0\0$ &$\2\1\1\1\2\1\1\1\1\1\0\0$ &$\2\1\1\1\1\1\1\1\0\2\0\0$ &$\2\1\1\1\1\1\1\1\2\1\0\0$ &$\2\1\1\1\1\1\1\1\1\1\0\2$\\
$\2\1\1\1\1\1\1\1\1\1\2\1$ &$\2\1\1\1\1\1\1\1\1\1\1\1$ &$\1\2\0\2\0\0\0\0\0\0\0\0$ &$\1\2\0\0\0\0\0\0\0\0\3\0$ &$\1\2\0\0\0\0\0\0\0\0\1\3$ &$\1\2\0\0\0\3\0\0\0\0\1\1$ &$\1\2\0\0\3\2\0\0\0\0\1\1$\\
$\1\2\0\0\3\1\0\0\0\0\1\1$ &$\1\2\2\1\0\0\0\0\0\0\0\0$ &$\1\2\1\1\0\0\0\0\2\0\0\0$ &$\1\2\1\1\0\0\0\0\1\2\0\0$ &$\1\2\1\1\0\0\2\2\1\1\0\0$ &$\1\2\1\1\0\0\2\1\1\1\0\0$ &$\1\2\1\1\0\0\1\2\1\1\0\0$\\
$\1\2\1\1\0\0\1\1\1\1\0\0$ &$\1\1\0\2\0\2\0\2\0\0\0\0$ &$\1\1\0\0\0\2\0\0\0\0\0\3$ &$\1\1\0\0\2\0\0\0\0\0\3\2$ &$\1\1\0\0\0\2\0\0\0\0\3\1$ &$\1\1\0\2\0\2\2\1\0\0\0\0$ &$\1\1\0\2\0\2\1\1\0\0\0\0$\\
$\1\1\0\2\2\1\0\2\0\0\0\0$ &$\1\1\0\0\2\1\0\0\0\0\0\3$ &$\1\1\0\0\2\1\0\0\0\0\3\1$ &$\1\1\0\2\2\1\2\1\0\0\0\0$ &$\1\1\0\2\2\1\1\1\0\0\0\0$ &$\1\1\0\0\3\2\0\0\0\0\0\2$ &$\1\1\0\0\3\2\0\0\0\0\2\1$\\
$\1\1\0\0\1\2\0\0\0\0\3\2$ &$\1\1\0\0\3\2\0\0\0\0\1\1$ &$\1\1\0\2\1\1\0\2\0\2\0\0$ &$\1\1\0\0\1\1\0\2\0\2\0\3$ &$\1\1\0\0\1\1\0\0\0\2\3\2$ &$\1\1\0\0\1\1\0\2\0\2\3\1$ &$\1\1\0\2\1\1\0\2\2\1\0\0$\\
$\1\1\0\0\1\1\0\2\2\1\0\3$ &$\1\1\0\0\1\1\0\0\2\1\3\2$ &$\1\1\0\0\1\1\0\2\2\1\3\1$ &$\1\1\2\2\1\1\0\0\1\1\0\2$ &$\1\1\2\2\1\1\0\0\1\1\2\1$ &$\1\1\2\0\1\1\0\0\1\1\3\2$ &$\1\1\2\2\1\1\2\0\1\1\1\1$\\
$\1\1\0\2\1\1\2\1\0\2\0\0$ &$\1\1\0\0\1\1\2\1\0\2\0\3$ &$\1\1\0\0\1\1\2\1\0\2\3\1$ &$\1\1\0\2\1\1\2\1\2\1\0\0$ &$\1\1\0\0\1\1\2\1\2\1\0\3$ &$\1\1\0\0\1\1\2\1\2\1\3\1$ &$\1\1\0\2\1\1\0\3\1\1\0\2$\\
$\1\1\0\2\1\1\0\3\1\1\2\1$ &$\1\1\0\2\1\1\2\3\1\1\1\1$ &$\1\1\2\2\1\1\1\2\1\1\1\1$ &$\1\1\0\2\1\1\1\1\0\2\0\0$ &$\1\1\0\0\1\1\1\1\0\2\0\3$ &$\1\1\0\0\1\1\1\1\0\2\3\1$ &$\1\1\0\2\1\1\1\1\2\1\0\0$\\
$\1\1\0\0\1\1\1\1\2\1\0\3$ &$\1\1\0\0\1\1\1\1\2\1\3\1$ &$\1\1\0\2\1\1\3\1\1\1\0\2$ &$\1\1\0\2\1\1\3\1\1\1\2\1$ &$\1\1\2\2\1\1\1\1\1\1\1\1$ &$\1\1\2\1\0\2\0\2\0\0\0\0$ &$\1\1\2\1\0\2\2\1\0\0\0\0$\\
$\1\1\2\1\0\2\1\1\0\0\0\0$ &$\1\1\2\1\2\1\0\2\0\0\0\0$ &$\1\1\2\1\2\1\2\1\0\0\0\0$ &$\1\1\2\1\2\1\1\1\0\0\0\0$ &$\1\1\2\1\1\1\0\2\0\2\0\0$ &$\1\1\2\1\1\1\0\2\2\1\0\0$ &$\1\1\2\1\1\1\0\2\1\1\0\2$\\
$\1\1\2\1\1\1\0\2\1\1\2\1$ &$\1\1\2\1\1\1\2\2\1\1\1\1$ &$\1\1\2\1\1\1\2\1\0\2\0\0$ &$\1\1\2\1\1\1\2\1\2\1\0\0$ &$\1\1\2\1\1\1\2\1\1\1\0\2$ &$\1\1\2\1\1\1\2\1\1\1\2\1$ &$\1\1\2\1\1\1\2\1\1\1\1\1$\\
$\1\1\2\1\1\1\1\2\1\1\1\1$ &$\1\1\2\1\1\1\1\1\0\2\0\0$ &$\1\1\2\1\1\1\1\1\2\1\0\0$ &$\1\1\2\1\1\1\1\1\1\1\0\2$ &$\1\1\2\1\1\1\1\1\1\1\2\1$ &$\1\1\2\1\1\1\1\1\1\1\1\1$ &$\1\1\1\2\1\1\0\0\1\1\2\2$\\
$\1\1\1\2\1\1\0\0\1\1\2\1$ &$\1\1\1\2\1\1\0\0\1\1\1\2$ &$\1\1\1\2\1\1\2\0\1\1\1\1$ &$\1\1\1\2\1\1\1\2\1\1\1\1$ &$\1\1\1\2\1\1\1\1\1\1\1\1$ &$\1\1\1\1\0\2\0\2\0\2\0\0$ &$\1\1\1\1\0\2\0\2\2\1\0\0$\\
$\1\1\1\1\0\0\0\0\3\2\0\0$ &$\1\1\1\1\0\2\2\2\1\1\0\0$ &$\1\1\1\1\0\2\2\1\0\2\0\0$ &$\1\1\1\1\0\2\2\1\2\1\0\0$ &$\1\1\1\1\0\2\2\1\1\1\0\0$ &$\1\1\1\1\0\2\1\2\1\1\0\0$ &$\1\1\1\1\0\2\1\1\0\2\0\0$\\
$\1\1\1\1\0\2\1\1\2\1\0\0$ &$\1\1\1\1\0\2\1\1\1\1\0\0$ &$\1\1\1\1\2\1\0\2\0\2\0\0$ &$\1\1\1\1\2\1\0\2\2\1\0\0$ &$\1\1\1\1\2\1\2\2\1\1\0\0$ &$\1\1\1\1\2\1\2\1\0\2\0\0$ &$\1\1\1\1\2\1\2\1\2\1\0\0$\\
$\1\1\1\1\2\1\2\1\1\1\0\0$ &$\1\1\1\1\2\1\1\2\1\1\0\0$ &$\1\1\1\1\2\1\1\1\0\2\0\0$ &$\1\1\1\1\2\1\1\1\2\1\0\0$ &$\1\1\1\1\2\1\1\1\1\1\0\0$ &$\1\1\1\1\1\1\0\2\0\2\0\0$ &$\1\1\1\1\1\1\0\2\2\1\0\0$\\
$\1\1\1\1\1\1\2\2\1\1\2\0$ &$\1\1\1\1\1\1\0\2\1\1\2\3$ &$\1\1\1\1\1\1\2\2\1\1\1\2$ &$\1\1\1\1\1\1\2\2\1\1\1\1$ &$\1\1\1\1\1\1\2\1\0\2\0\0$ &$\1\1\1\1\1\1\2\1\2\1\0\0$ &$\1\1\1\1\1\1\2\1\1\1\2\2$\\
$\1\1\1\1\1\1\2\1\1\1\2\1$ &$\1\1\1\1\1\1\2\1\1\1\1\2$ &$\1\1\1\1\1\1\2\1\1\1\1\1$ &$\1\1\1\1\1\1\1\2\1\1\2\0$ &$\1\1\1\1\1\1\1\2\1\1\1\2$ &$\1\1\1\1\1\1\1\2\1\1\1\1$ &$\1\1\1\1\1\1\1\1\0\2\0\0$\\
$\1\1\1\1\1\1\1\1\2\1\0\0$ &$\1\1\1\1\1\1\1\1\1\1\2\2$ &$\1\1\1\1\1\1\1\1\1\1\2\1$ &$\1\1\1\1\1\1\1\1\1\1\1\2$ &$\1\1\1\1\1\1\1\1\1\1\1\1$ &&
\end{tabular}
}
\label{table-rm-M}
\end{table}

\newpage

\begin{figure}[h]
  \includegraphics[scale=0.13]{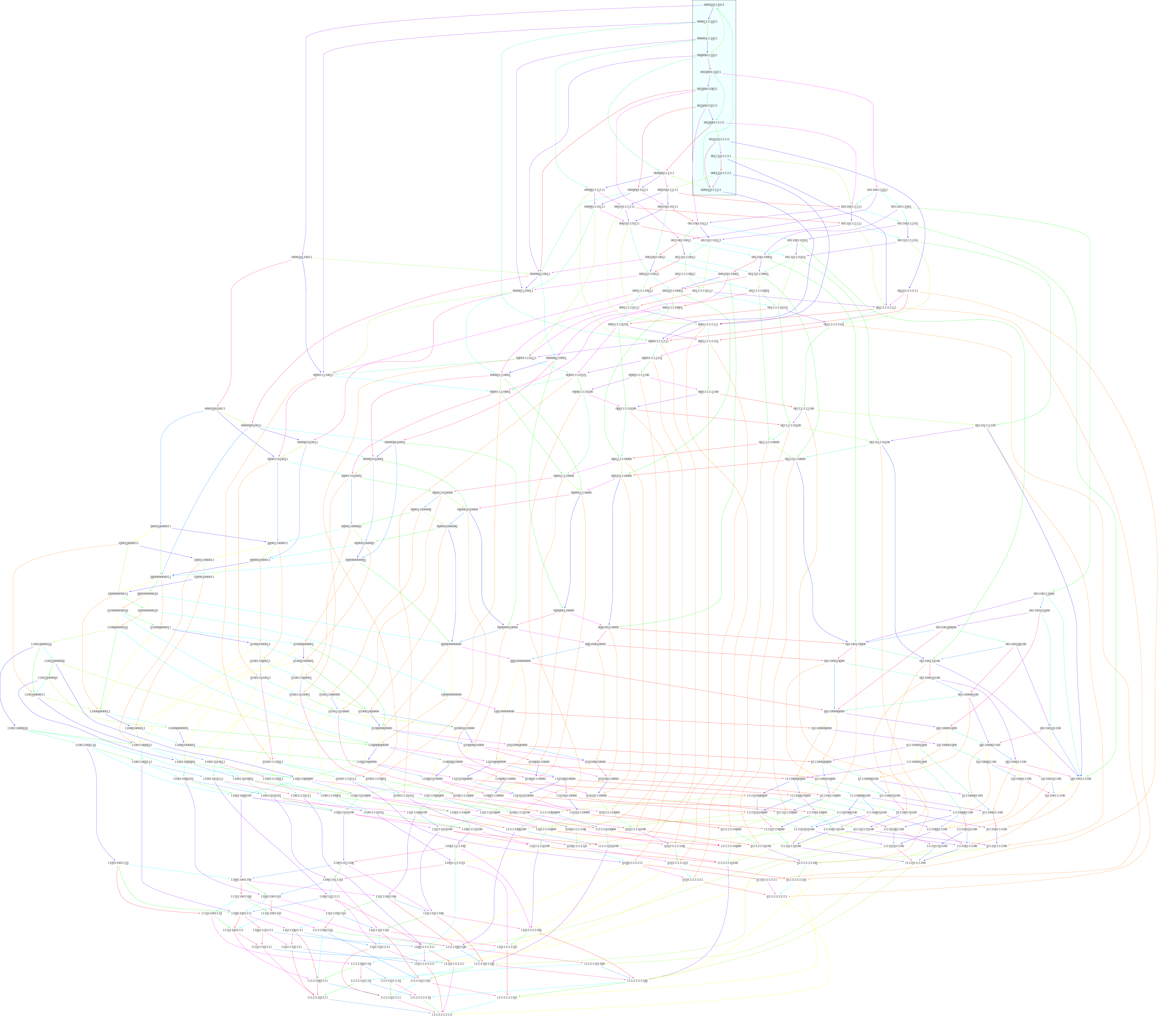}%
  \caption{The USO corresponding to the representation map $r_1$ of
    $C_H$ described in Table~\ref{table-rm-M}. The 12-vertex NSCC
      is in the blue box. Best viewed by zooming in.}%
  \label{fig-rm-ce-M}
\end{figure}

\newpage
\begin{table}[ht]
    \caption{The representation map $r_2$ for $C_H$
    obtained using Theorem~\ref{th-maximum} 
    by recursing on dimensions 12, 11, $\ldots$}
  \centering
{\small
\begin{tabular}{ccc ccc c}
$\0\2\0\0\0\0\0\0\0\0\0\0$ &
$\0\2\0\0\0\0\0\0\0\0\0\3$ &
$\0\2\0\0\0\0\0\0\0\0\3\0$ &
$\0\2\0\0\0\0\0\0\0\0\3\3$ &
$\0\0\0\0\0\2\0\3\0\0\0\0$ &
$\0\0\0\0\0\2\0\3\0\0\0\3$ &
$\0\0\0\0\0\2\0\3\0\0\3\1$ \\
$\0\0\0\0\0\2\3\1\0\0\0\0$ &
$\0\0\0\0\0\2\3\1\0\0\0\3$ &
$\0\0\0\0\0\2\3\1\0\0\3\1$ &
$\0\0\0\2\0\2\1\1\0\3\1\1$ &
$\0\0\2\0\0\0\1\1\3\2\1\1$ &
$\0\0\0\2\0\2\1\1\3\1\1\1$ &
$\0\2\0\0\0\3\0\0\0\0\0\0$ \\
$\0\2\0\0\0\3\0\0\0\0\0\3$ &
$\0\2\0\0\0\3\0\0\0\0\3\1$ &
$\0\0\0\0\2\1\0\3\0\0\0\0$ &
$\0\0\0\0\2\1\0\3\0\0\0\3$ &
$\0\0\0\0\2\1\0\3\0\0\3\1$ &
$\0\0\0\0\2\1\3\1\0\0\0\0$ &
$\0\0\0\0\2\1\3\1\0\0\0\3$ \\
$\0\0\0\0\2\1\3\1\0\0\3\1$ &
$\0\0\0\2\2\1\1\1\0\3\1\1$ &
$\0\0\0\0\0\3\1\1\3\2\1\1$ &
$\0\0\0\2\2\1\1\1\3\1\1\1$ &
$\2\0\0\0\3\2\0\0\0\0\1\1$ &
$\0\0\0\0\3\2\0\3\0\0\1\1$ &
$\0\0\0\0\3\2\3\1\0\0\1\1$ \\
$\0\0\0\0\3\2\1\1\3\0\1\1$ &
$\0\0\0\0\3\2\1\1\1\3\1\1$ &
$\0\2\0\0\3\1\0\0\0\0\0\0$ &
$\0\2\0\0\3\1\0\0\0\0\0\3$ &
$\0\2\0\0\3\1\0\0\0\0\3\1$ &
$\0\2\0\0\1\1\0\3\0\0\0\0$ &
$\0\2\0\0\1\1\0\3\0\0\0\3$ \\
$\0\2\0\0\1\1\0\3\0\0\3\1$ &
$\0\2\0\0\1\1\3\1\0\0\0\0$ &
$\0\2\0\0\1\1\3\1\0\0\0\3$ &
$\0\2\0\0\1\1\3\1\0\0\3\1$ &
$\0\2\0\0\1\1\1\1\0\3\0\0$ &
$\0\2\0\0\1\1\1\1\0\3\0\3$ &
$\0\2\0\0\1\1\1\1\0\3\3\1$ \\
$\0\0\0\0\3\1\1\1\3\2\1\1$ &
$\0\2\0\0\1\1\1\1\3\1\0\0$ &
$\0\2\0\0\1\1\1\1\3\1\0\3$ &
$\0\2\0\0\1\1\1\1\3\1\3\1$ &
$\0\2\0\3\0\0\0\2\0\0\0\0$ &
$\0\2\0\3\0\0\2\1\0\0\0\0$ &
$\0\0\0\3\0\2\1\1\0\0\0\0$ \\
$\0\0\0\3\0\2\1\1\0\0\0\3$ &
$\0\0\0\3\0\2\1\1\0\0\3\1$ &
$\0\0\2\1\0\2\1\1\0\3\1\1$ &
$\0\0\2\1\0\2\1\1\3\1\1\1$ &
$\0\0\0\3\2\1\1\1\0\0\0\0$ &
$\0\0\0\3\2\1\1\1\0\0\0\3$ &
$\0\0\0\3\2\1\1\1\0\0\3\1$ \\
$\0\0\2\1\2\1\1\1\0\3\1\1$ &
$\0\0\2\1\2\1\1\1\3\1\1\1$ &
$\0\0\0\3\3\2\1\1\1\1\1\1$ &
$\0\0\0\3\1\1\1\1\0\2\0\0$ &
$\0\0\0\3\1\1\1\1\0\2\0\3$ &
$\0\0\0\3\1\1\1\1\0\2\3\1$ &
$\0\0\0\3\1\1\1\1\2\1\0\0$ \\
$\0\0\0\3\1\1\1\1\2\1\0\3$ &
$\0\0\0\3\1\1\1\1\2\1\3\1$ &
$\0\2\0\3\1\1\1\1\1\1\0\0$ &
$\0\2\0\3\1\1\1\1\1\1\0\3$ &
$\0\2\0\3\1\1\1\1\1\1\3\1$ &
$\0\0\3\2\0\0\1\1\0\2\1\1$ &
$\0\0\3\2\0\0\1\1\2\1\1\1$ \\
$\0\0\1\2\0\0\1\1\3\2\1\1$ &
$\0\0\3\2\2\0\1\1\1\1\1\1$ &
$\0\0\3\2\1\2\1\1\1\1\1\1$ &
$\0\2\3\2\1\1\1\1\1\1\1\1$ &
$\0\2\3\1\0\0\0\2\0\0\0\0$ &
$\0\2\1\1\0\0\0\2\0\3\0\0$ &
$\0\0\1\1\0\0\2\0\3\2\0\0$ \\
$\0\2\1\1\0\0\0\2\3\1\0\0$ &
$\0\2\3\1\0\0\2\1\0\0\0\0$ &
$\0\2\1\1\0\0\2\1\0\3\0\0$ &
$\0\2\1\1\0\0\2\1\3\1\0\0$ &
$\0\0\1\1\0\0\3\2\0\2\0\0$ &
$\0\0\1\1\0\0\3\2\2\1\0\0$ &
$\0\0\1\1\0\0\1\2\3\2\0\0$ \\
$\2\0\1\1\0\0\3\2\1\1\0\0$ &
$\0\0\3\1\0\2\1\1\0\0\0\0$ &
$\0\0\3\1\0\2\1\1\0\0\0\3$ &
$\0\0\3\1\0\2\1\1\0\0\3\1$ &
$\0\0\1\1\0\2\1\1\0\3\0\0$ &
$\0\0\1\1\0\2\1\1\0\3\0\3$ &
$\0\0\1\1\0\2\1\1\0\3\3\1$ \\
$\0\0\1\1\0\0\1\1\3\2\0\0$ &
$\0\0\1\1\0\0\1\1\3\2\0\3$ &
$\0\0\1\1\0\0\1\1\3\2\3\1$ &
$\0\0\1\1\0\2\1\1\3\1\0\0$ &
$\0\0\1\1\0\2\1\1\3\1\0\3$ &
$\0\0\1\1\0\2\1\1\3\1\3\1$ &
$\0\0\3\1\2\1\1\1\0\0\0\0$ \\
$\0\0\3\1\2\1\1\1\0\0\0\3$ &
$\0\0\3\1\2\1\1\1\0\0\3\1$ &
$\0\0\1\1\2\1\1\1\0\3\0\0$ &
$\0\0\1\1\2\1\1\1\0\3\0\3$ &
$\0\0\1\1\2\1\1\1\0\3\3\1$ &
$\0\0\1\1\2\1\1\1\3\1\0\0$ &
$\0\0\1\1\2\1\1\1\3\1\0\3$ \\
$\0\0\1\1\2\1\1\1\3\1\3\1$ &
$\0\0\3\1\3\2\1\1\1\1\1\1$ &
$\0\0\3\1\1\1\1\1\0\2\0\0$ &
$\0\0\3\1\1\1\1\1\0\2\0\3$ &
$\0\0\3\1\1\1\1\1\0\2\3\1$ &
$\0\0\3\1\1\1\1\1\2\1\0\0$ &
$\0\0\3\1\1\1\1\1\2\1\0\3$ \\
$\0\0\3\1\1\1\1\1\2\1\3\1$ &
$\0\2\3\1\1\1\1\1\1\1\0\0$ &
$\0\2\3\1\1\1\1\1\1\1\0\3$ &
$\0\2\3\1\1\1\1\1\1\1\3\1$ &
$\2\1\0\0\0\0\0\0\0\0\0\0$ &
$\2\1\0\0\0\0\0\0\0\0\0\3$ &
$\2\1\0\0\0\0\0\0\0\0\3\0$ \\
$\2\1\0\0\0\0\0\0\0\0\3\3$ &
$\0\3\0\0\0\2\0\3\0\0\0\0$ &
$\0\3\0\0\0\2\3\1\0\0\0\0$ &
$\2\1\0\0\0\3\0\0\0\0\0\0$ &
$\2\1\0\0\0\3\0\0\0\0\0\3$ &
$\2\1\0\0\0\3\0\0\0\0\3\1$ &
$\0\3\0\0\2\1\0\3\0\0\0\0$ \\
$\0\3\0\0\2\1\3\1\0\0\0\0$ &
$\2\1\0\0\3\1\0\0\0\0\0\0$ &
$\2\1\0\0\3\1\0\0\0\0\0\3$ &
$\2\1\0\0\3\1\0\0\0\0\3\1$ &
$\2\1\0\0\1\1\0\3\0\0\0\0$ &
$\2\1\0\0\1\1\0\3\0\0\0\3$ &
$\2\1\0\0\1\1\0\3\0\0\3\1$ \\
$\2\1\0\0\1\1\3\1\0\0\0\0$ &
$\2\1\0\0\1\1\3\1\0\0\0\3$ &
$\2\1\0\0\1\1\3\1\0\0\3\1$ &
$\2\1\0\0\1\1\1\1\0\3\0\0$ &
$\2\1\0\0\1\1\1\1\0\3\0\3$ &
$\2\1\0\0\1\1\1\1\0\3\3\1$ &
$\2\1\0\0\1\1\1\1\3\1\0\0$ \\
$\2\1\0\0\1\1\1\1\3\1\0\3$ &
$\2\1\0\0\1\1\1\1\3\1\3\1$ &
$\2\1\0\3\0\0\0\2\0\0\0\0$ &
$\2\1\0\3\0\0\2\1\0\0\0\0$ &
$\0\3\0\3\0\2\1\1\0\0\0\0$ &
$\0\3\0\3\2\1\1\1\0\0\0\0$ &
$\0\3\0\3\1\1\1\1\0\2\0\0$ \\
$\0\3\0\3\1\1\1\1\2\1\0\0$ &
$\2\1\0\3\1\1\1\1\1\1\0\0$ &
$\2\1\0\3\1\1\1\1\1\1\0\3$ &
$\2\1\0\3\1\1\1\1\1\1\3\1$ &
$\2\1\3\2\1\1\1\1\1\1\1\1$ &
$\2\1\3\1\0\0\0\2\0\0\0\0$ &
$\2\1\1\1\0\0\0\2\0\3\0\0$ \\
$\0\3\1\1\0\0\0\0\3\2\0\0$ &
$\2\1\1\1\0\0\0\2\3\1\0\0$ &
$\2\1\3\1\0\0\2\1\0\0\0\0$ &
$\2\1\1\1\0\0\2\1\0\3\0\0$ &
$\2\1\1\1\0\0\2\1\3\1\0\0$ &
$\0\3\3\1\0\2\1\1\0\0\0\0$ &
$\0\3\1\1\0\2\1\1\0\3\0\0$ \\
$\0\3\1\1\0\2\1\1\3\1\0\0$ &
$\0\3\3\1\2\1\1\1\0\0\0\0$ &
$\0\3\1\1\2\1\1\1\0\3\0\0$ &
$\0\3\1\1\2\1\1\1\3\1\0\0$ &
$\0\3\3\1\1\1\1\1\0\2\0\0$ &
$\0\3\3\1\1\1\1\1\2\1\0\0$ &
$\2\1\3\1\1\1\1\1\1\1\0\0$ \\
$\2\1\3\1\1\1\1\1\1\1\0\3$ &
$\2\1\3\1\1\1\1\1\1\1\3\1$ &
$\3\2\0\0\0\0\0\0\0\0\2\0$ &
$\3\2\0\0\0\0\0\0\0\0\1\0$ &
$\3\2\0\0\0\0\0\0\0\0\1\3$ &
$\3\2\0\0\0\3\0\0\0\0\1\1$ &
$\1\2\0\0\3\2\0\0\0\0\1\1$ \\
$\3\2\0\0\3\1\0\0\0\0\1\1$ &
$\3\2\0\3\0\0\0\0\0\0\0\0$ &
$\3\2\3\1\0\0\0\0\0\0\0\0$ &
$\3\2\1\1\0\0\0\0\3\0\0\0$ &
$\3\2\1\1\0\0\0\0\1\3\0\0$ &
$\3\2\1\1\0\0\0\3\1\1\0\0$ &
$\1\2\1\1\0\0\3\2\1\1\0\0$ \\
$\3\2\1\1\0\0\3\1\1\1\0\0$ &
$\1\1\0\0\2\0\0\0\0\0\0\0$ &
$\1\1\0\0\2\0\0\0\0\0\0\3$ &
$\1\1\0\0\2\0\0\0\0\0\3\0$ &
$\1\1\0\0\2\0\0\0\0\0\3\3$ &
$\3\1\0\0\0\2\0\3\0\0\0\0$ &
$\3\1\0\0\0\2\3\1\0\0\0\0$ \\
$\1\1\0\0\2\3\0\0\0\0\0\0$ &
$\1\1\0\0\2\3\0\0\0\0\0\3$ &
$\1\1\0\0\2\3\0\0\0\0\3\1$ &
$\3\1\0\0\2\1\0\3\0\0\0\0$ &
$\3\1\0\0\2\1\3\1\0\0\0\0$ &
$\1\1\0\0\1\2\0\0\0\0\0\0$ &
$\1\1\0\0\1\2\0\0\0\0\0\3$ \\
$\1\1\0\0\1\2\0\0\0\0\3\0$ &
$\1\1\0\0\1\2\0\0\0\0\3\3$ &
$\1\1\0\0\1\1\0\0\0\0\0\0$ &
$\1\1\0\0\1\1\0\0\0\0\0\3$ &
$\1\1\0\0\1\1\0\0\0\0\3\0$ &
$\1\1\0\0\1\1\0\0\0\0\3\3$ &
$\1\1\0\0\1\1\0\0\0\3\0\0$ \\
$\1\1\0\0\1\1\0\0\0\3\0\3$ &
$\1\1\0\0\1\1\0\0\0\3\3\0$ &
$\1\1\0\0\1\1\0\0\0\3\3\3$ &
$\1\1\0\0\1\1\0\0\3\1\0\0$ &
$\1\1\0\0\1\1\0\0\3\1\0\3$ &
$\1\1\0\0\1\1\0\0\3\1\3\0$ &
$\1\1\0\0\1\1\0\0\3\1\3\3$ \\
$\1\1\0\0\1\1\0\3\0\2\0\0$ &
$\1\1\0\0\1\1\0\3\0\2\0\3$ &
$\1\1\0\0\1\1\0\3\0\2\3\1$ &
$\1\1\0\0\1\1\0\3\2\1\0\0$ &
$\1\1\0\0\1\1\0\3\2\1\0\3$ &
$\1\1\0\0\1\1\0\3\2\1\3\1$ &
$\1\1\0\2\1\1\0\3\1\1\0\0$ \\
$\1\1\0\2\1\1\0\3\1\1\0\3$ &
$\1\1\0\2\1\1\0\3\1\1\3\1$ &
$\1\1\0\2\1\1\3\2\1\1\1\1$ &
$\1\1\0\0\1\1\3\1\0\2\0\0$ &
$\1\1\0\0\1\1\3\1\0\2\0\3$ &
$\1\1\0\0\1\1\3\1\0\2\3\1$ &
$\1\1\0\0\1\1\3\1\2\1\0\0$ \\
$\1\1\0\0\1\1\3\1\2\1\0\3$ &
$\1\1\0\0\1\1\3\1\2\1\3\1$ &
$\1\1\0\2\1\1\3\1\1\1\0\0$ &
$\1\1\0\2\1\1\3\1\1\1\0\3$ &
$\1\1\0\2\1\1\3\1\1\1\3\1$ &
$\1\1\0\3\0\2\0\2\0\0\0\0$ &
$\1\1\0\3\0\2\2\1\0\0\0\0$ \\
$\3\1\0\3\0\2\1\1\0\0\0\0$ &
$\1\1\0\3\2\1\0\2\0\0\0\0$ &
$\1\1\0\3\2\1\2\1\0\0\0\0$ &
$\3\1\0\3\2\1\1\1\0\0\0\0$ &
$\1\1\0\3\1\1\0\2\0\2\0\0$ &
$\1\1\0\3\1\1\0\2\2\1\0\0$ &
$\1\1\2\3\1\1\0\0\1\1\0\0$ \\
$\1\1\2\3\1\1\0\0\1\1\0\3$ &
$\1\1\2\3\1\1\0\0\1\1\3\1$ &
$\1\1\0\3\1\1\2\1\0\2\0\0$ &
$\1\1\0\3\1\1\2\1\2\1\0\0$ &
$\1\1\2\1\1\1\0\3\1\1\0\0$ &
$\1\1\2\1\1\1\0\3\1\1\0\3$ &
$\1\1\2\1\1\1\0\3\1\1\3\1$ \\
$\1\1\2\1\1\1\3\2\1\1\1\1$ &
$\3\1\0\3\1\1\1\1\0\2\0\0$ &
$\3\1\0\3\1\1\1\1\2\1\0\0$ &
$\1\1\2\1\1\1\3\1\1\1\0\0$ &
$\1\1\2\1\1\1\3\1\1\1\0\3$ &
$\1\1\2\1\1\1\3\1\1\1\3\1$ &
$\1\1\3\0\1\1\0\0\1\1\0\0$ \\
$\1\1\3\0\1\1\0\0\1\1\0\3$ &
$\1\1\3\0\1\1\0\0\1\1\3\0$ &
$\1\1\3\0\1\1\0\0\1\1\3\3$ &
$\1\1\3\2\1\1\3\0\1\1\1\1$ &
$\1\1\3\2\1\1\1\3\1\1\1\1$ &
$\1\1\3\1\0\2\0\2\0\0\0\0$ &
$\1\1\1\1\0\2\0\2\0\3\0\0$ \\
$\3\1\1\1\0\0\0\0\3\2\0\0$ &
$\1\1\1\1\0\2\0\2\3\1\0\0$ &
$\1\1\3\1\0\2\2\1\0\0\0\0$ &
$\1\1\1\1\0\2\2\1\0\3\0\0$ &
$\1\1\1\1\0\2\2\1\3\1\0\0$ &
$\1\1\1\1\0\2\3\2\1\1\0\0$ &
$\3\1\3\1\0\2\1\1\0\0\0\0$ \\
$\3\1\1\1\0\2\1\1\0\3\0\0$ &
$\3\1\1\1\0\2\1\1\3\1\0\0$ &
$\1\1\3\1\2\1\0\2\0\0\0\0$ &
$\1\1\1\1\2\1\0\2\0\3\0\0$ &
$\1\1\1\1\2\1\0\2\3\1\0\0$ &
$\1\1\3\1\2\1\2\1\0\0\0\0$ &
\end{tabular}
}
\label{table-rm-J}
\end{table}

\newpage

\begin{figure}[h]
  \includegraphics[scale=0.11]{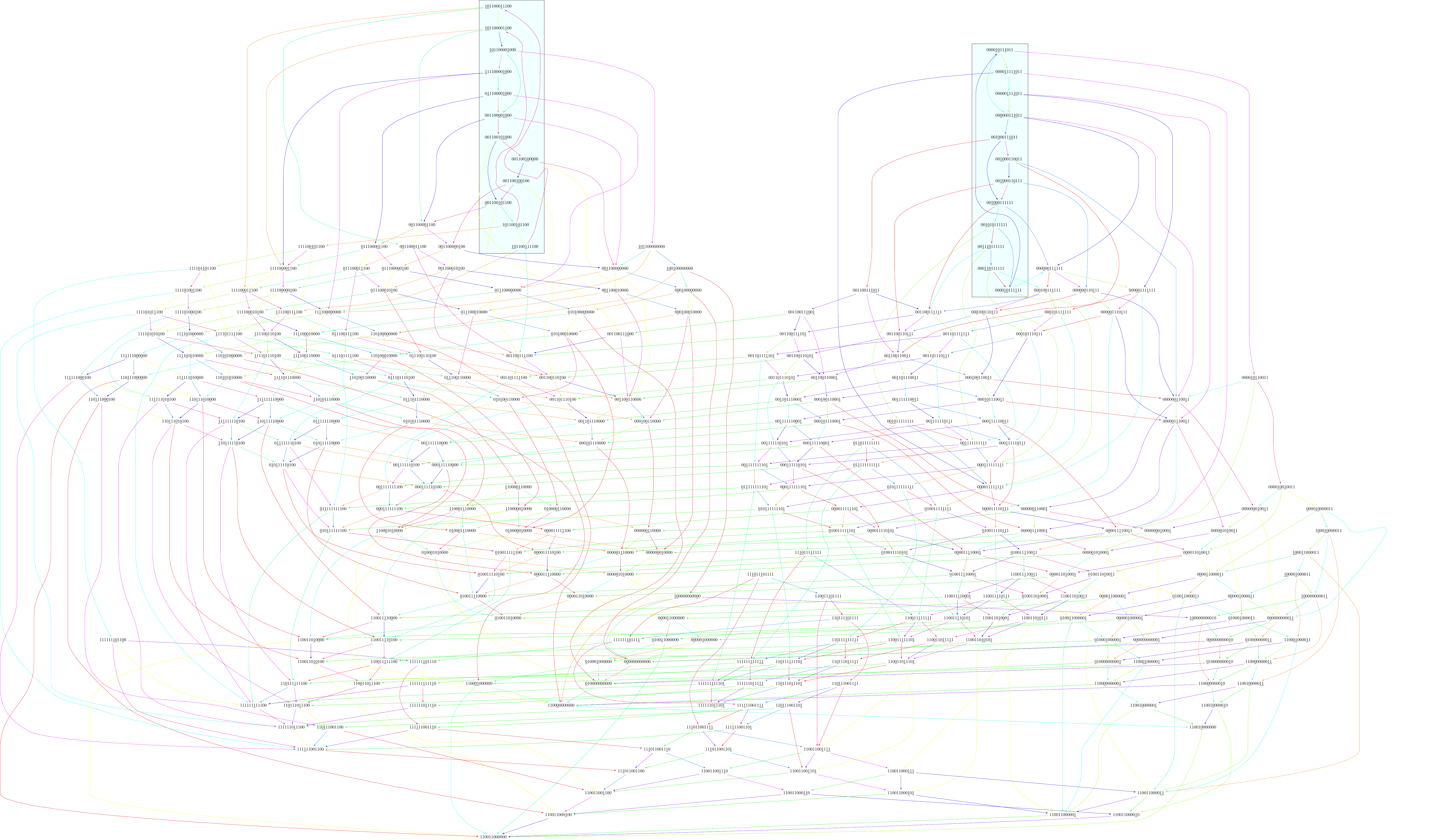}%
  \caption{The USO corresponding to the representation map $r_2$ of
    $C_H$ described in Table~\ref{table-rm-J}.
    This map has two NSCCs of size 12 (in blue).}
  \label{fig-rm-ce-J}
\end{figure}

\begin{figure}[h]
  \includegraphics[scale=0.13]{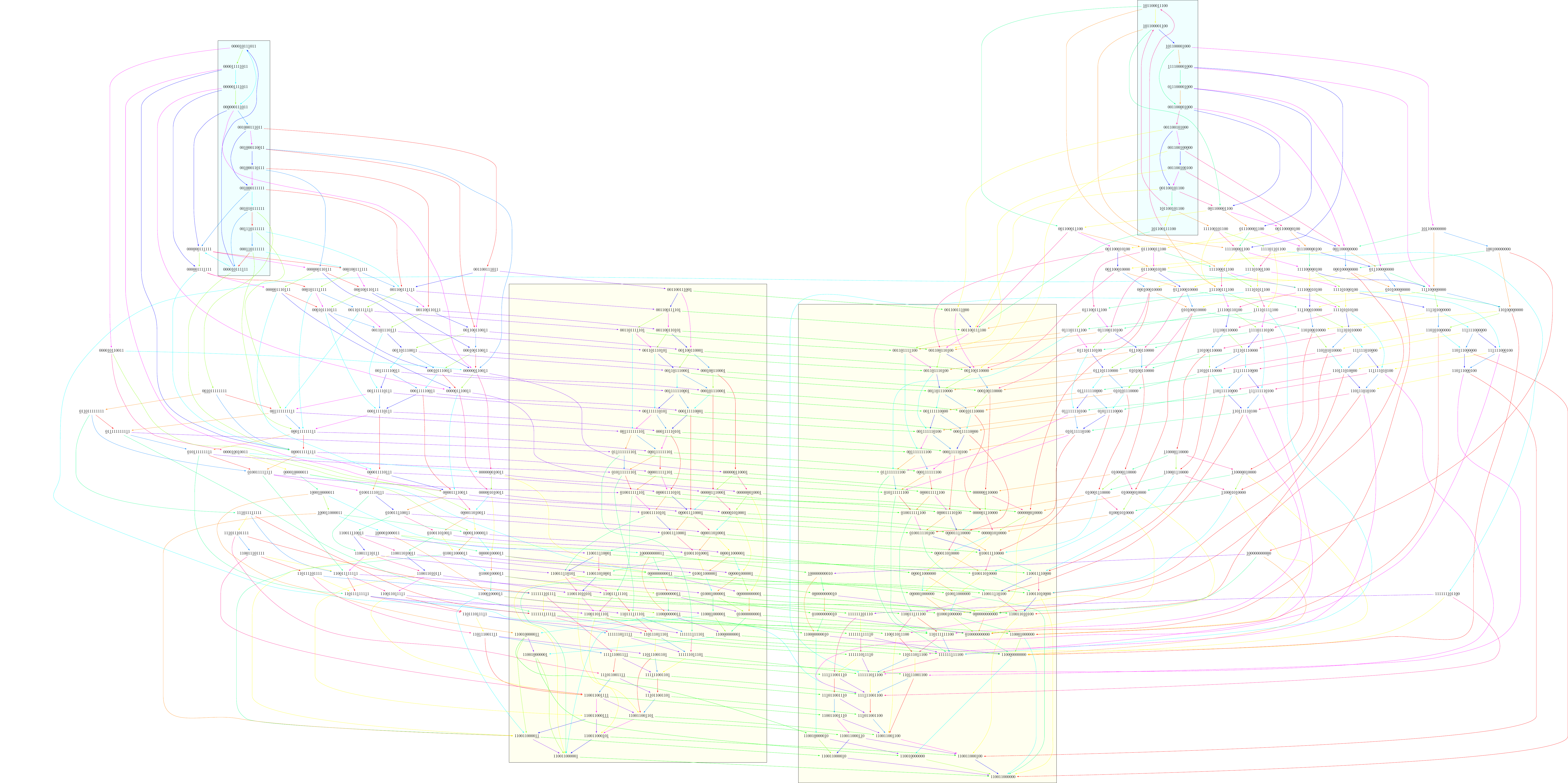}%
  \caption{Another representation of the USO corresponding to $r_2$
    illustrating the top recursion (on dimension 12) of the algorithm
    from Theorem~\ref{th-maximum}: $0C^x$ and $1C^x$ are represented
    by yellow boxes.}%
  \label{fig-rm-ce-CD}
\end{figure}

\end{document}